\documentclass{IEEEojcsys}

\jvol{00}
\jnum{00}
\paper{0000000}
\pubyear{2026}
\receiveddate{}
\reviseddate{}
\makeatletter
\def\@revised{}
\def\@editor{}
\def\receivedfont{\fontsize{1}{1}\selectfont\color{white}}
\makeatother
    
    \usepackage{jhoagg}
    \usepackage{mathtools}
    \usepackage{amsfonts}
    \usepackage{amssymb,latexsym}
    \usepackage[psamsfonts]{eucal}
    \usepackage{amsthm}
    \usepackage{graphicx}
    \let\labelindent\relax
    \usepackage[inline]{enumitem}
    \usepackage{indentfirst}
    \usepackage{setspace}
    \usepackage{microtype}
    \usepackage{threeparttable}
    \usepackage{float}
    \usepackage{hyperref}
    \usepackage{cleveref}
    \usepackage{afterpage}
    \usepackage{placeins}
    \usepackage{cancel}
    \usepackage{cite}
    \usepackage{leftidx}
    \usepackage{breqn}
    \usepackage[export]{adjustbox}
    \usepackage{comment}
    \usepackage[linesnumbered,ruled,vlined]{algorithm2e} 
    \usepackage[dvipsnames]{xcolor}
    \usepackage{soul}
    \usepackage{siunitx}
    \usepackage{arydshln}

    \usepackage{tikz}
    
    \let\originalleft\left
    \let\originalright\right
    \renewcommand{\left}{\mathopen{}\mathclose\bgroup\originalleft}
    \renewcommand{\right}{\aftergroup\egroup\originalright}

    \newcounter{thm} 
    \newtheorem{theorem}[thm]{\indent Theorem}
    
    \newtheorem{assumption}{\indent Assumption}
    
    \newtheorem{proposition}{\indent Proposition}
    
    \newtheorem{lemma}{\indent Lemma}
    
    \newtheorem{corollary}{\indent Corollary}
    
    \newtheorem{definition}{\indent Definition}

    \newtheorem{problem}{\indent Problem}
    \newtheorem{example}{\indent Example}

    \newtheorem{Simulation}{Simulation}

    \newtheorem{fact}{\indent Fact}
    
    \newtheorem{conjecture}{\indent Conjecture}
    
    \newtheorem{experiment}{\indent Experiment}
    
    \newtheorem{remark}{\indent Remark}
    \allowdisplaybreaks

    \renewcommand{\theenumi}{{\it (\alph{enumi})}}
    \renewcommand{\labelenumi}{\theenumi}

    		\newcommand\xqed[1]{%
      \leavevmode\unskip\penalty9999 \hbox{}\nobreak\hfill
      \quad\hbox{#1}}
    \newcommand\exampletriangle{\xqed{$\blacktriangle$}}

    \usepackage{accents}
    
    \newlength\figureheight 
    \newlength\figurewidth

    \allowdisplaybreaks
    
    \graphicspath{ {Figures/} }
    \DeclareGraphicsExtensions{.pdf,.png,.jpg}
    
    \DeclareMathAlphabet{\mathcal}{OMS}{cmsy}{m}{n} % Use the standard calligraphy font

    \crefname{equation}{}{}
    \SetKwInput{KwInit}{Initialize}
    \SetKwFunction{SafeSet}{SafeSet}
    \SetKwFunction{GetSafe}{GetSafe}
    \SetKwFunction{GetUnsafe}{GetUnsafe}
    \SetCommentSty{mycommfont}

\begin{document}

\title{Predicted-Flow Control Barrier Functions for Real-Time Safe Optimal Control}

\author{AMIRSAEID SAFARI\affilmark{1}}
\author{JESSE B. HOAGG\affilmark{1} (Senior Member, IEEE)}

\affil{Department of Mechanical and Aerospace Engineering, University of Kentucky, Lexington, KY 40506 USA}

\corresp{CORRESPONDING AUTHOR: A. Safari (e-mail: \href{mailto:amirsaeid.safari@uky.edu}{amirsaeid.safari@uky.edu})}
\authornote{This work is supported in part by the National Science Foundation (2450718), USDA NIFA (2024-69014-42393), and Air Force Office of Scientific Research (FA9550-20-1-0028). \\ The source code and visual summary supporting this article are publicly available at \url{https://amirsaeid254.github.io/FlowBarrier/}.}

\markboth{Predicted-Flow Control Barrier Functions for Real-Time Safe Optimal Control}{A. SAFARI AND J. B. HOAGG}

\begin{abstract}
Control barrier functions (CBFs) provide real-time safety guarantees through conditions on the state enforced pointwise in time.
However, synthesizing a valid CBF is difficult and controllers obtained from pointwise CBF-based optimization are typically myopic.
To address myopia, this article introduces predicted-flow control barrier functions (P-CBFs), which generalize the CBF concept from a function of the current state to a functional of a predicted flow under a parametrized control plan over a finite prediction horizon.
For safety, a P-CBF can certify that the predicted flow is in a safe set over the entire prediction horizon.
However, candidate P-CBFs suffer from the same challenge as candidate CBFs, namely, control constraints make it difficult to guarantee that the P-CBF is valid.
This article resolves the validity challenge by introducing a terminal candidate P-CBF requiring that the predicted flow end in a backup safe set at the terminal time, and a planning-time shift that modulates the prediction horizon, providing an additional degree of freedom to ensure feasibility.
Then, the real-time control and the evolution of the control-plan parameter and planning-time shift are determined jointly by a single convex optimization that is guaranteed to be feasible and renders the associated safe set forward invariant.
The resulting safe optimal flow control provides a safety certificate over the entire prediction horizon and unifies finite-horizon integral-cost optimization with safety certification.
This optimization reduces to a quadratic program (QP) if the control constraints are a convex polytope.
The QP implementation, termed \textit{FlowBarrier}, is validated on a nonholonomic ground robot navigating a dense environment.
FlowBarrier is compared to nonlinear model predictive control and two CBF-based safety filter methods across 100 trials, where FlowBarrier achieves the highest goal-reaching rate, zero safety violations, and the lowest computation time.
\end{abstract}
\begin{IEEEkeywords}
Optimal control, predictive control for nonlinear systems, constrained control, robotics.
\end{IEEEkeywords}

\maketitle

\section{Introduction}

Autonomous robots are of interest for real-time navigation in proximity to obstacles, humans, and other robots.
Applications include aerial inspection, warehouse logistics, and autonomous mobility \cite{ollero2022aerial,wurman2008coordinating,paden2016survey}.
These applications require that robots achieve performance objectives (e.g., way-point navigation, coordination, formation) while maintaining safety and respecting control input limits (e.g., actuator saturation). 
Safety can be formalized as forward invariance of a prescribed safe set $\SC_\rms \subseteq \BBR^n$ \cite{blanchini1999}.
Frameworks for enforcing forward invariance include Hamilton-Jacobi reachability analysis \cite{bansal2017hj,chen2018hj,herbert2021}, model predictive control \cite{mayne2000,rawlings2017,koller2018}, and barrier functions \cite{wieland2007,ames2016control,safari2024ACC,safari2024TSCT,safari2025safe,rabiee2024closed,compositionACC}.

Control barrier functions (CBFs) provide techniques for selecting controls that enforce forward invariance.
For a control-affine system, the CBF condition is affine in the control.
In this case, a quadratic program (QP) can be used to compute a minimum-intervention control that ensures forward invariance of the CBF's zero-superlevel set \cite{ames2016control}.
Thus, safety enforcement is achieved using a computationally efficient, real-time safety filter that can be implemented in a hierarchical architecture.
However, two limitations constrain applicability of this approach.

First, synthesizing a valid CBF on a large subset of $\SC_\rms$ is challenging due to control input constraints. 
Existing methods based on Hamilton-Jacobi reachability \cite{choi2021hj} and sum-of-squares programming \cite{wang2018} are limited to low-dimensional systems or restricted dynamics.

Second, even if a valid CBF is available, pointwise CBF-based optimization is myopic.
Specifically, the safety certificate is enforced pointwise in time and depends only on the current state, with no consideration of how the trajectory will evolve over a future time horizon.
As a result, the CBF does not account for conflicts between safety and performance that develop over the horizon---this can lead to aggressive corrections, conservative behavior, poor performance, and infeasibility or constraint violations in real-world implementation.

Backup-CBF methods have been developed to address the challenge of synthesizing a CBF subject to input limits \cite{gurriet2020,backupACC}.
Instead of relying on a valid CBF on $\SC_\rms$, backup-CBF methods specify a backup controller $u_\rmb$ together with a backup set $\SC_\rmb \subset \SC_\rms$ that is forward invariant under $u_\rmb$.
Then, the implicit control-forward-invariant subset of $\SC_\rms$ is defined as the set of states from which the trajectory under $u_\rmb$ is in $\SC_\rms$ over a finite horizon and ends in $\SC_\rmb$ at the terminal time.
This construction is conservative because the backup controller $u_\rmb$ that makes $\SC_\rmb$ forward invariant is also used to drive the state to $\SC_\rmb$ by the terminal time.
Thus, the implicit control-forward-invariant set is often small relative to the maximal control-forward-invariant subset of $\SC_\rms$.
Recent work has sought to mitigate this conservatism by delaying the switch from the nominal controller to $u_\rmb$ until the last time at which the backup trajectory reaches $\SC_\rmb$ without leaving  $\SC_\rms$ \cite{agrawal2025gatekeeper,singletary2022safe}, or by composing multiple backup sets and backup controllers to enlarge the implicit control-forward-invariant subset of $\SC_\rms$ \cite{backupautomatic}.
Nevertheless, these methods still rely on fixed backup controllers to drive the state toward $\SC_\rmb$, and thus, conservatism persists.
Another shortcoming of backup-CBF approaches is that the finite-horizon prediction is used only for safety certification, not for performance optimization.
Furthermore, the control is still determined by a pointwise condition on the current state, with no mechanism to optimize cost over the prediction horizon.

The myopic nature of pointwise CBF-based optimization has been addressed by incorporating finite-horizon prediction into the safety certificate. 
One approach is to compose a finite-horizon planner with a CBF-based safety filter in a layered architecture \cite{grandia2021layered,sforni2024receding}. 
The planner solves its own optimization with discrete CBF constraints to generate a reference, and the safety filter then enforces barrier constraints on the applied control through the CBF. 
This layered design introduces at least one additional optimization beyond the CBF and does not address the validity of the barrier function. 
Even if the safety filter is replaced by a backup CBF to recover validity, the planner and the backup controller each propagate the system dynamics on a finite horizon for different purposes---one for cost and the other for safety, and these propagations are decoupled with potentially conflicting goals. 
A second approach avoids the layered structure and addresses prediction and safety in a single optimization \cite{breeden2022}. 
Specifically, \cite{breeden2022} propagates a trajectory under a fixed nominal controller and encodes its future safety into a barrier condition that is affine in the current control; however, the certified safe set is determined by the nominal controller, and feasibility under input constraints is not guaranteed. 
The approach in \cite{vahs2024} parametrizes the control trajectory to address invariance in trajectory space; however, the resulting QP is not guaranteed to be feasible, and thus safety is not guaranteed. 
Motivated by embedding CBF constraints in a receding-horizon optimization \cite{zeng2021mpccbf}, several methods improve tractability through iterative linearization \cite{liu2025learning}, sampling-based trajectory optimization with a closed-form CBF filter \cite{rabiee2025guaranteed}, or quadratic approximation of the cost-to-go \cite{gutierrez2026receding}; however, the validity of the barrier function at each step is assumed rather than established, and the underlying pointwise CBF condition can become infeasible under input constraints.
In summary, existing methods that incorporate prediction into the safety certificate either decouple prediction from certification across separate optimizations, or rely on a single optimization whose feasibility under input constraints is not guaranteed.

This article presents a non-myopic real-time optimal control for simultaneous safety and performance with guaranteed feasibility in the presences of input constraints.
First, the article introduces predicted-flow control barrier functions (P-CBFs), which can be used to certify safety of a predicted flow over a finite prediction horizon.
The P-CBF generalizes the CBF concept from a function of the current state to a functional of the predicted flow under a parametrized control plan.
Specifically, the control plan is parametrized by a finite-dimensional variable $\theta$ over a prediction horizon $T$, and the dynamics are propagated over this horizon under the control plan to obtain the predicted flow $\varphi$.
A logical candidate P-CBF is the minimum of $h_\rms(\varphi)$ over the prediction horizon, where the safe set $\SC_\rms$ is  the zero-superlevel set of $h_\rms$, which is not assumed to be a valid CBF. 
However, this candidate P-CBF suffers from the same challenge as standard candidate CBFs---namely, control constraints make it difficult to guarantee and verify that the set of controls satisfying the P-CBF condition is nonempty.
A natural remedy is to add a terminal candidate P-CBF that requires the predicted flow to end in a backup safe set $\SC_\rmb$ at the terminal time. 
Still, it remains difficult to guarantee that this candidate P-CBF pair is a valid P-CBF pair under input constraints.

\begin{figure}[t]
\centering
\includegraphics[width=0.9\columnwidth, clip=true, trim=3.85in 0.8in 2.90in 2.50in]{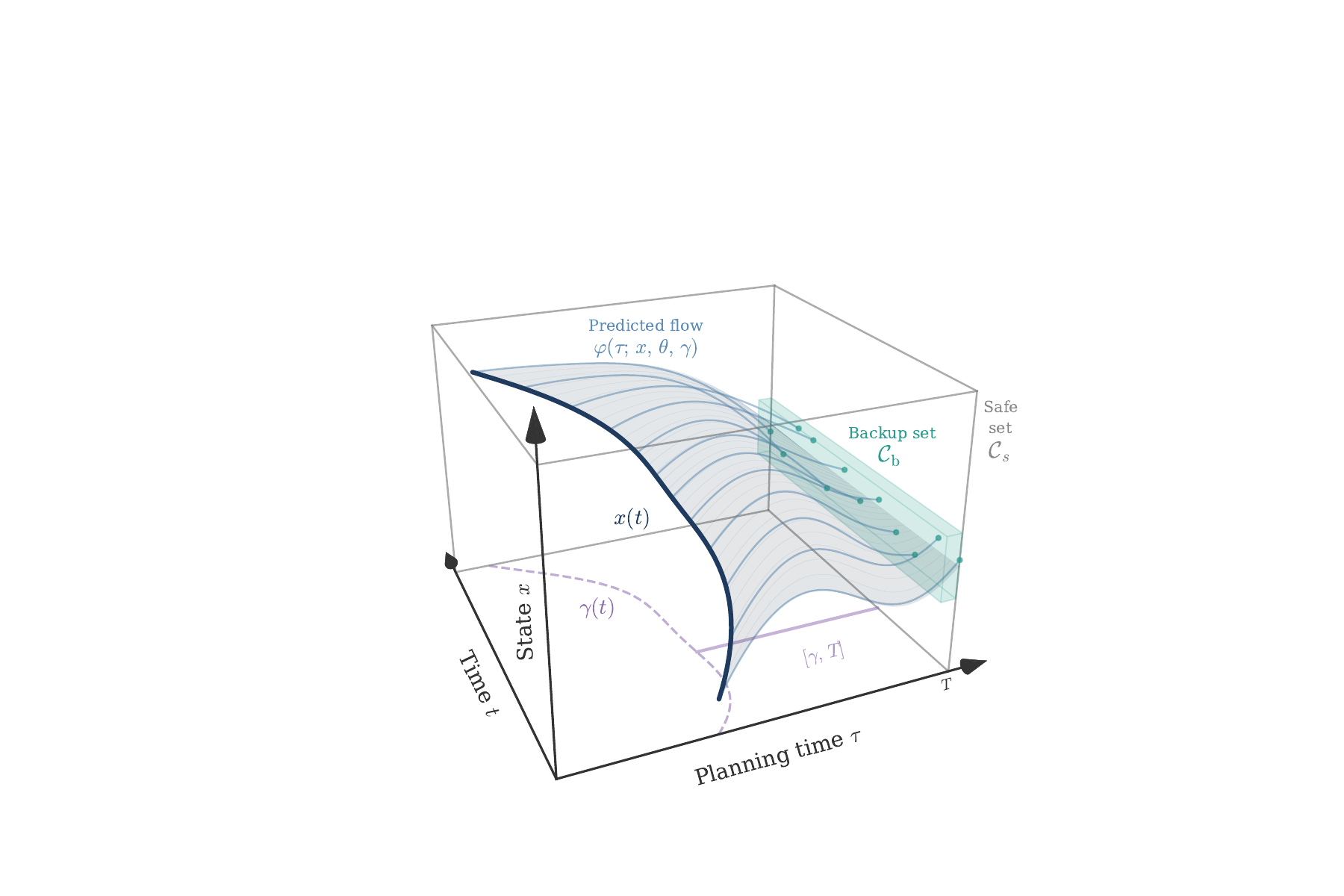}
\caption{The schematic illustrates the safe set $\SC_\rms$, the backup set $\SC_\rmb$, the closed-loop trajectory $x(t)$, and the predicted flows $\varphi(\tau; x, \theta, \gamma)$ along the planning time $\tau$. At each time $t$, the predicted flow propagates over the prediction window $[\gamma(t), T]$, remains within $\SC_\rms$, and terminates in $\SC_\rmb$ at $\tau = T$. The planning-time shift $\gamma(t)$ modulates the prediction window, providing the degree of freedom that guarantees feasibility.}
\label{fig:flowbarrier_overview}
\end{figure}

To address validity/feasibility, this article introduces a scalar planning-time shift $\gamma$ that modulates the prediction window, providing an additional degree of freedom in the optimization (see \Cref{fig:flowbarrier_overview}). 
The planning-time shift $\gamma$ and control-plan parameter $\theta$ are treated as dynamic states and combined with the system state $x$ to form an augmented state $(x,\theta,\gamma)$. 
The applied control $u$, update of $\theta$, and evolution of $\gamma$ are then determined jointly as the solution of a single convex optimization that is guaranteed to be feasible at every time and whose solution makes the associated safe set forward invariant. 
The resulting safe optimal flow control provides a safety certificate over the entire prediction horizon and unifies cost optimization with safety certification in a single convex optimization whose feasibility under input constraints is guaranteed.
If control constraints are a convex polytope, then the optimization reduces to a QP, named \textit{FlowBarrier}.

This article is organized as follows. 
\Cref{sec:directional_derivatives} reviews directional derivatives, and \Cref{sec:dcbf} introduces directional CBFs, which extends the idea CBFs to functions that are directionally differentiable but not necessarily continuously differentiable.
These directional CBFs are used in \Cref{sec:pcbf} to introduce P-CBFs and show that P-CBFs can be used to obtain forward invariance in trajectory space. 
\Cref{sec:problem} formulates the safe optimal control problem addressed in this article.
Then, \Cref{sec:control_forward_invariant} presents the safe optimal flow control solution with guaranteed feasibility, and 
\Cref{sec:impl_remark} presents the QP implementation, named FlowBarrier.
\Cref{sec:application} applies FlowBarrier to a nonholonomic ground robot and compares the algorithm to nonlinear model predictive control and two CBF-based safety filters paired with an iterative linear-quadratic regulator planner across 100 trials, where FlowBarrier achieves the highest goal-reaching rate, zero safety violations, and the lowest computation time.
The source code for FlowBarrier and all comparison methods is publicly available in \texttt{CBFJAX}, an open-source library developed alongside this work that provides automatic differentiation, just-in-time compilation, and a unified benchmarking environment for safe optimal control methods.

\section{Directional Derivatives} 
\label{sec:directional_derivatives}

Let $\SD \subseteq \BBR^n$, and let $\mu \colon \SD \to \BBR$ be continuous.
The \textit{radial cone} $R_\SD \colon \SD \rightrightarrows \BBR^n$ is defined by
\begin{align*}
R_\SD(x) &\triangleq \{ \nu \in \BBR^n \colon \exists \,  \varepsilon > 0 \text{ such that } \\
&\qquad \forall\, s \in (0,\varepsilon),\, x + s\nu \in \SD \}.
\end{align*}
Note that if $x \in \operatorname{int}\SD$, then $R_\SD(x) = \BBR^n$.
The function $\mu$ is \textit{right-side directionally differentiable on $\SD$} if for all $(x,\nu) \in \SD \times R_\SD(x)$,
\begin{equation}\label{eq:rh_directional_derivative}
D_{\nu}  \mu(x) \triangleq \lim_{s \downarrow 0} \frac{\mu(x + s \nu) - \mu(x)}{s}
\end{equation}
exists.
If $\mu$ is differentiable on $\SD$, then $D_{\nu} \mu(x) = L_{\nu} \mu(x)$, where $L_\nu \mu(x) \triangleq \mu^\prime(x) \nu$ is the Lie derivative of $\mu$ along $\nu$.
This article is not concerned with functions that are left-side directionally differentiable (i.e., $\lim_{s \uparrow 0}$) but not differentiable.
Thus, for brevity, we omit ``right-side'' for the remainder of the article.

The next result concerns the time derivative of $\mu(y(t))$ from the right side.

\begin{lemma}\label{lem:dini_directional}
\rm
Assume $\mu$ is locally Lipschitz and directionally differentiable on $\SD$.
Let $y \colon [0,\infty) \to \SD$ be differentiable such that for all $t \geq 0$, $\dot{y}(t) \in R_\SD(y(t))$.
Then, for all $t \geq 0$,
\begin{equation*}
\frac{\rmd^+}{\rmd t} \mu(y(t)) \triangleq \lim_{s \downarrow 0} \frac{\mu(y(t+s)) - \mu(y(t))}{s}
\end{equation*}
exists, and
\begin{equation}\label{eq:dini_equals_directional}
\frac{\rmd^+}{\rmd t} \mu(y(t)) = D_{\dot{y}(t)} \mu(y(t)).
\end{equation}
\end{lemma}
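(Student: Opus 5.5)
The idea is to compare $\mu(y(t+s))$ with $\mu$ evaluated along the ray $y(t)+s\dot y(t)$, and to use the Lipschitz property to control the error between them. Fix $t \ge 0$ and set $x \triangleq y(t)$ and $\nu \triangleq \dot y(t)$. By hypothesis $\nu \in R_\SD(x)$, so there is $\varepsilon>0$ with $x+s\nu \in \SD$ for all $s\in(0,\varepsilon)$. Because $\mu$ is directionally differentiable on $\SD$, the limit $D_\nu\mu(x)$ in \eqref{eq:rh_directional_derivative} exists. For $s\in(0,\varepsilon)$ we split the difference quotient as
\begin{equation*}
\frac{\mu(y(t+s))-\mu(x)}{s} = \frac{\mu(x+s\nu)-\mu(x)}{s} + \frac{\mu(y(t+s))-\mu(x+s\nu)}{s}.
\end{equation*}
The first term converges to $D_\nu\mu(x)$ as $s\downarrow 0$. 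It therefore suffices to show that the second term tends to $0$.

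For the remainder we use local Lipschitz continuity. Choose a neighborhood $\SU$ of $x$ and a constant $L>0$ with $|\mu(a)-\mu(b)|\le L\|a-b\|$ for all $a,b\in\SU\cap\SD$. As $s\downarrow 0$, both $y(t+s)\in\SD$ and $x+s\nu\in\SD$ converge to $x$. Hence, for all sufficiently small $s$, both points lie in $\SU\cap\SD$, and
\begin{equation*}
\left|\frac{\mu(y(t+s))-\mu(x+s\nu)}{s}\right| \le L\left\|\frac{y(t+s)-y(t)}{s}-\dot y(t)\right\|.
\end{equation*}
The right-hand side tends to $0$ by differentiability of $y$ at $t$ (one-sided at $t=0$). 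Combining the two terms shows that $\frac{\rmd^+}{\rmd t}\mu(y(t))$ exists and equals $D_{\dot y(t)}\mu(y(t))$, which is \eqref{eq:dini_equals_directional}.

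The main obstacle is the Lipschitz step, specifically what ``locally Lipschitz on $\SD$'' means. The Lipschitz estimate must apply to two points that both lie in $\SD$ but are not necessarily joined by a segment in $\SD$. With the standard definition (every point of $\SD$ has a relative neighborhood on which a Lipschitz bound holds), this is immediate. The radial-cone hypothesis is needed only so that the comparison points $x+s\nu$ lie in $\SD$, where $\mu$ is defined.
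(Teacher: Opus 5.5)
Your proof is correct and follows essentially the same route as the paper. Both split the difference quotient into the ray quotient $\frac{\mu(y(t)+s\dot y(t))-\mu(y(t))}{s}$, which converges to $D_{\dot y(t)}\mu(y(t))$, plus a remainder that the local Lipschitz bound controls by $L\|o(s)\|/s \to 0$. Your remarks that the radial-cone hypothesis is what makes $\mu(y(t)+s\dot y(t))$ defined, and that both comparison points must eventually lie in a common Lipschitz neighborhood, make explicit details the paper's proof leaves implicit.
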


\begin{proof}
\indent Since $y$ is differentiable, it follows from the Taylor expansion that there exists $o \colon [0,\infty) \to \BBR^n$ such that $y(t+s) = y(t) + s \dot{y}(t) + o(s)$ and $\lim_{s \downarrow 0} \|o(s)\|/s = 0$.
Note that
\begin{equation}\label{eq:quotient_split}
\frac{\mu(y(t+s)) - \mu(y(t))}{s} = \frac{\mu(y(t) + s\dot{y}(t)) - \mu(y(t))}{s} + \eta(s),
\end{equation}
where
\begin{equation*}
\eta(s) \triangleq \frac{\mu(y(t) + s\dot{y}(t) + o(s)) - \mu(y(t) + s\dot{y}(t))}{s}.
\end{equation*}
Since $\mu$ is locally Lipschitz, there exists $M > 0$ such that $|\eta(s)| \leq M\|o(s)\|/s$, which implies that $\lim_{s \downarrow 0} \eta(s) = 0$.
Thus, taking the limit of \eqref{eq:quotient_split} and using \eqref{eq:rh_directional_derivative} yields
\begin{align*}
\frac{\rmd^+}{\rmd t} \mu(y(t))
&= \lim_{s \downarrow 0} \left [ \frac{\mu(y(t) + s\dot{y}(t)) - \mu(y(t))}{s} + \eta(s) \right ]\\
&= D_{\dot{y}(t)} \mu(y(t)),
\end{align*}
which confirms that $\frac{\rmd^+}{\rmd t} \mu(y(t))$ exists and is given by \eqref{eq:dini_equals_directional}.
\end{proof}

\section{Directional Control Barrier Functions}\label{sec:dcbf}

Consider
\begin{equation}\label{eq:affine_control}
    \dot{x}(t) = f(x(t)) + g(x(t)) u(t),
\end{equation}
where $f\colon \BBR^n \to \BBR^n$ and $g\colon \BBR^n \to \BBR^{n \times m}$
are continuously differentiable on $\BBR^n$,
$x(t) \in \BBR^n$ is the state,
$x(0) = x_0 \in \BBR^n$ is the initial condition,
and $u(t) \in \SU \subseteq \BBR^m$ is the control.
The control $u$ is an \textit{admissible control} if for all $t \ge 0$, $u(t) \in \SU$.
Each solution to \eqref{eq:affine_control} that appears in this article is assumed to exist and be unique on $[0,\infty)$.
For notational convenience, we define
\begin{equation*}
F(x,u) \triangleq f(x) + g(x)u,
\end{equation*}
which is the right-hand side of \eqref{eq:affine_control}.

Let $h \colon \BBR^n \to \BBR$ be continuous, and define the zero-superlevel set
\begin{equation*}
\SC \triangleq \{ x \in \BBR^n \colon h(x) \ge 0 \},
\end{equation*}
which is assumed to be nonempty and contain no isolated points.

Let $u_{\rm fi} \colon \SC \to \SU$.
Then, $\SC$ is \textit{forward invariant} with respect to \eqref{eq:affine_control} with $u=u_{\rm fi}$ if for all $x_0 \in \SC$, the solution to \eqref{eq:affine_control} with $u=u_{\rm fi}$ is such that for all $t \in [0,\infty)$, $x(t) \in \SC$.

A continuous function $a \colon \BBR \to \BBR$ is an \textit{extended class-$\SK$ function} if it is strictly increasing and $a(0)=0$.

\begin{definition}\label{def:cbf}
\rm
Assume $h$ is continuously differentiable on $\SC$. 
Then, $h$ is a \textit{control barrier function} (CBF) for
\eqref{eq:affine_control} on $\SC$ if there exists an extended
class-$\SK$ function $\alpha$ such that for all $x \in \SC$,
\begin{equation}\label{eq:cbf_condition}
    \sup_{\hat{u} \in \SU} \; L_f h(x) + L_g h(x)\hat{u}
    + \alpha(h(x)) \ge 0.
\end{equation}
\end{definition}

Definition~\ref{def:cbf} requires that $h$ is continuously differentiable.
The next definition extends the concept of CBF to functions that are directionally differentiable but not necessarily differentiable.

\begin{definition}\label{def:cbf_directional}
\rm
Assume $h$ is locally Lipschitz and directionally differentiable on $\SC$.
Then, $h$ is a \textit{directional control barrier function} (D-CBF) for \eqref{eq:affine_control} on $\SC$ if there exists an extended class-$\SK$ function $\alpha$ such that for all $x \in \SC$,
\begin{equation}\label{eq:dcbf_condition}
    \sup_{\hat{u} \in \SU} \; D_{F(x,\hat{u})} h(x) + \alpha(h(x)) \ge 0.
\end{equation}
\end{definition}

If $h$ is continuously differentiable, then $D_{f(x)+g(x)\hat{u}} h(x) = L_f h(x) + L_g h(x) \hat{u}$.
In this case, \eqref{eq:dcbf_condition} is equivalent to
\eqref{eq:cbf_condition}, and Definition~\ref{def:cbf_directional} reduces to Definition~\ref{def:cbf}.
The concept of CBF can be further generalized using Dini derivatives (see \cite{wiltz2025predictive}); however, Definition~\ref{def:cbf_directional} suffices for this article.

Next, let $h_1,\ldots,h_\ell \colon \BBR^n \to \BBR$ be continuous, and define
\begin{equation*}%\label{eq:C_vector}
\SC_\rmv \triangleq \{ x \in \BBR^n \colon h_1(x) \ge 0, \ldots, h_\ell(x) \ge 0 \},
\end{equation*}
which is the intersection of the zero-superlevel sets of $h_1,\ldots,h_\ell$. 
The next definition extends the concept of D-CBF to address multiple barrier functions and the intersection of their zero-superlevel sets.

\begin{definition}\label{def:vector_dcbf}
\rm
Assume $h_1,\ldots,h_\ell$ are locally Lipschitz and directionally differentiable on $\SC_\rmv$.
Then, $(h_1,\ldots,h_\ell)$ is a \textit{D-CBF $\ell$-tuple} for \eqref{eq:affine_control} on $\SC_\rmv$ if there exist extended class-$\SK$ functions $\alpha_1,\ldots,\alpha_\ell$ such that for all $x \in \SC_\rmv$, $K_\rmv(x)$ is nonempty, where $K_\rmv \colon \SC_\rmv \rightrightarrows \SU$ is defined by
\begin{align}\label{eq:vector_dcbf_K}
K_\rmv(x) &\triangleq \{ \hat{u} \in \SU \colon D_{F(x,\hat{u})} h_1(x) + \alpha_1(h_1(x)) \ge 0, \ldots, \nn \\
&\qquad D_{F(x,\hat{u})} h_\ell(x) + \alpha_\ell(h_\ell(x)) \ge 0 \}.
\end{align}
\end{definition}

In the case where $\ell=1$, $K_\rmv(x)$ is nonempty if and only if \eqref{eq:dcbf_condition} with $h=h_1$ is satisfied. 
Thus, Definition~\ref{def:vector_dcbf} is equivalent to Definition~\ref{def:cbf_directional} in the $\ell=1$ case.

The next result shows that if $(h_1,\ldots,h_\ell)$ is a D-CBF $\ell$-tuple, then any control from $K_\rmv(x)$ makes $\SC_\rmv$ forward invariant.
This result extends the standard CBF result (e.g., \cite[Corollary~2]{ames2016control}) to D-CBFs.

\begin{theorem}
\label{prop:vector_dcbf_invariance}
\rm
Assume $(h_1,\ldots,h_\ell)$ is a D-CBF $\ell$-tuple for \eqref{eq:affine_control} on $\SC_\rmv$, and let $u_{\rm fi} \colon \SC_\rmv \to \SU$ be such that for all $x \in \SC_\rmv$, $u_{\rm fi}(x) \in K_\rmv(x)$.
Then, $\SC_\rmv$ is forward invariant with respect to \eqref{eq:affine_control} with $u = u_{\rm fi}$.
\end{theorem}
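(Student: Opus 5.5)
Fix $x_0 \in \SC_\rmv$ and let $x$ be the (assumed existing and unique) solution of \eqref{eq:affine_control} with $u = u_{\rm fi}$. The plan is a comparison-lemma argument applied separately to each $h_i$ along the trajectory, with the directional derivative supplied by Lemma~\ref{lem:dini_directional}. We argue by contradiction and localize to the first exit time. Suppose $x$ leaves $\SC_\rmv$. Then
\[
t^* \triangleq \inf\{t \ge 0 \colon x(t) \notin \SC_\rmv\}
\]
is finite. Since each $h_i$ is continuous and $\SC_\rmv$ is closed, $x(t^*) \in \SC_\rmv$, and there exist an index $i$ and times $t_k \downarrow t^*$ with $h_i(x(t_k)) < 0$. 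We then study the scalar function $\psi(t) \triangleq h_i(x(t))$ on $[0,t^*]$ and on a small window to the right of $t^*$.

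On $[0,t^*]$ we have $x(t) \in \SC_\rmv$. We then apply Lemma~\ref{lem:dini_directional} with $\mu = h_i$ and $y = x$, so that $\dot y(t) = F(x(t),u_{\rm fi}(x(t)))$, to obtain
\[
\frac{\rmd^+}{\rmd t}\psi(t) = D_{\dot x(t)} h_i(x(t)) \ge -\alpha_i(\psi(t)),
\]
where the inequality holds because $u_{\rm fi}(x(t)) \in K_\rmv(x(t))$. Lemma~\ref{lem:dini_directional} requires $\dot y(t) \in R_\SD(y(t))$. This is the delicate point: $h_i$ is only assumed locally Lipschitz and directionally differentiable on $\SC_\rmv$, so the lemma should be applied with $\SD = \SC_\rmv$, and the trajectory must remain in $\SD$ to the right of $t$. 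That is exactly what fails at $t^*$.

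The main obstacle is therefore the exit time itself. I expect to handle it by using the directional-derivative condition directly at $t^*$, which is legitimate because $x(t^*) \in \SC_\rmv$, and reading $D_\nu h_i$ at boundary points as the one-sided limit along $\nu$ whenever that limit exists. Let $\nu = \dot x(t^*)$. Local Lipschitzness of $h_i$ together with the Taylor step in the proof of Lemma~\ref{lem:dini_directional} give
\[
\psi(t^*+s) = h_i(x(t^*) + s\nu) + o(s),
\]
so $\frac{\rmd^+}{\rmd t}\psi(t^*) = D_\nu h_i(x(t^*)) \ge -\alpha_i(\psi(t^*))$. The argument then splits into two cases.

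If $\psi(t^*) > 0$, continuity keeps $\psi$ positive near $t^*$, which contradicts $h_i(x(t_k)) < 0$ along $t_k \downarrow t^*$. If $\psi(t^*) = 0$, the bound only yields $\frac{\rmd^+}{\rmd t}\psi(t^*) \ge 0$, which by itself does not exclude $\psi < 0$ immediately after $t^*$. To close this gap I will use a comparison/Nagumo-type argument: $\psi$ is locally Lipschitz in $t$, so it is absolutely continuous, and I will apply the derivative inequality $\frac{\rmd^+}{\rmd t}\psi \ge -\alpha_i(\psi)$ on the set where it holds. Since $-\alpha_i(\psi) > 0$ whenever $\psi < 0$, a standard Dini-derivative comparison lemma shows that $\psi$ cannot decrease strictly below $0$ on any interval. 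This requires the derivative bound at points with $\psi < 0$, which lie just outside $\SC_\rmv$.

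If the hypotheses do not extend there, I would fall back on the interpretation implicit in the paper: the directional derivative condition at the boundary point, combined with the D-CBF condition, yields $\frac{\rmd^+}{\rmd t}\psi(t^*) \ge 0$ for every index $i$ active at $t^*$, and this is the same tangency (Nagumo) condition used in \cite[Corollary~2]{ames2016control}. With existence and uniqueness of solutions assumed, Nagumo's theorem then gives forward invariance of the closed set $\SC_\rmv$.
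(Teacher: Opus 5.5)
\textbf{Genuine gap: the case $\psi(t^*)=0$ is never closed.} The exit-time reduction and the case $\psi(t^*)>0$ are fine, but neither of your two routes handles $\psi(t^*)=0$.

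Your first route needs $\frac{\rmd^+}{\rmd t}\psi\ge-\alpha_i(\psi)$ at times where $\psi<0$. Those times correspond to points outside $\SC_\rmv$, where the D-CBF inequality is not assumed and $u_{\rm fi}$ is not even defined.

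Your fallback claims that $D_{\nu}h_i(x)\ge 0$ for every active index is Nagumo's tangency condition. It is not. Nagumo requires the closed-loop velocity to lie in the Bouligand tangent cone $T_{\SC_\rmv}(x)$. The linearized cone $\{\nu\colon D_\nu h_i(x)\ge 0 \text{ for all active } i\}$ can be strictly larger than $T_{\SC_\rmv}(x)$ unless a constraint qualification holds.

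For example, in $\BBR^2$ take $h(x)=-x_1^2$, so $\SC=\{x\colon x_1=0\}$, which is nonempty with no isolated points. Then $D_\nu h(x)=0$ for every $\nu$ and every $x\in\SC$, whereas $T_{\SC}(x)=\{\nu\colon \nu_1=0\}$. With $f(x)=[1\;\;0]^\top$ and $g=0$, the D-CBF inequality holds on $\SC$ with equality, yet the trajectory leaves $\SC$ immediately.

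Two further problems with the fallback. Nagumo's theorem in its standard form concerns a continuous closed-loop vector field, while $u_{\rm fi}$ is an arbitrary selection from $K_\rmv$. And the CBF result you cite from \cite{ames2016control} does not rest on a boundary tangency test. As usually stated, it imposes the barrier inequality on an open set containing the safe set and uses a Lipschitz feedback, so the comparison lemma can be applied on both sides of the boundary.

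\textbf{What is missing, and how the paper avoids it.} The example shows that the boundary case cannot be settled from the inequality on $\SC_\rmv$ alone. What you lack is an extra hypothesis or convention, not a sharper estimate.

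The paper never introduces an exit time. It applies Lemma~\ref{lem:dini_directional} together with $u_{\rm fi}(x(t))\in K_\rmv(x(t))$ at every $t\ge 0$. This is legitimate under its standing convention that the closed-loop solution driven by $u_{\rm fi}$, a map defined only on $\SC_\rmv$, exists on $[0,\infty)$. It then compares $h_i(x(t))$ with the solution of $\dot\eta_i=-\alpha_i(\eta_i)$, $\eta_i(0)=h_i(x_0)\ge 0$, to obtain $h_i(x(t))\ge\eta_i(t)\ge 0$.

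To turn your route into a proof, you must do one of the following:
\begin{itemize}
\item Adopt the paper's convention. Then the exit-time analysis is unnecessary, and you can run the comparison argument on $[0,\infty)$ directly.
\item Assume the inequality on an open set containing $\SC_\rmv$. Then your first route works: on any interval where $\psi<0$ you get $\frac{\rmd^+}{\rmd t}\psi\ge-\alpha_i(\psi)>0$. So $\psi$ is nondecreasing there and cannot go from $0$ at the interval's left endpoint to $\psi(t_k)<0$.
\item Assume a constraint qualification at active boundary points together with a continuous closed-loop field, so that Nagumo's theorem genuinely applies.
\end{itemize}
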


\begin{proof}
\indent Let $x_0 \in \SC_\rmv$.
For $i \in \{1,\ldots,\ell\}$, let $\eta_i \colon[0,\infty) \to \BBR$ satisfy $\dot \eta_i(t) = -\alpha_i(\eta_i(t))$, where $\eta_i(0) = h_i(x_0)$. 
Since $h_i(x_0) \geq 0$ and $\alpha_i$ is an extended class-$\SK$ function, it follows that for all $t \ge 0$, $\eta_i(t) \ge 0$. 

Since $(h_1,\ldots,h_\ell)$ is a D-CBF $\ell$-tuple, Definition~\ref{def:vector_dcbf} implies that for all $x \in \SC_\rmv$, $K_\rmv(x)$ is nonempty.
Thus, Lemma~\ref{lem:dini_directional} and \eqref{eq:vector_dcbf_K} imply that for all $i \in \{1,\ldots,\ell\}$ and all $t \geq 0$,
\begin{align*}
\frac{\rmd^+}{\rmd t} h_i(x(t)) &= D_{F(x(t),u_{\rm fi}(x(t)))} h_i(x(t)) \ge -\alpha_i(h_i(x(t))).
\end{align*}
Hence, the comparison lemma \cite[Lemma~3.4]{khalil2002nonlinear} implies that for all $i \in \{1,\ldots,\ell\}$ and all $t \geq 0$, $h_i(x(t)) \geq \eta_i(t) \ge 0$.
Thus, for all $t \geq 0$, $x(t) \in \SC_\rmv$.
\end{proof}

\section{Predicted-Flow Control Barrier Functions}\label{sec:pcbf}

Let $T > 0$ be the planning-and-prediction horizon, and consider the control plan $u_\rmp(\cdot; \theta) \colon [0,T] \to \BBR^m$, which is parametrized by $\theta \in \BBR^d$.
The control plan $u_\rmp$ is continuous on $[0,T] \times \BBR^d$, and for all $\tau \in [0,T]$, $u_\rmp(\tau;\cdot)$ is continuously differentiable on $\BBR^d$.

Let $k \colon \BBR^d \to \BBR$ be continuously differentiable, and define the \textit{admissible parameter set}
\begin{equation}
\Theta \triangleq \{ \theta \in \BBR^d \colon k(\theta) \ge 0 \},\label{eq:theta_set}
\end{equation}
where for all $(\tau,\theta) \in [0,T] \times \Theta$, $u_\rmp(\tau;\theta) \in \SU$.
The following example provides one construction for $u_\rmp$.

\begin{example}\label{ex:foh}\rm
Let $\beta_1,\ldots,\beta_p : [0,T] \to \BBR$ be continuous nonnegative functions such that for all $\tau \in [0,T]$, $\sum_{i=1}^{p} \beta_i(\tau) = 1$. 
Then, consider the control plan 
\begin{equation}\label{eq:u_p_construction}
u_\rmp(\tau;\theta) = \sum_{i=1}^{p} \theta_i \, \beta_i(\tau),
\end{equation}
where $\theta_i \in \BBR^m$, $\theta = [\theta_1^\top \;\theta_2^\top \;\ldots \;\theta_p^\top]^\top \in \BBR^{d}$, and $d = pm$.
For each planning time $\tau$, the control plan \eqref{eq:u_p_construction} is a convex combination of $\theta_1,\ldots,\theta_p$. 
Thus, selecting $k$ such that $\Theta \subseteq \SU^p$ is sufficient to satisfy the condition that for all $(\tau,\theta) \in [0,T] \times \Theta$, $u_\rmp(\tau;\theta) \in \SU$.
If $\SU$ is a convex polytope, then it is possible to construct $k$ such that $\Theta \subseteq \SU^p$ and $\Theta$ approximates $\SU^p$.
This construction is provided in Section~\ref{sec:impl_remark}.

One choice for $\beta_1,\ldots,\beta_p$ are degree-one B-splines.
Specifically, let $T_\rmp \triangleq T/(p-1)$, and for $i\in\{1,\ldots,p\}$, let
\begin{equation}\label{eq:u_p_construction.b}
    \beta_i(\tau) = \frac{\tau - (i-2) T_\rmp}{T_\rmp} \sigma_{i-1}(\tau) + \frac{iT_\rmp - \tau}{T_\rmp} \sigma_{i}(\tau),
\end{equation}
where $\sigma_0=0$, and
\begin{equation}\label{eq:u_p_construction.a}
    \sigma_i(\tau) \triangleq
    \begin{cases}
    1, & \mbox{if } \tau\in[ (i-1) T_\rmp, i T_\rmp),\\
    0, & \mbox{else}.
    \end{cases}
\end{equation}
Figure~\ref{fig:basis_functions} illustrates the degree-one B-spline basis functions \cref{eq:u_p_construction.b,eq:u_p_construction.a}, and the resulting control plan \eqref{eq:u_p_construction}.
\exampletriangle
\end{example}

\begin{figure}[t]
\centering
\includegraphics[width=0.85\columnwidth, clip=true, trim=0.1in 0.1in 0.1in 0.1in]{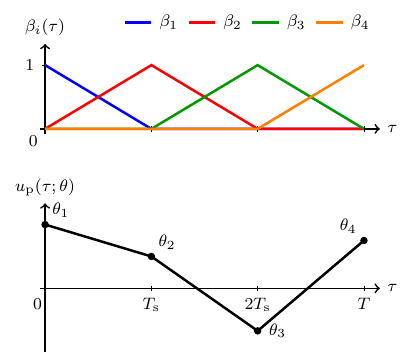}
\caption{First-degree B-spline basis functions $\beta_1,\ldots,\beta_p$ (top) and the resulting control plan $u_\rmp(\tau;\theta) = \sum_{i=1}^p \theta_i \beta_i(\tau)$ (bottom) on $[0,T]$ with $p=4$. At $\tau = (i-1)T_\rmp$, $u_\rmp$ equals $\theta_i$, and $u_\rmp$ is piecewise linear.}
\label{fig:basis_functions}
\end{figure}

The \textit{predicted flow} $\phi(\cdot; x, \theta) \colon [0,T] \to \BBR^n$ satisfies
\begin{equation}\label{eq:flow_def}
    \phi(\tau; x, \theta) = x + \int_{0}^{\tau}
    F\!\left(\phi(\sigma; x, \theta),\, u_\rmp(\sigma; \theta)\right) \, {\rm d}\sigma,
\end{equation}
which implies that $\phi(\tau; x, \theta)$ is the solution to
\eqref{eq:affine_control} at planning time $\tau \in [0,T]$ with initial condition
$x$ and $u = u_\rmp(\cdot; \theta)$.
In other words, $\phi(\cdot; x, \theta)$ is the flow of \eqref{eq:affine_control} from state $x$ under the plan $u_\rmp(\cdot;\theta)$ with parameter $\theta$. 
Differentiating \eqref{eq:flow_def} with respect to $\tau$ yields
\begin{equation*}
\frac{\partial \phi}{\partial \tau}(\tau; x, \theta)
= F\big(\phi(\tau; x, \theta), u_\rmp(\tau; \theta)\big),
\end{equation*}
which is the evolution of the predicted flow $\phi$ given $(x,\theta)$.

At each time $t\ge0$, the predicted flow $\phi(\cdot;x(t),\theta(t))$ 
depends on the current state $x(t)$ and parameter $\theta(t)$. 
The time evolution of $x$ is influenced by the control $u$. 
In order to influence the time evolution of $\theta$, we let $\theta : [0,\infty) \to \BBR^d$ be the solution to 
\begin{equation}\label{eq:theta_dyn}
\dot{\theta}(t) = \omega(t),
\end{equation}
where $\theta(0) = \theta_0 \in \BBR^d$ and $\omega : [0,\infty) \to \Omega \subseteq \BBR^d$ is the control input to the integrator.

Next, let
$H_1,\ldots,H_\ell \colon \allowbreak C([0,T], \BBR^n) \to \BBR$
be functionals such that for all $i \in \{1,\ldots,\ell\}$,
\begin{equation}\label{eq:psi_functional}
\psi_i(x,\theta) \triangleq H_i[\phi(\cdot; x, \theta)]
\end{equation}
is locally Lipschitz and directionally differentiable on 
\begin{align}\label{eq:def:Psi}
\Psi &\triangleq \{(x,\theta) \in \BBR^n \times \BBR^d \colon k(\theta) \geq 0,  \psi_1(x,\theta) \geq 0, \ldots, \nn\\
&\qquad \psi_\ell(x,\theta) \geq 0 \}.
\end{align}
Note that $\Psi$ is the set of $(x,\theta)$ such that the predicted flow $\phi$ mapped through each functional $H_i$ is nonnegative, and $\theta \in \Theta$, which implies that $u_\rmp(\tau;\theta) \in \SU$ for the entire planning horizon $\tau\in[0,T]$.

We now introduce the concept of a predicted-flow CBF.
This concept extends the notion of a D-CBF to address the $\ell$-tuple $(\psi_1,\ldots,\psi_\ell)$, where each $\psi_i$ is obtained by mapping $\phi$ through the functional $H_i$ and where $k$ defines the admissible parameters for the control plan.

\begin{definition}\label{def:pfcbf}
\rm
Assume $\psi_1,\ldots,\psi_\ell$ given by \eqref{eq:psi_functional} are locally Lipschitz and directionally differentiable on $\Psi$.
Then, $(\psi_1,\ldots,\psi_\ell)$ is a \textit{predicted-flow control barrier function (P-CBF) $\ell$-tuple for \eqref{eq:affine_control} and \eqref{eq:theta_dyn} on $\Psi$ given $u_\rmp$ and $k$} if there exist extended class-$\SK$ functions $\alpha_1,\ldots,\alpha_\ell,\beta$ such that for all $(x,\theta) \in \Psi$, $K_\Psi(x,\theta)$ is nonempty, where $K_\Psi \colon \Psi \rightrightarrows \SU \times \Omega$ is defined by
\begin{align}\label{eq:pfcbf_K}
K_\Psi(x,\theta) &\triangleq \Big \{(\hat{u}, \hat{\omega}) \in \SU \times \Omega \colon k^\prime(\theta)\,\hat{\omega} + \beta(k(\theta)) \geq 0, \nn\\
&\quad D_{\left[\begin{smallmatrix} F(x,\hat{u}) \\ \hat{\omega} \end{smallmatrix}\right]} \psi_1(x,\theta) + \alpha_1(\psi_1(x,\theta)) \geq 0, \ldots,\nn \\
&\quad D_{\left[\begin{smallmatrix} F(x,\hat{u}) \\ \hat{\omega} \end{smallmatrix}\right]} \psi_\ell(x,\theta) + \alpha_\ell(\psi_\ell(x,\theta)) \geq 0 \Big \}.
\end{align}
\end{definition}

The next result connects P-CBFs to D-CBFs.

\begin{proposition}\label{prop:pcbf_equiv}\rm
    $(\psi_1,\ldots,\psi_\ell)$ is a P-CBF $\ell$-tuple for \eqref{eq:affine_control} and \eqref{eq:theta_dyn} on $\Psi$ if and only if $(k,\psi_1,\ldots,\psi_\ell)$ is a D-CBF $(\ell+1)$-tuple for \eqref{eq:affine_control} and \eqref{eq:theta_dyn} on $\Psi$.
\end{proposition}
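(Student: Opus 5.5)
The plan is to unfold both definitions on the augmented system and show that they ask for the same thing. The first step is to view \eqref{eq:affine_control} together with \eqref{eq:theta_dyn} as a single control-affine system. Its state is $z \triangleq (x,\theta) \in \BBR^{n+d}$ and its control is $(u,\omega) \in \SU \times \Omega$. The right-hand side is $\tilde F(z,(u,\omega)) = \left[\begin{smallmatrix} F(x,u) \\ \omega \end{smallmatrix}\right]$, which is affine in $(u,\omega)$ with drift $\tilde f(z) = \left[\begin{smallmatrix} f(x) \\ 0 \end{smallmatrix}\right]$ and input matrix $\tilde g(z) = \mathrm{blockdiag}(g(x), I_d)$; both are continuously differentiable. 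Definition~\ref{def:vector_dcbf} then applies with $h_0 \triangleq k$ (viewed as a function of $z$ that ignores $x$) and $h_i \triangleq \psi_i$ for $i \in \{1,\ldots,\ell\}$.

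The second step checks that the sets and regularity hypotheses agree. The intersection of the zero-superlevel sets of $(k,\psi_1,\ldots,\psi_\ell)$ is exactly $\Psi$ in \eqref{eq:def:Psi}, so $\SC_\rmv = \Psi$. For regularity, $\psi_1,\ldots,\psi_\ell$ are locally Lipschitz and directionally differentiable on $\Psi$ by hypothesis. The function $k$ is continuously differentiable, hence locally Lipschitz and directionally differentiable on $\Psi$. Its directional derivative is
\begin{equation*}
D_{\left[\begin{smallmatrix} F(x,\hat u) \\ \hat\omega \end{smallmatrix}\right]} k(\theta) = L_{\left[\begin{smallmatrix} F(x,\hat u) \\ \hat\omega \end{smallmatrix}\right]} k = \begin{bmatrix} 0 & k^\prime(\theta) \end{bmatrix}\begin{bmatrix} F(x,\hat u) \\ \hat\omega \end{bmatrix} = k^\prime(\theta)\hat\omega.
\end{equation*}
This uses the remark after \eqref{eq:rh_directional_derivative} that directional derivatives reduce to Lie derivatives for differentiable functions.

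The third step compares the admissible control sets. With $\alpha_0 \triangleq \beta$, the set $K_\rmv(z)$ in \eqref{eq:vector_dcbf_K} for the augmented system consists of all $(\hat u,\hat\omega) \in \SU\times\Omega$ such that $k^\prime(\theta)\hat\omega + \beta(k(\theta)) \ge 0$ and, for each $i$, $D_{\tilde F} \psi_i + \alpha_i(\psi_i) \ge 0$. This is precisely $K_\Psi(x,\theta)$ in \eqref{eq:pfcbf_K}. Both directions then follow by matching class-$\SK$ functions. If the P-CBF condition holds with $(\alpha_1,\ldots,\alpha_\ell,\beta)$, use $(\beta,\alpha_1,\ldots,\alpha_\ell)$ for the D-CBF $(\ell+1)$-tuple; if the D-CBF condition holds with $(\alpha_0,\ldots,\alpha_\ell)$, set $\beta = \alpha_0$. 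In either case the same sets are required to be nonempty on the same domain.

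No step is genuinely hard. The only point that needs care is the second step: the regularity hypothesis of Definition~\ref{def:vector_dcbf} for the component $k$ must hold, and its directional derivative must collapse to $k^\prime(\theta)\hat\omega$. Both follow from the continuous differentiability of $k$.
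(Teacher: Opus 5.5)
Your proposal is correct and follows the same route as the paper. The key fact is that continuous differentiability of $k$ gives $D_{\nu} k(\theta) = k^\prime(\theta)\,\hat\omega$ for $\nu = (F(x,\hat u),\hat\omega)$, so $K_\rmv = K_\Psi$ on $\SC_\rmv = \Psi$ once you take $\beta$ as the class-$\SK$ function for $k$; the only difference is that you spell out the augmented-system viewpoint and the regularity check for $k$ more explicitly than the paper does.
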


\begin{proof}
\indent Since $k$ is continuously differentiable, note that 
$$D_{\left[\begin{smallmatrix} F(x,\hat{u}) \\ \hat{\omega} \end{smallmatrix}\right]} k(\theta) = k^\prime(\theta)\,\hat{\omega}.$$
Thus, Definitions \ref{def:vector_dcbf} and \ref{def:pfcbf} imply that $(\psi_1,\ldots,\psi_\ell)$ is a P-CBF $\ell$-tuple on $\Psi$ given $k$ with associate point-to-set map $K_\Psi$ if and only if $(h_1,\ldots,h_\ell,h_{\ell+1})=(k,\psi_1,\ldots,\psi_\ell)$ is a D-CBF $(\ell+1)$-tuple on $\SC_\rmv = \Psi$ with associate map $K_\rmv=K_\Psi$.
\end{proof}

\begin{remark}\rm
If there are no constraints on the control input (i.e., $\SU = \BBR^m$), then $k$ can be selected as a positive constant.
In this case, $k^\prime \hat{\omega} + \beta(k) = \beta(k) > 0$, which implies that the first inequality in \eqref{eq:pfcbf_K} is trivially satisfied.
In this case, $(\psi_1,\ldots,\psi_\ell)$ is a P-CBF $\ell$-tuple on $\Psi$ if and only if $(\psi_1,\ldots,\psi_\ell)$ is a D-CBF $\ell$-tuple on $\Psi$.
\end{remark}

The next result shows that if $(\psi_1,\ldots,\psi_\ell)$ is a P-CBF $\ell$-tuple on $\Psi$, then any control selected pointwise from $K_\Psi(x,\theta)$ makes $\Psi$ forward invariant.
This result is a consequence of Theorem~\ref{prop:vector_dcbf_invariance} and Proposition~\ref{prop:pcbf_equiv}.

\begin{theorem}\label{thm:pfcbf_forward_invariance}\rm
Assume $(\psi_1,\ldots,\psi_\ell)$ is a P-CBF $\ell$-tuple for \eqref{eq:affine_control} and \eqref{eq:theta_dyn} on $\Psi$, and let 
$K_\Psi(x,\theta)$ be given by \eqref{eq:pfcbf_K}. 
Let $u_{\rm fi} \colon \Psi \to \SU$ and $\omega_{\rm fi} \colon \Psi \to \Omega$ be such that for all $(x,\theta) \in \Psi$, $(u_{\rm fi}(x,\theta),\omega_{\rm fi}(x,\theta)) \in K_\Psi(x,\theta)$.
Then, $\Psi$ is forward invariant with respect to \eqref{eq:affine_control} and \eqref{eq:theta_dyn} with $(u,\omega) = (u_{\rm fi} , \omega_{\rm fi})$.
\end{theorem}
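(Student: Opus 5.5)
The plan is to reduce Theorem~\ref{thm:pfcbf_forward_invariance} to Theorem~\ref{prop:vector_dcbf_invariance} by viewing \eqref{eq:affine_control} and \eqref{eq:theta_dyn} as a single control-affine system on the augmented state $z \triangleq (x,\theta) \in \BBR^{n+d}$. The augmented input is $v \triangleq (u,\omega) \in \SU \times \Omega$. The dynamics are
\begin{equation*}
\dot z = \bar f(z) + \bar g(z) v, \qquad \bar f(z) \triangleq \begin{bmatrix} f(x) \\ 0 \end{bmatrix}, \qquad \bar g(z) \triangleq \begin{bmatrix} g(x) & 0 \\ 0 & I_d \end{bmatrix}.
\end{equation*}
Here $\bar f$ and $\bar g$ are continuously differentiable, and the admissible input set is $\SU \times \Omega$. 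Thus the augmented system has exactly the form \eqref{eq:affine_control}, with $\bar F(z,\hat v) = [F(x,\hat u)^\top \;\hat\omega^\top]^\top$. The standing assumption that closed-loop solutions exist and are unique carries over.

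Next, I will invoke Proposition~\ref{prop:pcbf_equiv}. Since $(\psi_1,\ldots,\psi_\ell)$ is a P-CBF $\ell$-tuple on $\Psi$, it gives that $(k,\psi_1,\ldots,\psi_\ell)$ is a D-CBF $(\ell+1)$-tuple for the augmented system on $\SC_\rmv = \Psi$. The associated map $K_\rmv$ coincides with $K_\Psi$ in \eqref{eq:pfcbf_K}, using class-$\SK$ functions $\beta,\alpha_1,\ldots,\alpha_\ell$. The required regularity holds: $k$ is $C^1$, hence locally Lipschitz and directionally differentiable, and each $\psi_i$ has these properties by hypothesis.

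Then I will define $v_{\rm fi}(z) \triangleq (u_{\rm fi}(x,\theta), \omega_{\rm fi}(x,\theta))$. By assumption $v_{\rm fi}(z) \in K_\Psi(z) = K_\rmv(z)$ for all $z \in \Psi$. Theorem~\ref{prop:vector_dcbf_invariance} applied to the augmented system then gives forward invariance of $\Psi$ under $v_{\rm fi}$. This is precisely forward invariance with respect to \eqref{eq:affine_control} and \eqref{eq:theta_dyn} with $(u,\omega) = (u_{\rm fi},\omega_{\rm fi})$.

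The only potential obstacle is bookkeeping. I need to confirm that Theorem~\ref{prop:vector_dcbf_invariance} and Lemma~\ref{lem:dini_directional} apply with $\SD=\Psi$. In particular, the directional derivatives along $\bar F(z(t),v_{\rm fi}(z(t)))$ used in the comparison argument must be the ones defining $K_\Psi$, and the closed-loop trajectory must stay in the domain where they are defined. Both follow from the identification $K_\rmv = K_\Psi$ and from how Theorem~\ref{prop:vector_dcbf_invariance} was proved, so this step is routine and inherited from the earlier result.
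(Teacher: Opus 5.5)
Your proposal is correct and follows the paper's proof: you use Proposition~\ref{prop:pcbf_equiv} to turn the P-CBF $\ell$-tuple into the D-CBF $(\ell+1)$-tuple $(k,\psi_1,\ldots,\psi_\ell)$ on $\Psi$ with $K_\rmv = K_\Psi$, and then apply Theorem~\ref{prop:vector_dcbf_invariance}. The only difference is that you write out the augmented control-affine system on $(x,\theta)$ explicitly, which the paper leaves implicit; your caveat about applying Lemma~\ref{lem:dini_directional} along the trajectory is inherited unchanged from the paper's proof of Theorem~\ref{prop:vector_dcbf_invariance}, not something new.
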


\begin{proof}
\indent Since $(\psi_1,\ldots,\psi_\ell)$ is a P-CBF $\ell$-tuple on $\Psi$, Proposition~\ref{prop:pcbf_equiv} implies that $(k,\psi_1,\ldots,\psi_\ell)$ is a D-CBF $(\ell+1)$-tuple on $\Psi$.
Since, in addition, $(u_{\rm fi}(x,\theta),\omega_{\rm fi}(x,\theta)) \in K_\Psi(x,\theta)$, it follows from Theorem~\ref{prop:vector_dcbf_invariance} that $\Psi$ is forward invariant with respect to \eqref{eq:affine_control} and \eqref{eq:theta_dyn} with $(u,\omega) = (u_{\rm fi}, \omega_{\rm fi})$.
\end{proof}

The following subsections present 2 useful candidate P-CBFs.
These candidate P-CBFs are used in subsequent sections of this article.

\subsection{Minimum-Over-Prediction-Horizon P-CBF}

We present a candidate P-CBF for the situation in which it is desirable for the predicted flow $\phi$ to be in a desired set throughout the prediction horizon $[0,T]$.
Specifically, let $h_\rms \colon \BBR^n \to \BBR$ be continuously differentiable, and define
\begin{equation}\label{eq:safe_set}
\SC_\rms \triangleq \left\{ x \in \BBR^n \colon h_\rms(x) \geq 0 \right\}.
\end{equation}
To determine whether $\phi$ is in $\SC_\rms$ throughout the prediction horizon, consider the candidate P-CBF
\begin{equation}\label{eq:def:psi_m}
\psi_\rmm(x,\theta) = \min_{\tau \in [0,T]} h_\rms(\phi(\tau; x, \theta)).
\end{equation}
Note that $\psi_\rmm$ is nonnegative if and only if $\phi(\tau;x,\theta) \in \SC_\rms$ for all prediction times $\tau \in [0,T]$.
Next, define 
\begin{equation}\label{eq:def:Psi_m}
\Psi_\rmm \triangleq \{(x,\theta) \in \BBR^n \times \BBR^d \colon k(\theta) \geq 0, \psi_\rmm(x,\theta) \geq 0\},
\end{equation}
which is the set of $(x,\theta)$ such that $\phi(\tau;x,\theta) \in \SC_\rms$ and $u_\rmp(\tau;\theta) \in \SU$ for all $\tau \in [0,T]$.

The next result demonstrates that $\psi_\rmm$ is locally Lipschitz and directionally differentiable, which implies that it satisfies the conditions in Definition~\ref{def:pfcbf} to be a candidate P-CBF.
The result also provides an expression for the directional derivative of $\psi_\rmm$, which depends on the following set
\begin{equation*}
\ST(x,\theta) \triangleq \operatorname*{argmin}_{\tau \in [0,T]} h_\rms(\phi(\tau; x, \theta)).
\end{equation*}
The proof is in the appendix.

\begin{proposition}\label{prop:directional_derivative_danskin}
\rm
Consider $\psi_\rmm$ given by \eqref{eq:def:psi_m}, where $h_\rms$ is continuously differentiable on $\BBR^n$.
Then, the following hold:
\begin{enumerate}[leftmargin=0.9cm]

\item\label{prop:directional_derivative_danskin.b}
$\psi_\rmm$ is locally Lipschitz on $\BBR^n \times \BBR^d$.

\item\label{prop:directional_derivative_danskin.c}
$\psi_\rmm$ is directionally differentiable on $\BBR^n \times \BBR^d$, and 
\begin{align}\label{eq:dini_psi_formula_danskin}
D_\nu\psi_\rmm(x,\theta) &= \min_{\tau \in \ST(x,\theta)} h_\rms^\prime(\phi(\tau; x, \theta)) \nn \\
&\quad \times \begin{bmatrix} \frac{\partial \phi}{\partial x}(\tau; x, \theta) & \frac{\partial \phi}{\partial \theta}(\tau; x, \theta) \end{bmatrix} \nu.
\end{align}

\end{enumerate}
\end{proposition}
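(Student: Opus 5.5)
The plan is to treat $\psi_\rmm$ as a pointwise minimum over the compact index set $[0,T]$ of the family $G(\tau,x,\theta) \triangleq h_\rms(\phi(\tau;x,\theta))$, and then apply a Danskin-type argument.

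\emph{Regularity of $G$.} Since $f$ and $g$ are continuously differentiable, $u_\rmp$ is continuous on $[0,T]\times\BBR^d$, and $u_\rmp(\tau;\cdot)$ is continuously differentiable, standard results on dependence of ODE solutions on initial conditions and parameters give the following. The map $(\tau,x,\theta)\mapsto\phi(\tau;x,\theta)$ is continuous on $[0,T]\times\BBR^n\times\BBR^d$. The partials $\frac{\partial\phi}{\partial x}$ and $\frac{\partial\phi}{\partial\theta}$ exist and are jointly continuous; they solve the variational equations obtained by differentiating \eqref{eq:flow_def}. Existence of $\phi$ on $[0,T]$ is taken from the standing assumption that solutions exist. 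Composing with the $C^1$ function $h_\rms$, it follows that $G$ is continuous and $\nabla_{(x,\theta)}G(\tau,x,\theta) = h_\rms^\prime(\phi)\left[\begin{smallmatrix}\frac{\partial\phi}{\partial x} & \frac{\partial\phi}{\partial\theta}\end{smallmatrix}\right]$ is jointly continuous in $(\tau,x,\theta)$.

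\emph{Part (a).} Fix a compact convex neighborhood $\SN$ of $(x,\theta)$. By continuity on the compact set $[0,T]\times\SN$, the gradient $\nabla_{(x,\theta)}G$ is bounded there by some $M$. The mean value theorem then gives $|G(\tau,z_1)-G(\tau,z_2)|\le M\|z_1-z_2\|$ uniformly in $\tau$. Because the minimum of functions with a common Lipschitz constant is Lipschitz with the same constant, $\psi_\rmm$ is Lipschitz on $\SN$.

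\emph{Part (b), Danskin argument.} Write $z=(x,\theta)$, fix a direction $\nu$, and let $\ST=\ST(z)$; this set is nonempty and compact by continuity of $G$. We prove matching upper and lower bounds on the difference quotient.

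For the upper bound, take any $\tau^*\in\ST$. Then $\psi_\rmm(z+s\nu)-\psi_\rmm(z)\le G(\tau^*,z+s\nu)-G(\tau^*,z)$. Dividing by $s$ and letting $s\downarrow0$ gives $\limsup \le \nabla_zG(\tau^*,z)\nu$. Minimizing over $\tau^*\in\ST$ yields $\limsup\le \min_{\tau\in\ST}\nabla_zG(\tau,z)\nu$.

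For the lower bound, choose $s_k\downarrow0$ attaining the $\liminf$, and let $\tau_k\in\ST(z+s_k\nu)$. Compactness gives a subsequence with $\tau_k\to\bar\tau$, and continuity of $G$ together with $\psi_\rmm$ forces $\bar\tau\in\ST(z)$. Since $G(\tau_k,z)\ge\psi_\rmm(z)$, we have $\psi_\rmm(z+s_k\nu)-\psi_\rmm(z)\ge G(\tau_k,z+s_k\nu)-G(\tau_k,z) = s_k\nabla_zG(\tau_k,z+\xi_ks_k\nu)\nu$ for some $\xi_k\in(0,1)$ by the mean value theorem. Joint continuity of $\nabla_zG$ then gives $\liminf\ge\nabla_zG(\bar\tau,z)\nu\ge\min_{\tau\in\ST}\nabla_zG(\tau,z)\nu$.

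Combining the two bounds shows the limit exists and equals \eqref{eq:dini_psi_formula_danskin}. The minimum over $\ST$ is attained because $\ST$ is compact and $\tau\mapsto\nabla_zG(\tau,z)\nu$ is continuous.

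\emph{Main obstacle.} The lower bound is the delicate step. It requires joint continuity in $\tau$ of the sensitivities $\frac{\partial\phi}{\partial x}$ and $\frac{\partial\phi}{\partial\theta}$, uniformly on compact sets. This must be justified carefully from the hypotheses: $u_\rmp$ is only continuous in $\tau$ (not differentiable), but it is $C^1$ in $\theta$ with a derivative that should be continuous jointly in $(\tau,\theta)$. If that joint continuity is not given, I would fall back on the integral form of the variational equation. There I would bound the difference of sensitivities through a Gronwall estimate, using continuity of $\frac{\partial u_\rmp}{\partial\theta}$ on compacts, which I would note as an implicit assumption.
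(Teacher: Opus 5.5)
Your proposal is correct and follows the paper's route. Part (a) is the same argument: a uniform Lipschitz constant on a compact neighborhood, then the observation that a minimum of functions sharing a Lipschitz constant is Lipschitz. For part (b) the paper simply cites Danskin's theorem, whereas you reproduce its standard upper/lower-bound proof inline. The joint continuity of $\frac{\partial u_\rmp}{\partial \theta}$ in $(\tau,\theta)$ that you flag is also implicitly assumed by the paper, when it asserts that $h_\rms(\phi(\tau;x,\theta))$ is continuously differentiable on $[0,T]\times\BBR^n\times\BBR^d$.
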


The next result shows that the directional derivative \eqref{eq:dini_psi_formula_danskin} of $\psi_\rmm$ along the trajectories of \eqref{eq:affine_control} and \eqref{eq:theta_dyn} satisfies a given lower bound if and only if a family of related lower bounds are satisfied, where each is affine in the control variables $(\hat u, \hat \omega)$. 
This result is an immediate consequence of part~\ref{prop:directional_derivative_danskin.c} of Proposition~\ref{prop:directional_derivative_danskin}.

\begin{proposition}
\label{prop:dcbf_iff}
\rm
Let $c \in \BBR$, and let $(x,\theta,\hat u,\hat\omega) \in \BBR^n \times \BBR^d \times \SU \times \Omega$.
Then, 
\begin{equation*}
D_{\left[\begin{smallmatrix} F(x,\hat{u}) \\ \hat{\omega} \end{smallmatrix}\right]} \psi_\rmm(x,\theta) \ge c 
\end{equation*}
if and only if for all $\tau \in \ST(x,\theta)$,
\begin{align}
&h_\rms^\prime(\phi(\tau; x, \theta)) \begin{bmatrix} \frac{\partial \phi}{\partial x}(\tau; x, \theta) & \frac{\partial \phi}{\partial \theta}(\tau; x, \theta) \end{bmatrix} \nn\\
&\qquad \times \begin{bmatrix} f(x) + g(x) \hat u \\ \hat \omega \end{bmatrix} \ge c. \label{eq:Dpsi_m_affine}
\end{align}
\end{proposition}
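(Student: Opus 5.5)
The plan is to apply part~\ref{prop:directional_derivative_danskin.c} of Proposition~\ref{prop:directional_derivative_danskin} with the direction
\begin{equation*}
\nu = \begin{bmatrix} F(x,\hat u) \\ \hat\omega \end{bmatrix} = \begin{bmatrix} f(x)+g(x)\hat u \\ \hat\omega \end{bmatrix} \in \BBR^{n+d}.
\end{equation*}
Part~\ref{prop:directional_derivative_danskin.c} asserts that $\psi_\rmm$ is directionally differentiable on all of $\BBR^n\times\BBR^d$. The radial cone there is the whole space, so this $\nu$ is an admissible direction. The proposition then gives
\begin{equation*}
D_\nu \psi_\rmm(x,\theta) = \min_{\tau\in\ST(x,\theta)} q(\tau), \qquad q(\tau) \triangleq h_\rms^\prime(\phi(\tau;x,\theta))\begin{bmatrix} \tfrac{\partial\phi}{\partial x}(\tau;x,\theta) & \tfrac{\partial\phi}{\partial\theta}(\tau;x,\theta)\end{bmatrix}\nu.
\end{equation*}
Here $q(\tau)$ is exactly the left-hand side of \eqref{eq:Dpsi_m_affine}. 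The claim therefore reduces to the elementary fact that $\min_{\tau\in\ST} q(\tau)\ge c$ if and only if $q(\tau)\ge c$ for every $\tau\in\ST$.

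The one point that needs checking is that the minimum over $\ST(x,\theta)$ is well defined and attained, so that the equivalence is a genuine min and not an infimum.

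First, $\ST(x,\theta)$ is nonempty and compact. The map $\tau\mapsto h_\rms(\phi(\tau;x,\theta))$ is continuous on the compact interval $[0,T]$, so its argmin is a nonempty closed subset of $[0,T]$.

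Second, $q$ is continuous on $[0,T]$. This holds because $h_\rms^\prime$ is continuous and $\phi$, $\partial\phi/\partial x$, and $\partial\phi/\partial\theta$ are continuous in $\tau$ by standard differentiable dependence of ODE solutions on initial conditions and parameters. Those sensitivities exist because $f,g$ are $C^1$ and $u_\rmp(\tau;\cdot)$ is $C^1$.

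Since the minimum is attained, both directions of the equivalence follow. If $q(\tau)\ge c$ for all $\tau\in\ST$, then the minimum is at least $c$. Conversely, if the minimum is at least $c$, then every value $q(\tau)$ with $\tau \in \ST$ is at least $c$.

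No real obstacle remains; the substantive work is contained in Proposition~\ref{prop:directional_derivative_danskin}. The only care needed is the well-definedness and attainment of the minimum just verified. Even without attainment, the statement would still hold with an infimum by the same argument.
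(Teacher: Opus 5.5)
Your proposal is correct and matches the paper, which presents this result as an immediate consequence of \eqref{eq:dini_psi_formula_danskin} with $\nu = [F(x,\hat u)^\top \; \hat\omega^\top]^\top$, combined with the fact that a minimum over $\ST(x,\theta)$ is at least $c$ if and only if every term is. Your check that the minimum is attained is harmless but unnecessary, since, as you note, the equivalence also holds for an infimum.
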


\begin{remark}
\rm
Proposition~\ref{prop:dcbf_iff} implies that the control variables $(\hat u, \hat \omega)$ satisfy the directional derivative constraint 
\begin{equation*}
D_{\left[\begin{smallmatrix} F(x,\hat{u}) \\ \hat{\omega} \end{smallmatrix}\right]} \psi_\rmm(x,\theta) +\alpha(\psi_\rmm(x,\theta)) \ge 0
\end{equation*}
if and only if $(\hat u, \hat \omega)$ satisfy the family of affine constraints \eqref{eq:Dpsi_m_affine} with $c=-\alpha(\psi_\rmm(x,\theta))$. 
Similar to standard CBFs, these affine constraints are useful for control synthesis. 
\end{remark}

\subsection{Terminal-Prediction-Time P-CBF}

This subsection presents a candidate P-CBF for the situation in which it is desirable for the predicted flow $\phi$ to be in a desired set at the terminal prediction time (i.e., $\tau=T$).
Specifically, let $h_\rmb \colon \BBR^n \to \BBR$ be continuously differentiable, and define
\begin{equation}\label{eq:backup_set}
\SC_\rmb \triangleq \left\{ x \in \BBR^n \colon h_\rmb(x) \geq 0 \right\}.
\end{equation}
Then, consider the candidate P-CBF
\begin{equation}\label{eq:def:psi_t}
\psi_\rmt(x,\theta) \triangleq h_\rmb(\phi(T ; x,\theta)),
\end{equation}
and associate set
\begin{equation}\label{eq:def:Psi_t}
\Psi_\rmt \triangleq \{(x,\theta) \in \BBR^n \times \BBR^d \colon k(\theta) \geq 0, \psi_\rmt(x,\theta) \geq 0\}.
\end{equation}
Note that $\Psi_\rmt$ is the set of $(x,\theta)$ such that $\phi(T;x,\theta) \in \SC_\rmb$ and $u_\rmp(\tau;\theta) \in \SU$ for all $\tau \in [0,T]$.

The next result demonstrates that $\psi_\rmt$ is continuously differentiable.
Thus, the directional derivative equals the Lie derivative, and the Lie derivative along the trajectories of \cref{eq:affine_control,eq:theta_dyn} is affine in the control variables $(\hat u, \hat \omega)$.

\begin{proposition}
\label{prop:psi_t_smooth}
\rm
Consider $\psi_\rmt$ given by \eqref{eq:def:psi_t}, where $h_\rmb$ is continuously differentiable on $\BBR^n$.
Then, $\psi_\rmt$ is continuously differentiable on $\BBR^n \times \BBR^d$, and
\begin{equation}\label{eq:psi_t_derivative}
D_{\left[\begin{smallmatrix} F(x,\hat{u}) \\ \hat{\omega} \end{smallmatrix}\right]} \psi_\rmt(x,\theta) = L_{\left[\begin{smallmatrix}f(x)+g(x)\hat{u} \\ \hat{\omega} \end{smallmatrix}\right]} \psi_\rmt(x,\theta).
\end{equation}
\end{proposition}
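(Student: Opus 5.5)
The plan is to show that the map $(x,\theta) \mapsto \phi(T;x,\theta)$ is continuously differentiable on $\BBR^n \times \BBR^d$. Then the claim follows from the chain rule, since $h_\rmb$ is continuously differentiable. We view \eqref{eq:flow_def} as the solution of the parameter-dependent, nonautonomous ODE
\begin{equation*}
\frac{\partial \phi}{\partial \tau} = G(\tau,\phi,\theta) \triangleq f(\phi) + g(\phi)\,u_\rmp(\tau;\theta),
\end{equation*}
with $\phi(0)=x$. The regularity we need comes from the hypotheses. The functions $f$ and $g$ are $C^1$. The control plan $u_\rmp$ is continuous in $(\tau,\theta)$, and $u_\rmp(\tau;\cdot)$ is $C^1$ for each $\tau$. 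Together these imply that $G$ is continuous in $(\tau,\phi,\theta)$ and continuously differentiable in $(\phi,\theta)$.

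First I need joint continuity of $\partial G/\partial\theta = g(\phi)\,\partial u_\rmp/\partial\theta(\tau;\theta)$ in $(\tau,\phi,\theta)$. This is the main obstacle. The hypotheses only say that $u_\rmp(\tau;\cdot)$ is $C^1$ for each fixed $\tau$, which does not by itself give joint continuity of $\partial u_\rmp/\partial\theta$ in $(\tau,\theta)$.

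I would handle it in one of two ways. The first is to read the standing assumption as requiring $\partial u_\rmp/\partial\theta$ to be continuous on $[0,T]\times\BBR^d$. This holds, for example, for the construction \eqref{eq:u_p_construction}, where $\partial u_\rmp/\partial\theta = [\beta_1(\tau) I_m \ \cdots\ \beta_p(\tau) I_m]$ is continuous in $\tau$ and independent of $\theta$. The second is to note that the standard differentiability theorem for parameter-dependent ODEs (e.g., Khalil, Theorem 3.3 and its extension to differentiability) only requires $G$ and its $(\phi,\theta)$-derivatives to be continuous in $(\phi,\theta)$, and measurable and locally bounded in $\tau$. Either way, the variational argument applies.

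Second, I invoke the differentiability-with-respect-to-initial-conditions-and-parameters theorem. The solution $\phi(\tau;x,\theta)$ exists on $[0,T]$ by the standing existence assumption. Local Lipschitzness of $G$ in $\phi$ gives uniqueness and continuous dependence. The sensitivities $\partial\phi/\partial x$ and $\partial\phi/\partial\theta$ exist and satisfy the variational equations
\begin{align*}
\frac{\partial}{\partial\tau}\frac{\partial \phi}{\partial x} &= \frac{\partial G}{\partial \phi}\,\frac{\partial \phi}{\partial x}, \qquad \frac{\partial \phi}{\partial x}(0)=I_n,\\
\frac{\partial}{\partial\tau}\frac{\partial \phi}{\partial \theta} &= \frac{\partial G}{\partial \phi}\,\frac{\partial \phi}{\partial \theta} + g(\phi)\frac{\partial u_\rmp}{\partial\theta}, \qquad \frac{\partial \phi}{\partial \theta}(0)=0.
\end{align*}
These sensitivities are continuous in $(\tau,x,\theta)$. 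I would obtain this continuity from continuous dependence of the linear variational ODEs on their coefficients, combined with continuity of $\phi$ in $(x,\theta)$ and a Gronwall estimate. Hence $(x,\theta)\mapsto\phi(T;x,\theta)$ is $C^1$.

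Finally, the chain rule gives $\psi_\rmt'(x,\theta) = h_\rmb'(\phi(T;x,\theta))\,\bigl[\frac{\partial\phi}{\partial x}(T;x,\theta)\ \ \frac{\partial\phi}{\partial\theta}(T;x,\theta)\bigr]$, which is continuous, so $\psi_\rmt$ is $C^1$. Since a differentiable function's directional derivative equals its Lie derivative, as noted after \eqref{eq:rh_directional_derivative}, setting $\nu = [ (f(x)+g(x)\hat u)^\top\ \hat\omega^\top]^\top$ yields \eqref{eq:psi_t_derivative}. That expression is affine in $(\hat u,\hat\omega)$.
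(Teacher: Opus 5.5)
Your argument is the same as the paper's. The paper's proof is a single chain-rule step: it asserts that $(x,\theta)\mapsto\phi(T;x,\theta)$ is continuously differentiable, then uses that the directional derivative of a differentiable function equals its Lie derivative. You add the variational-equation justification for smoothness of the flow, which the paper takes for granted.

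The obstacle you flagged is real, and only your first resolution closes it. The literal standing hypotheses do not guarantee that $\partial u_\rmp/\partial\theta$ is even locally bounded in $\tau$, so your second route, the Carath\'eodory-type theorem, has nothing to apply to. For example, take $n=m=d=1$, $f=0$, $g=1$, $\SU=\BBR$, $h_\rmb(x)=x$, and $u_\rmp(\tau;\theta)=\tau\sin(\theta/\tau^2)$ for $\tau>0$ with $u_\rmp(0;\theta)=0$. This plan is continuous on $[0,T]\times\BBR$ and smooth in $\theta$ for each fixed $\tau$. However, $\frac{\partial u_\rmp}{\partial\theta}(\tau;0)=1/\tau$, which is not integrable near $\tau=0$, and
\begin{equation*}
\frac{\psi_\rmt(x,s)-\psi_\rmt(x,0)}{s}=\int_0^{T/\sqrt{s}} r\sin(r^{-2})\,\mathrm{d}r\to\infty \quad \text{as } s\downarrow 0,
\end{equation*}
so $\psi_\rmt$ is not differentiable at $\theta=0$. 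The proposition therefore needs your strengthened reading: $\partial u_\rmp/\partial\theta$ jointly continuous on $[0,T]\times\BBR^d$. The B-spline plan \eqref{eq:u_p_construction} satisfies this. Under that assumption your proof is complete and more careful than the paper's, so drop the second route.
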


\begin{proof}
\indent Since $h_\rmb$ is continuously differentiable and $\phi(T;x,\theta)$ is continuously differentiable on $\BBR^n \times \BBR^d$, it follows that $\psi_\rmt(x,\theta) = h_\rmb(\phi(T;x,\theta))$ is continuously differentiable on $\BBR^n \times \BBR^d$.
Thus, $D_\nu \psi_\rmt(x,\theta) = L_\nu \psi_\rmt(x,\theta)$.
\end{proof}

\begin{remark}\rm
Equations \eqref{eq:def:psi_m} and \eqref{eq:def:psi_t} are two useful candidate P-CBFs used in this article. 
However, mapping the predicted flow $\phi$ through other functionals can yield other potentially useful candidate P-CBFs.
For example, consider the candidate P-CBF obtained by integrating a function of the flow over time, specifically, $\psi_{\rm int}(x,\theta) = a + \int_0^T b(\phi(\tau; x, \theta)) \, \rmd\tau$, where $a \in \BBR$ and $b \colon \BBR^n \to \BBR$ is continuously differentiable.
This candidate P-CBF can be used to capture the energy of the predicted flow. 
\end{remark}

\section{Problem Formulation}\label{sec:problem}

For the remainder of this article, we consider the problem of designing admissible feedback controls $(u,\omega)$ that minimize an integral cost of the predicted flow $\phi$ over the prediction horizon such that the predicted flow $\phi(\cdot;x(t),\theta(t))$ and actual state $x(t)$ are in a prescribed \textit{safe set} $\SC_\rms$ for all time $t\ge 0$. 
The safe set $\SC_\rms$ is given by \eqref{eq:safe_set}, where $h_\rms$ is continuously differentiable and known.
We assume the admissible control sets $\SU$ and $\Omega$ are convex, and $0 \in \Omega$.

Notably, $h_\rms$ is not assumed to be a CBF for \eqref{eq:affine_control} on $\SC_\rms$.
Similarly, $\psi_\rmm$ given by \eqref{eq:def:psi_m} is not assumed to be a P-CBF for \cref{eq:affine_control,eq:theta_dyn} on $\Psi_\rmm$, which is given by \eqref{eq:def:Psi_m}. 
Thus, it is not necessarily possible to make $\SC_\rms$ or $\Psi_\rmm$ forward invariant.

Next, consider the cost $J \colon \BBR^{n} \times \BBR^d \to \BBR$ given by
\begin{equation} \label{eq:def:J}
J(x,\theta) \triangleq W \left ( \phi(T; x,\theta) \right )
 + \int_{0}^{T} R\left(\phi(\tau; x,\theta),u_\rmp(\tau;\theta) \right) \, {\rm d} \tau,
\end{equation}
where $W \colon \BBR^n \to \BBR$ and $R \colon \BBR^n \times \BBR^m \to \BBR$ are continuously differentiable.
The objective is to minimize $J$ while ensuring the predicted flow is in $\SC_\rms$ and the control and control-plan parameters are admissible.
The objective is formalized as follows.

\begin{problem}\label{prob:main}\rm
Design feedback controls for $u(t) \in \SU$ and $\omega(t) \in \Omega$ such that for each time $t\ge 0$, the receding horizon cost $J(x(t),\theta(t))$ is optimized subject to the constraints:
\begin{enumerate}[leftmargin=0.9cm]
\renewcommand{\labelenumi}{(C\arabic{enumi})}
\renewcommand{\theenumi}{(C\arabic{enumi})}

\item\label{con:C1} For all $(t, \tau) \in [0,\infty) \times [0,T]$, $\phi(\tau; x(t),\theta(t)) \in \SC_\rms$.

\item\label{con:C2} For all $t \ge 0$, $\theta(t) \in \Theta$.

\end{enumerate}
\end{problem}

Constraints \ref{con:C1} and \ref{con:C2} can be re-framed in terms of forward invariance. 
Note that \ref{con:C1} and \ref{con:C2} are satisfied if and only if for all $t\ge 0$, $(x(t),\theta(t)) \in \Psi_\rmm$. 
Thus, \ref{con:C1} and \ref{con:C2} are satisfied by designing $u(t) \in \SU$ and $\omega(t) \in \Omega$ that make a subset of $\Psi_\rmm$ forward invariant.

Since $h_\rms$ is not assumed to be a CBF, and $\psi_\rmm$ is not assumed to be a P-CBF, it may not be possible to satisfy \ref{con:C1} and \ref{con:C2}. 
To ensure the problem is well posed, we assume there exists a subset of $\SC_\rms$ that can be made forward invariant with respect to \eqref{eq:affine_control}. 
Specifically, consider a \textit{backup safe set} $\SC_\rmb \subset \SC_\rms$, which is given by \eqref{eq:backup_set}, where $h_\rmb$ is continuously differentiable and known. 
We make the following assumption.

\begin{assumption}\label{ass:backup_set}\rm
There exists a known extended class-$\SK$ function $\alpha_\rmb \colon \BBR \to \BBR$ such that for all $x \in \SC_\rmb$, 
\begin{equation*}
K_\rmb(x) \triangleq \{ \hat u \in \SU \colon L_fh_\rmb(x) + L_gh_\rmb(x) \hat u + \alpha_\rmb(h_\rmb(x)) > 0 \}
\end{equation*}
is nonempty.
\end{assumption}

Assumption~\ref{ass:backup_set} implies that $h_\rmb$ is a CBF for \eqref{eq:affine_control} on $\SC_\rmb$.
Thus, $\SC_\rmb$ can be made forward invariant with respect to \eqref{eq:affine_control}. 
However, $\SC_\rmb$ may be small relative to $\SC_\rms$.
Thus, it is not desirable for the predicted flow $\phi(\tau;x(t),\theta(t))$ to be in $\SC_\rmb$ for all prediction times $\tau \in [0,T]$ or for all real time $t \ge0$.
Such a restriction on $\phi$ can lead to large $J$, that is, poor performance.
Instead, Assumption~\ref{ass:backup_set} is used to introduce a terminal-prediction-time condition on the predicted flow.
Specifically, we consider the condition that for all time $t\ge 0$, the predicted flow $\phi(\cdot;x(t),\theta(t))$ at terminal prediction time $T$ is in $\SC_\rmb$.
This condition can be re-framed in terms of forward invariance by considering $\psi_\rmt$ and $\Psi_\rmt$ given by \eqref{eq:def:psi_t} and \eqref{eq:def:Psi_t}. 
Specifically, for all $t\ge 0$, $\phi(T;x(t),\theta(t)) \in \SC_\rmb$ and $\theta(t) \in \Theta$ if and only if for all $t\ge 0$, $(x(t),\theta(t)) \in \Psi_\rmt$. 
Thus, the terminal condition can be satisfied by designing controls $u(t) \in \SU$ and $\omega(t) \in \Omega$ that make a subset of $\Psi_\rmt$ forward invariant.

Since $J$ can be nonlinear and nonconvex, Problem~\ref{prob:main} cannot necessarily be solved with a convex optimization. 
However, the time derivative of $J$ along the trajectories of \eqref{eq:affine_control} and \eqref{eq:theta_dyn} is
\begin{equation}
\frac{\rmd J}{\rmd t} = \frac{\partial J(x,\theta)}{\partial x} \left [ f(x) + g(x) u \right ] + \frac{\partial J(x,\theta)}{\partial \theta} \omega.
\end{equation}
To make the time derivative of $J$ small, consider the quadratic cost
\begin{align}\label{eq:SJ}
    \SJ(\hat u,\hat \omega; x,\theta) &\triangleq \frac{\partial J(x,\theta)}{\partial x} g(x) \hat u + \frac{\partial J(x,\theta)}{\partial \theta} \hat{\omega} + \hat{\omega}^\top Q_\omega \hat{\omega} \nn\\
    &\quad + [\hat{u} - u_\rmp(0; \theta)]^\top Q_u [\hat{u} - u_\rmp(0; \theta)],
\end{align}
where $Q_u \in \BBR^{m \times m}$ and $Q_\omega \in \BBR^{d \times d}$ are positive definite.
The first 2 terms of \eqref{eq:SJ} make $\frac{\rmd J}{\rmd t}$ small, whereas the other 2 terms provide regularization that make \eqref{eq:SJ} strictly convex.
Specifically, $\hat{\omega}^\top Q_\omega \hat{\omega}$ limits the rate of change of the control plan parameter $\theta$ and $[\hat{u} - u_\rmp(0; \theta)]^\top Q_u [\hat{u} - u_\rmp(0; \theta)]$ limits deviation of the executed control from the control plan.
Hence, minimizing the quadratic cost $\SJ$ subject to \ref{con:C1} and \ref{con:C2} yields gradient flow that aims to decrease $J$ along the trajectories of \eqref{eq:affine_control} and \eqref{eq:theta_dyn}.
Thus, we address Problem~\ref{prob:main} by solving the following problem.

\begin{problem}\label{prob:gradient_flow}\rm
Design feedback controls for $u(t) \in \SU$ and $\omega(t) \in \Omega$ such that for each time $t \ge 0$, the quadratic cost $\SJ(\hat u, \hat \omega; x(t), \theta(t))$ is minimized subject to \ref{con:C1} and \ref{con:C2}.
\end{problem}

The following subsections present solutions to Problem~\ref{prob:gradient_flow} in special circumstances: \textit{A.} $\psi_\rmm$ is a P-CBF, and \textit{B.} $(\psi_\rmm,\psi_\rmt)$ is a P-CBF pair.
These special circumstances can be difficult to satisfy and/or verify, which motivates the remainder of this article, where we present a solution to Problem~\ref{prob:gradient_flow} without these assumptions.

\subsection{Solution if $\psi_\rmm$ is a P-CBF}

This subsection addresses the case where $\psi_\rmm$ is a P-CBF on $\Psi_\rmm$ for \eqref{eq:affine_control} and \eqref{eq:theta_dyn}.
In this case, Definition~\ref{def:pfcbf} implies that there exist extended class-$\SK$ functions $\alpha,\beta$ such that for all $(x,\theta) \in \Psi_\rmm$,
\begin{align}\label{eq:def:Km}
K_\rmm(x,\theta) &\triangleq \{(\hat{u}, \hat{\omega}) \in \SU \times \Omega \colon k^\prime(\theta)\,\hat{\omega} + \beta(k(\theta)) \geq 0, \nn\\
&\quad D_{\left[\begin{smallmatrix} F(x,\hat{u}) \\ \hat{\omega} \end{smallmatrix}\right]} \psi_\rmm(x,\theta) + \alpha(\psi_\rmm(x,\theta)) \geq 0 \}
\end{align}
is nonempty.
The next result shows that $\SJ$ has a unique global minimizer over the set $K_\rmm(x,\theta)$.

\begin{proposition}\label{prop:K_convex}
\rm
Assume $\SU$ and $\Omega$ are convex, and $\psi_\rmm$ is a P-CBF on $\Psi_\rmm$ for \eqref{eq:affine_control} and \eqref{eq:theta_dyn}.
Then, for all $(x,\theta) \in \Psi_\rmm$, $K_\rmm(x,\theta)$ is convex, and $\SJ(\hat u,\hat \omega; x,\theta)$ has a unique global minimizer over $K_\rmm(x,\theta)$.
\end{proposition}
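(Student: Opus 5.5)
The plan is to prove convexity first, by writing $K_\rmm(x,\theta)$ as an intersection of convex sets. Then I would obtain existence and uniqueness of the minimizer from the strong convexity and coercivity of the quadratic $\SJ$.

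\emph{Convexity.} Fix $(x,\theta)\in\Psi_\rmm$. The constraint $k^\prime(\theta)\hat\omega+\beta(k(\theta))\ge 0$ is affine in $(\hat u,\hat\omega)$, so it defines a closed half-space. For the directional-derivative constraint, I would invoke Proposition~\ref{prop:dcbf_iff} with $c=-\alpha(\psi_\rmm(x,\theta))$. That constraint then holds if and only if \eqref{eq:Dpsi_m_affine} holds for every $\tau\in\ST(x,\theta)$. For each fixed $\tau$, \eqref{eq:Dpsi_m_affine} is affine in $(\hat u,\hat\omega)$, so it also defines a closed half-space (or all of $\BBR^m\times\BBR^d$ if the linear part vanishes). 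The set of $(\hat u,\hat\omega)$ satisfying the directional-derivative constraint is therefore an intersection of closed half-spaces indexed by $\ST(x,\theta)$, which is closed and convex. Hence
\[
K_\rmm(x,\theta)=(\SU\times\Omega)\cap\{\text{half-space from }k\}\cap\bigcap_{\tau\in\ST(x,\theta)}\{\text{half-space from }\tau\}.
\]
Since $\SU$ and $\Omega$ are convex by assumption, $K_\rmm(x,\theta)$ is convex. The key point here is that the min-over-horizon structure of $\psi_\rmm$, although nonsmooth, yields constraints that are an intersection of affine ones by Danskin's formula in Proposition~\ref{prop:directional_derivative_danskin}. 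Taking a minimum of linear functionals and requiring it to be $\ge c$ gives exactly such an intersection, never a union.

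\emph{Existence and uniqueness.} Because $\psi_\rmm$ is a P-CBF, $K_\rmm(x,\theta)$ is nonempty. The function $\SJ(\cdot;x,\theta)$ in \eqref{eq:SJ} is a quadratic in $(\hat u,\hat\omega)$ whose Hessian is $2\operatorname{diag}(Q_u,Q_\omega)\succ 0$. It is therefore strongly convex, continuous and coercive. Its sublevel sets intersected with $K_\rmm(x,\theta)$ are then bounded and nonempty for a suitable level, so a minimizing sequence has a convergent subsequence. If the limit lies in $K_\rmm(x,\theta)$, continuity gives that it attains the infimum. Uniqueness follows from strict convexity on a convex set: two distinct minimizers would give a strictly smaller value at their midpoint, which also lies in $K_\rmm(x,\theta)$.

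\emph{Main obstacle.} The delicate step is ensuring that the minimizing limit stays in $K_\rmm(x,\theta)$, i.e., that $K_\rmm(x,\theta)$ is closed. The half-space parts are closed regardless of the index set $\ST(x,\theta)$. So closedness reduces to closedness of $\SU\times\Omega$, which I would take as implicit in the standing assumptions (as for the convex polytopes used later). If it were not assumed, I would add it explicitly; otherwise only an infimum, not a minimizer, could be guaranteed.
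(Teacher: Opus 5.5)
Your argument matches the paper's proof: convexity follows from Proposition~\ref{prop:dcbf_iff} by writing $K_\rmm(x,\theta)$ as $\SU\times\Omega$ intersected with a family of affine half-spaces, and existence and uniqueness follow from nonemptiness (the P-CBF assumption) together with strict convexity of $\SJ$. You are also right that existence of the minimizer additionally needs $K_\rmm(x,\theta)$ to be closed, hence $\SU\times\Omega$ closed; this holds for the polytopes used later but is not among the stated hypotheses, and the paper's one-line existence argument passes over it.
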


\begin{proof}
\indent Since $F(x,\hat{u}) = f(x) + g(x)\hat{u}$ is affine in $\hat{u}$, it follows from \eqref{eq:def:Km} and Proposition~\ref{prop:dcbf_iff} that $K_\rmm(x,\theta)$ is the intersection of the convex sets $\SU$ and $\Omega$ with a family of affine half-spaces, and thus, it is convex.

Since $K_\rmm(x,\theta)$ is nonempty and convex, and $\SJ$ is strictly convex in $(\hat{u},\hat{\omega})$, the minimizer exists and is unique.
\end{proof}

The next result solves Problem~\ref{prob:gradient_flow} where $\psi_\rmm$ is a P-CBF.

\begin{corollary}\label{cor:pcbf_solution}\rm
Assume $\SU$ and $\Omega$ are convex, and $\psi_\rmm$ is a P-CBF on $\Psi_\rmm$ for \eqref{eq:affine_control} and \eqref{eq:theta_dyn}.
For all $(x,\theta) \in \Psi_\rmm$, define
\begin{equation}\label{eq:pcbf_solution}
\left ( u_*(x,\theta),\omega_*(x,\theta) \right )  \triangleq \operatorname*{argmin}_{(\hat{u},\hat{\omega}) \in K_\rmm(x,\theta)}     \SJ(\hat u,\hat \omega; x,\theta).
\end{equation}
Then, $\Psi_\rmm$ is forward invariant with respect to \eqref{eq:affine_control} and \eqref{eq:theta_dyn} with $(u,\omega) = (u_* , \omega_*)$.
Furthermore, \ref{con:C1} and \ref{con:C2} are satisfied.
\end{corollary}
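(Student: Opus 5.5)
The plan is to reduce the corollary to Theorem~\ref{thm:pfcbf_forward_invariance} with $\ell=1$, $\psi_1=\psi_\rmm$, and $\Psi=\Psi_\rmm$. First I will check well-posedness of the feedback \eqref{eq:pcbf_solution}. Proposition~\ref{prop:K_convex} gives that, for every $(x,\theta)\in\Psi_\rmm$, $K_\rmm(x,\theta)$ is nonempty and convex, and that $\SJ(\cdot,\cdot;x,\theta)$ has a unique global minimizer over it. Hence $u_*\colon\Psi_\rmm\to\SU$ and $\omega_*\colon\Psi_\rmm\to\Omega$ are well-defined single-valued maps, and by construction $(u_*(x,\theta),\omega_*(x,\theta))\in K_\rmm(x,\theta)$.

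Next I will identify $K_\rmm$ with the map $K_\Psi$ of Definition~\ref{def:pfcbf}. Comparing \eqref{eq:def:Km} with \eqref{eq:pfcbf_K} for $\ell=1$, $\alpha_1=\alpha$, $\psi_1=\psi_\rmm$, the two sets coincide, and $\Psi$ in \eqref{eq:def:Psi} coincides with $\Psi_\rmm$ in \eqref{eq:def:Psi_m}. The regularity hypotheses of Definition~\ref{def:pfcbf}, namely that $\psi_\rmm$ is locally Lipschitz and directionally differentiable, are supplied by Proposition~\ref{prop:directional_derivative_danskin}. Theorem~\ref{thm:pfcbf_forward_invariance} then yields forward invariance of $\Psi_\rmm$ under $(u,\omega)=(u_*,\omega_*)$, using the standing assumption that closed-loop solutions exist and are unique.

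Finally I will translate invariance into the constraints. For $(x_0,\theta_0)\in\Psi_\rmm$ and all $t\ge0$, we have $(x(t),\theta(t))\in\Psi_\rmm$, so $k(\theta(t))\ge0$, that is, $\theta(t)\in\Theta$, which is \ref{con:C2}. We also have $\psi_\rmm(x(t),\theta(t))\ge0$. By \eqref{eq:def:psi_m} this means $h_\rms(\phi(\tau;x(t),\theta(t)))\ge0$ for all $\tau\in[0,T]$, which is \ref{con:C1}.

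There is no substantive obstacle here. The only delicate point is that Theorem~\ref{thm:pfcbf_forward_invariance} requires merely a pointwise selection from $K_\Psi$. Continuity of the argmin feedback, which would normally be needed to guarantee existence of closed-loop solutions, is covered by the paper's blanket assumption that solutions exist and are unique, so it need not be proved.
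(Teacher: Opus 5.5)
Your proposal is correct and matches the paper's proof. The paper likewise applies Theorem~\ref{thm:pfcbf_forward_invariance} (with $\ell=1$), using that the minimizer lies in $K_\rmm(x,\theta)$, and then reads off \ref{con:C1} and \ref{con:C2} from $\psi_\rmm(x(t),\theta(t))\ge 0$ and $k(\theta(t))\ge 0$. Your extra checks are consistent with that argument and simply make explicit what the paper leaves implicit: well-posedness of the argmin via Proposition~\ref{prop:K_convex}, and reliance on the blanket existence-and-uniqueness assumption for closed-loop solutions.
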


\begin{proof}
\indent Since $\psi_\rmm$ is a P-CBF and $(u_*(x,\theta), \omega_*(x,\theta)) \in K_\rmm(x,\theta)$, Theorem~\ref{thm:pfcbf_forward_invariance} implies that $\Psi_\rmm$ is forward invariant with respect to \eqref{eq:affine_control} and \eqref{eq:theta_dyn} with $(u,\omega) = (u_*, \omega_*)$.

Since $(x(t),\theta(t)) \in \Psi_\rmm$ implies $\psi_\rmm(x(t),\theta(t)) \geq 0$ and $k(\theta(t)) \geq 0$, it follows from \cref{eq:def:psi_m,eq:theta_set} that \ref{con:C1} and \ref{con:C2} are satisfied.
\end{proof}

\begin{remark}
\label{remark:PCBF-QP}
\rm
If $\SU$ and $\Omega$ are convex polytopes, then Proposition~\ref{prop:dcbf_iff} implies that all constraints that define $K_\rmm(x,\theta)$ are affine.
In this case, the optimal control \eqref{eq:pcbf_solution} can be obtained efficiently from a QP that depends on the predicted flow.
Section~\ref{sec:impl_remark} addresses implementation of these QPs.
\end{remark}

\begin{remark}
\label{remark:PCBF-QP.2}
\rm
Corollary~\ref{cor:pcbf_solution} requires that $\psi_\rmm$ is a P-CBF, which is generally difficult to satisfy and verify because it requires that for each $(x,\theta) \in \Psi_\rmm$, there exist $(\hat{u},\hat{\omega}) \in \SU\times\Omega$ that satisfies both constraints in $K_\rmm(x,\theta)$.
In general, control input constraints (i.e., $\SU$ and $\Omega$) can lead to points where $K_\rmm(x,\theta)$ is empty.
Even if $\psi_\rmm$ is a P-CBF, it may be difficult to determine the class-$\SK$ functions $\alpha,\beta$ such that $K_\rmm(x,\theta)$ is nonempty for all $(x,\theta) \in \Psi_\rmm$.
\end{remark}

\subsection{Solution if $(\psi_\rmm,\psi_\rmt)$ is a P-CBF pair}

This subsection addresses the case where $(\psi_\rmm,\psi_\rmt)$ is a P-CBF pair on $\Psi_{\rmm \rmt} \triangleq \Psi_\rmm \cap \Psi_\rmt$, which is a subset of $\Psi_\rmm$ created by imposing the condition that the predicted flow $\phi$ is in the backup safe set at the terminal prediction time. 
Since $(\psi_\rmm,\psi_\rmt)$ is a P-CBF pair, Definition~\ref{def:pfcbf} implies that there exist extended class-$\SK$ functions $\alpha_1,\alpha_2,\beta$ such that for all $(x,\theta) \in \Psi_{\rmm \rmt}$,
\begin{align}\label{eq:def:Kmt}
K_{\rmm \rmt}(x,\theta) &\triangleq \{(\hat{u}, \hat{\omega}) \in \SU \times \Omega \colon k^\prime(\theta)\,\hat{\omega} + \beta(k(\theta)) \geq 0, \nn\\
&\qquad D_{\left[\begin{smallmatrix} F(x,\hat{u}) \\ \hat{\omega} \end{smallmatrix}\right]} \psi_\rmm(x,\theta) + \alpha_1(\psi_\rmm(x,\theta)) \geq 0, \nn\\
&\qquad L_{\left[\begin{smallmatrix} f(x)+g(x)\hat{u} \\ \hat{\omega} \end{smallmatrix}\right]} \psi_\rmt(x,\theta) + \alpha_2(\psi_\rmt(x,\theta)) \geq 0 \}
\end{align}
is nonempty.
The next result shows that $\SJ$ has a unique global minimizer over $K_{\rmm \rmt}(x,\theta)$.
The proof is similar to that of Proposition~\ref{prop:K_convex}.

\begin{proposition}\label{prop:K_convex_mt}
\rm
Assume $\SU$ and $\Omega$ are convex, and $(\psi_\rmm,\psi_\rmt)$ is a P-CBF pair on $\Psi_{\rmm \rmt}$ for \eqref{eq:affine_control} and \eqref{eq:theta_dyn}.
Then, for all $(x,\theta) \in \Psi_{\rmm \rmt}$, $K_{\rmm \rmt}(x,\theta)$ is convex, and $\SJ(\hat u,\hat \omega; x,\theta)$ has a unique global minimizer over $K_{\rmm \rmt}(x,\theta)$.
\end{proposition}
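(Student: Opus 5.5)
The plan follows the proof of Proposition~\ref{prop:K_convex}, adding one step for the terminal constraint. Fix $(x,\theta) \in \Psi_{\rmm\rmt}$ and decompose $K_{\rmm\rmt}(x,\theta)$ in \eqref{eq:def:Kmt} as the intersection of four sets:
\begin{enumerate*}[label=(\roman*)]
\item $\SU \times \Omega$;
\item the half-space $\{(\hat u,\hat\omega) \colon k^\prime(\theta)\hat\omega + \beta(k(\theta)) \ge 0\}$;
\item the set defined by the $\psi_\rmm$ directional-derivative constraint;
\item the set defined by the $\psi_\rmt$ Lie-derivative constraint.
\end{enumerate*}
The set $\SU\times\Omega$ is convex by assumption. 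The set in (ii) is a closed half-space, since $k^\prime(\theta)$ and $\beta(k(\theta))$ are fixed numbers once $(x,\theta)$ is fixed.

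For (iii), apply Proposition~\ref{prop:dcbf_iff} with $c = -\alpha_1(\psi_\rmm(x,\theta))$. The constraint then holds if and only if \eqref{eq:Dpsi_m_affine} holds for every $\tau \in \ST(x,\theta)$. For each fixed $\tau$, the left-hand side of \eqref{eq:Dpsi_m_affine} is affine in $(\hat u,\hat\omega)$, because $F(x,\hat u) = f(x)+g(x)\hat u$ is affine in $\hat u$. So (iii) is an intersection of half-spaces, possibly infinitely many, and is therefore convex.

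For (iv), Proposition~\ref{prop:psi_t_smooth} gives that $\psi_\rmt$ is continuously differentiable. The constraint is therefore
\[
\frac{\partial \psi_\rmt}{\partial x}(x,\theta)\,[f(x)+g(x)\hat u] + \frac{\partial \psi_\rmt}{\partial \theta}(x,\theta)\,\hat\omega + \alpha_2(\psi_\rmt(x,\theta)) \ge 0,
\]
which is a single affine half-space in $(\hat u,\hat\omega)$. An intersection of convex sets is convex, so $K_{\rmm\rmt}(x,\theta)$ is convex.

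For existence and uniqueness of the minimizer, the P-CBF pair assumption gives that $K_{\rmm\rmt}(x,\theta)$ is nonempty. The cost $\SJ$ is a linear function plus $\hat\omega^\top Q_\omega\hat\omega + [\hat u - u_\rmp(0;\theta)]^\top Q_u[\hat u - u_\rmp(0;\theta)]$ with $Q_u, Q_\omega$ positive definite. Hence $\SJ$ is strictly convex and coercive in $(\hat u,\hat\omega)$, and uniqueness follows immediately from strict convexity on a convex set.

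The main obstacle is existence, which needs closedness of $K_{\rmm\rmt}(x,\theta)$; the proof of Proposition~\ref{prop:K_convex} passes over this point. The half-spaces in (ii), (iii) and (iv) are all closed. If $\SU$ and $\Omega$ are closed, I would argue as follows: pick any feasible point, observe that the corresponding sublevel set of $\SJ$ intersected with $K_{\rmm\rmt}(x,\theta)$ is compact by coercivity, and apply Weierstrass to get a minimizer. If $\SU$ and $\Omega$ are not assumed closed, I would state the result under that implicit assumption, matching the level of rigor of Proposition~\ref{prop:K_convex}, rather than attempt anything more delicate.
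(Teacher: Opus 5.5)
Your proposal is correct and follows the paper's route. The paper just says the proof mirrors Proposition~\ref{prop:K_convex}: by Proposition~\ref{prop:dcbf_iff}, the $\psi_\rmm$ constraint is an intersection of affine half-spaces; the $\psi_\rmt$ constraint is a single half-space because $\psi_\rmt$ is continuously differentiable; and strict convexity of $\SJ$ gives uniqueness. Your existence remark is a valid refinement, not a gap: the paper's step ``$K_{\rmm\rmt}(x,\theta)$ nonempty and convex, and $\SJ$ strictly convex, hence a minimizer exists'' tacitly needs $\SU \times \Omega$ to be closed (all the half-spaces are closed), and without closedness the conclusion can fail (e.g., $\SU$ an open interval with the unconstrained minimizer of $\SJ$ outside it), so your coercivity-plus-Weierstrass argument under that assumption is exactly what is needed.
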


The following result solves Problem~\ref{prob:gradient_flow} where $(\psi_\rmm,\psi_\rmt)$ is a P-CBF pair.
The proof is similar to that of Corollary~\ref{cor:pcbf_solution}.

\begin{corollary}
\label{cor:pcbf_pair_solution}
\rm
Assume $\SU$ and $\Omega$ are convex, and $(\psi_\rmm,\psi_\rmt)$ is a P-CBF pair on $\Psi_{\rmm \rmt}$ for \eqref{eq:affine_control} and \eqref{eq:theta_dyn}.
For all $(x,\theta) \in \Psi_{\rmm \rmt}$, define
\begin{equation}\label{eq:pcbf_pair_solution}
\left ( u_*(x,\theta),\omega_*(x,\theta) \right )  \triangleq \operatorname*{argmin}_{(\hat{u},\hat{\omega}) \in K_{\rmm \rmt}(x,\theta)}     \SJ(\hat u,\hat \omega; x,\theta).
\end{equation}
Then, $\Psi_{\rmm \rmt}$ is forward invariant with respect to \eqref{eq:affine_control} and \eqref{eq:theta_dyn} with $(u,\omega) = (u_* , \omega_*)$.
Furthermore, \ref{con:C1} and \ref{con:C2} are satisfied.
\end{corollary}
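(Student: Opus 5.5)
The plan mirrors the proof of Corollary~\ref{cor:pcbf_solution}, with $\Psi_\rmm$ replaced by $\Psi_{\rmm\rmt}$ and $K_\rmm$ replaced by $K_{\rmm\rmt}$.

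\emph{Step 1 (well-posedness).} By Proposition~\ref{prop:K_convex_mt}, for every $(x,\theta) \in \Psi_{\rmm\rmt}$ the set $K_{\rmm\rmt}(x,\theta)$ is nonempty and convex, and $\SJ$ has a unique minimizer over it. Hence $(u_*,\omega_*)$ in \eqref{eq:pcbf_pair_solution} is a well-defined map from $\Psi_{\rmm\rmt}$ into $\SU\times\Omega$. By construction, $(u_*(x,\theta),\omega_*(x,\theta)) \in K_{\rmm\rmt}(x,\theta)$ for all such $(x,\theta)$.

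\emph{Step 2 (invariance).} First check that $K_{\rmm\rmt}$ is exactly the map $K_\Psi$ of Definition~\ref{def:pfcbf} with $\ell=2$, $\psi_1=\psi_\rmm$, $\psi_2=\psi_\rmt$, and $\Psi = \Psi_{\rmm\rmt}$. The only difference is that the third constraint in \eqref{eq:def:Kmt} is written with a Lie derivative. By Proposition~\ref{prop:psi_t_smooth}, specifically \eqref{eq:psi_t_derivative}, this Lie derivative equals the directional derivative along $[F(x,\hat u)^\top\ \hat\omega^\top]^\top$. Also, $\Psi_{\rmm\rmt} = \Psi_\rmm\cap\Psi_\rmt$ coincides with the set \eqref{eq:def:Psi} for this pair. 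The regularity required in Definition~\ref{def:pfcbf} holds: $\psi_\rmm$ is locally Lipschitz and directionally differentiable by Proposition~\ref{prop:directional_derivative_danskin}, and $\psi_\rmt$ is $C^1$, hence also locally Lipschitz and directionally differentiable. Theorem~\ref{thm:pfcbf_forward_invariance} with $(u_{\rm fi},\omega_{\rm fi})=(u_*,\omega_*)$ then gives forward invariance of $\Psi_{\rmm\rmt}$.

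\emph{Step 3 (constraints).} If $(x(0),\theta(0)) \in \Psi_{\rmm\rmt}$, then for all $t\ge0$ we have $(x(t),\theta(t)) \in \Psi_{\rmm\rmt} \subseteq \Psi_\rmm$. Hence $\psi_\rmm(x(t),\theta(t))\ge0$ and $k(\theta(t))\ge0$. By \eqref{eq:def:psi_m}, the first inequality gives $\phi(\tau;x(t),\theta(t))\in\SC_\rms$ for all $\tau\in[0,T]$, which is \ref{con:C1}. By \eqref{eq:theta_set}, the second gives $\theta(t)\in\Theta$, which is \ref{con:C2}.

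The only mildly delicate point is the identification in Step 2, namely that the Lie-derivative form of the $\psi_\rmt$ constraint matches the directional-derivative form demanded by Definition~\ref{def:pfcbf}. This is settled directly by Proposition~\ref{prop:psi_t_smooth}. Everything else is inherited from earlier results.
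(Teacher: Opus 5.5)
Your proposal is correct and follows the paper's route. The paper says only that the proof is similar to that of Corollary~\ref{cor:pcbf_solution}: invoke Theorem~\ref{thm:pfcbf_forward_invariance} using $(u_*,\omega_*) \in K_{\rmm\rmt}$, then read off \ref{con:C1} and \ref{con:C2} from $\psi_\rmm \ge 0$ and $k \ge 0$. Your explicit matching of the Lie-derivative constraint for $\psi_\rmt$ to the directional-derivative form via Proposition~\ref{prop:psi_t_smooth}, and your well-posedness check via Proposition~\ref{prop:K_convex_mt}, spell out steps the paper leaves implicit.
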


\begin{remark}\rm
Similar to Remark~\ref{remark:PCBF-QP}, if $\SU$ and $\Omega$ are convex polytopes, then Proposition~\ref{prop:dcbf_iff} implies that all constraints that define $K_{\rmm \rmt}(x,\theta)$ are affine.
In this case, the optimal control \eqref{eq:pcbf_pair_solution} is the solution to a QP.
\end{remark}

\begin{remark}\rm
The optimal control \eqref{eq:pcbf_pair_solution} in this subsection makes $\Psi_{\rmm \rmt}$ forward invariant in the case where $(\psi_\rmm,\psi_\rmt)$ is a P-CBF pair.
It is worth noting that $\Psi_{\rmm \rmt} \subset \Psi_{\rmm}$ because \eqref{eq:pcbf_pair_solution} is concerned not only with keeping the predicted flow $\phi(\cdot; x(t),\theta(t))$ in the safe set $\SC_\rms$ but also with keeping the predicted flow $\phi(T; x(t),\theta(t))$ at the terminal prediction time $T$ in the backup safe set $\SC_\rmb$.
This terminal-prediction constraint focuses on achieving forward invariance for the subset $\Psi_{\rmm \rmt}$ because $\Psi_{\rmm}$ cannot generally be made forward invariant if $\psi_\rmm$ is not a P-CBF. 
Intuitively, it may seem more likely that $\Psi_{\rmm \rmt}$ can be made forward invariant than $\Psi_{\rmm}$.
However, the constraints in $K_{\rmm \rmt}$ are also more restrictive than those in $K_{\rmm}$.
Similar to Remark~\ref{remark:PCBF-QP.2}, it is generally difficult to satisfy and/or verify the condition that $(\psi_\rmm,\psi_\rmt)$ is a P-CBF pair.
Even if $(\psi_\rmm,\psi_\rmt)$ is a P-CBF pair, it may be difficult to determine the class-$\SK$ functions $\alpha_1,\alpha_2,\beta$ such that $K_{\rmm \rmt}(x,\theta)$ is nonempty for all $(x,\theta) \in \Psi_{\rmm \rmt}$.
\end{remark}

The optimal controls in Corollaries~\ref{cor:pcbf_solution} and \ref{cor:pcbf_pair_solution} rely on assumptions that are difficult to satisfy and/or verify. 
The following section addresses these challenges by introducing a planning-time shift that guarantees feasibility of a convex optimization used to obtain optimal controls without requiring that $\psi_\rmm$ is a P-CBF or that $(\psi_\rmm,\psi_\rmt)$ is a P-CBF pair.
Moreover, this convex optimization is a QP in the case where $\SU$ and $\Omega$ are convex polytopes.

\section{Safe Optimal Flow Control}
\label{sec:control_forward_invariant}

This section solves Problem~\ref{prob:gradient_flow} by introducing a planning-time shift that guarantees feasibility of the optimization that determines $(u,\omega)$. 
The key idea is to augment the state with a planning-time shift, which allows the prediction horizon to shrink as necessary to guarantee feasibility.

\subsection{Predicted Flow with Planning Time Shift}

We extend the definition \eqref{eq:flow_def} for the predicted flow $\phi(\cdot;x,\theta)$ to allow for a time shift in the control plan $u_\rmp(\cdot;\theta)$.
Let $\gamma \in [0,T]$ be the \textit{planning-time shift}, and let the predicted flow $\varphi(\cdot;x,\theta,\gamma) \colon [\gamma, T] \to \BBR^n$ satisfy
\begin{equation}\label{eq:flow_def_shifted}
\varphi(\tau; x, \theta, \gamma) =
x + \int_{\gamma}^{\tau} F \!\left( \varphi(\sigma; x, \theta, \gamma),\, u_\rmp(\sigma; \theta)\right) \, {\rm d} \sigma.
\end{equation}
For $\gamma=0$, $\varphi(\tau; x, \theta, 0)$ reduces to \eqref{eq:flow_def}.
In other words, $\varphi(\tau; x, \theta, 0)$ is the solution to
\eqref{eq:affine_control} at planning time $\tau$, where $x$ is the initial condition and $u(t) = u_\rmp(t; \theta)$.
For $\gamma \in (0,T]$, $\varphi(\tau; x, \theta, \gamma)$ is the solution to \eqref{eq:affine_control} at planning time $\tau-\gamma$, where $x$ is the initial condition and $u(t)=u_\rmp(t+\gamma;\theta)$, which implies that the control plan is shifted by $\gamma$.
Differentiating \eqref{eq:flow_def_shifted} with respect to $\tau$ yields
\begin{equation*}%\label{eq:predicted_dynamics}
\frac{\partial \varphi}{\partial \tau}(\tau; x, \theta, \gamma)
= F\big(\varphi(\tau; x, \theta, \gamma), u_\rmp(\tau; \theta)\big),
\end{equation*}
which is the evolution of the predicted flow $\varphi$ given $(x,\theta,\gamma)$.

The planning-time shift $\gamma$ is the continuous-time analogue to shrinking the horizon in discrete-time receding-horizon control.
In this work, we use the planning-time shift $\gamma$ to guarantee feasibility of an optimization that is used to generate the control and update the control plan. 
This can be viewed as the continuous-time analogue to the recursive-feasibility approach in discrete-time MPC \cite{mayne2000}. 
As $\gamma$ increases, the remaining prediction window $[\gamma, T]$ shrinks.

Similar to the approach used with $\theta$, we introduce dynamics to influence the time evolution of $\gamma$.
Specifically, let $\gamma \colon [0,\infty) \to \BBR$ be the solution to 
\begin{equation}\label{eq:gamma_dyn}
\dot{\gamma}(t) = z(t),
\end{equation}
where $\gamma(0) = \gamma_0 \in [0,T]$, $z \colon [0,\infty) \to \SZ$ is the control input to the integrator, $\SZ \subseteq \BBR$ is a closed interval, and $1\in \SZ$.

For notational convenience, we write \eqref{eq:affine_control}, \eqref{eq:theta_dyn}, and \eqref{eq:gamma_dyn} as
\begin{equation}\label{eq:augmented}
\dot{\bar{x}}(t) = \bar{f}(\bar{x}) + \bar{g}(\bar{x}) \bar{u}(t),
\end{equation}
where
 \begin{gather}
 \bar x \triangleq \begin{bmatrix} x \\ \theta\\ \gamma \end{bmatrix}, \quad
\bar x_0 \triangleq \begin{bmatrix} x_0 \\ \theta_0 \\ \gamma_0 \end{bmatrix},\quad
 \bar{u} \triangleq \begin{bmatrix} u \\ \omega \\ z \end{bmatrix}, \label{eq:augmented.2}\\
                 \bar{f}( \bar{x}) \triangleq \begin{bmatrix}
                       f(x)\\
                       0_{(d+1) \times 1}
                 \end{bmatrix},\quad
\bar{g}(\bar{x}) \triangleq \begin{bmatrix}
                   g(x) & 0_{n \times (d+1)} \\
                   0_{(d+1) \times m} & I_{d+1}
           \end{bmatrix},\label{eq:augmented.3}
\end{gather}
where $\bar n \triangleq n+d+1$ and $\bar m \triangleq m+d+1$.

\subsection{A Control Forward Invariant Set}

Consider predicted-flow barrier functions
\begin{gather}
\bar\psi_\rmm(\bar{x}) \triangleq \min\limits_{\tau \in [\gamma, T]} h_{\rms}(\varphi(\tau ; \bar{x})), \label{eq:psi_min}\\
\bar \psi_\rmt(\bar{x}) \triangleq h_\rmb(\varphi(T ; \bar{x})), \label{eq:psi_b}
\end{gather}
which are analogous to \eqref{eq:def:psi_m} and \eqref{eq:def:psi_t} but incorporate the planning-time shift. 
Next, define 
\begin{align}\label{eq:def:barPsi}
\bar \Psi &\triangleq \{\bar{x} \in \BBR^{\bar n} \colon \bar\psi_\rmm(\bar{x}) \ge 0, \, \bar\psi_\rmt(\bar{x}) \ge 0, \nn\\
&\qquad k(\theta) \ge 0, \, \gamma \in [0,T] \},
\end{align}
which is the set of $\bar x$ such that the predicted flow satisfies $\varphi(\tau;\bar{x}) \in \SC_\rms$ for all prediction times $\tau \in [\gamma,T]$; the predicted flow satisfies the terminal condition $\varphi(T;\bar{x}) \in \SC_\rmb$; the control-plan parameter $\theta$ is in the admissible set $\Theta$; and the planning-time shift $\gamma$ is in the admissible set $[0,T]$.
Note that $\bar \Psi$ is analogous to $\Psi_{\rmm \rmt}$ but incorporates the planning-time shift.

The remainder of this subsection focuses on demonstrating that there exists a control $\bar u$ that makes $\bar \Psi$ forward invariant.
Consider the backup control $u_\rmb \colon \SC_\rmb \to \SU$ defined by
\begin{equation}\label{eq:backup_controller}
u_\rmb(x) \triangleq \operatorname*{argmin}_{\hat{u} \in K_\rmb(x)} \|\hat{u}\|^2,
\end{equation}
which exists and is unique because $\|\hat u \|^2$ is strictly convex, and $K_\rmb(x)$ is nonempty and convex.
The next result demonstrates that $u_\rmb$ makes $\SC_\rmb$ forward invariant.
The proof is in the appendix. 

\begin{proposition}\label{prop:backup_controller_exists}
\rm
Assume Assumption~\ref{ass:backup_set} is satisfied.
Then, $\SC_\rmb$ is forward invariant with respect to \eqref{eq:affine_control} with $u=u_\rmb$.
Furthermore, if $\SU$ is compact, then $u_\rmb$ is continuous on $\SC_\rmb$.
\end{proposition}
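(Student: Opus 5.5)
The plan has two parts: forward invariance, which follows from Theorem~\ref{prop:vector_dcbf_invariance}, and continuity, which I will get from a parametric-optimization argument.

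\textbf{Forward invariance.} Take $\ell=1$ and $h_1=h_\rmb$. Since $h_\rmb$ is continuously differentiable, the directional derivative equals the Lie derivative: $D_{F(x,\hat u)}h_\rmb(x)=L_fh_\rmb(x)+L_gh_\rmb(x)\hat u$. Assumption~\ref{ass:backup_set} says that for all $x\in\SC_\rmb$ the strict-inequality set $K_\rmb(x)$ is nonempty. This set is contained in the non-strict set $K_\rmv(x)$ of \eqref{eq:vector_dcbf_K} with $\alpha_1=\alpha_\rmb$, so $h_\rmb$ is a D-CBF $1$-tuple on $\SC_\rmb$. By definition $u_\rmb(x)\in K_\rmb(x)\subseteq K_\rmv(x)$, and Theorem~\ref{prop:vector_dcbf_invariance} then gives forward invariance of $\SC_\rmb$ under $u=u_\rmb$.

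There is one subtlety. $K_\rmb(x)$ is defined by a strict inequality, so it is not closed, and the existence of a minimum-norm element in \eqref{eq:backup_controller} is asserted rather than proved. I will take it as given, as the paper does. If needed, one can instead interpret $u_\rmb$ as the minimizer over the closure, whose elements still lie in $K_\rmv(x)$. The invariance argument goes through either way.

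\textbf{Continuity when $\SU$ is compact.} I would apply Berge's maximum theorem, or more precisely its uniqueness corollary: if the feasible map is continuous (both upper and lower hemicontinuous) with nonempty compact values, and the objective is continuous and has a unique minimizer, then the argmin map is single-valued and continuous. The objective $\|\hat u\|^2$ is continuous and strictly convex, which gives uniqueness on convex sets.

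I will work with the closed map $\bar K(x)=\{\hat u\in\SU: a(x)+b(x)\hat u\ge 0\}$, where $a(x)=L_fh_\rmb(x)+\alpha_\rmb(h_\rmb(x))$ and $b(x)=L_gh_\rmb(x)$ are continuous. Upper hemicontinuity of $\bar K$, equivalently a closed graph in the compact set $\SU$, follows from continuity of $a$ and $b$.

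\textbf{Main obstacle: lower hemicontinuity.} This is where the strict inequality in Assumption~\ref{ass:backup_set} matters; it acts as a Slater condition. Fix $x$ and any $\hat u\in\bar K(x)$, and pick a Slater point $u_0\in K_\rmb(x)$, so that $a(x)+b(x)u_0>0$. For $x_k\to x$, the convex combinations $u_\lambda=(1-\lambda)\hat u+\lambda u_0$ with $\lambda>0$ are strictly feasible at $x$. By continuity they are feasible at $x_k$ for $k$ large. Choosing $\lambda_k\downarrow 0$ slowly enough yields $u_k\in\bar K(x_k)$ with $u_k\to\hat u$, which is lower hemicontinuity.

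It remains to check that the minimizer over $\bar K(x)$ coincides with $u_\rmb(x)$. If the minimizer sits on the boundary, where $a+b\hat u=0$, then it does not lie in $K_\rmb(x)$ itself. So I will interpret $u_\rmb$ as the minimizer over the closure $\bar K(x)$, which equals the closure of $K_\rmb(x)$ because of the Slater point. With that interpretation, Berge's theorem gives continuity of $u_\rmb$ on $\SC_\rmb$.
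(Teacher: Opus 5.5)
Your proposal is correct and follows the paper's route. Forward invariance comes from Theorem~\ref{prop:vector_dcbf_invariance} with $\ell=1$. Continuity comes from a Slater point that makes the constraint map continuous, combined with a Berge-type maximum theorem; the paper cites \cite[Theorems~5.1 and~5.3]{borrelli2017predictive} where you prove lower hemicontinuity by hand. You also point out that, because of the strict inequality in $K_\rmb(x)$, the minimum in \eqref{eq:backup_controller} need not be attained, and you fix this by minimizing over the closed set $\{\hat u\in\SU\colon a(x)+b(x)\hat u\ge 0\}$. That fix agrees with the paper's $u_\rmb$ whenever the latter exists, and it is a genuine refinement of a point the paper's proof glosses over.
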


Let $\bar{u}_{\rm fb} \colon \bar\Psi \to \SU \times \Omega \times \SZ$ be given by
\begin{equation}\label{eq:def:ubar_fb}
    \bar{u}_{\rm fb}(\bar x) \triangleq \begin{cases}
\bar{u}_\rmp(\bar x), & \gamma \in [0,T),\\
\bar{u}_\rmb(\bar x), & \gamma = T,
    \end{cases}
\end{equation}
where $\bar{u}_\rmp,\bar{u}_\rmb \colon \bar\Psi \to \SU \times \Omega \times \SZ$ are given by
\begin{equation}\label{eq:def:ubar_p_ubar_b}
\bar{u}_\rmp(\bar x) \triangleq \begin{bmatrix} u_\rmp(\gamma;\theta) \\ 0_{d \times 1} \\ 1 \end{bmatrix},\qquad
\bar{u}_\rmb(\bar x) \triangleq \begin{bmatrix} u_\rmb(x) \\ 0_{d \times 1} \\ 0 \end{bmatrix}.
\end{equation}
It follows from \cref{eq:theta_dyn,eq:gamma_dyn,eq:def:ubar_fb,eq:def:ubar_p_ubar_b} that the control $\bar u = \bar{u}_{\rm fb}$ results in constant control-plan parameters, that is, $\theta(t) = \theta_0$, and a planning-time shift that increases linearly with time until it reaches $T$, specifically, $\gamma(t) = \min \{ t+\gamma_0,T \}$.
Thus, $\bar{u}_{\rm fb}$ results in control $u$ that follows the control plan $u_\rmp$ for $t \in [0,T-\gamma_0)$ and switches to the backup control $u_\rmb$ at $t=T-\gamma_0$.
The next result shows that $\bar{u}_{\rm fb}$ makes $\bar\Psi$ forward invariant.

\begin{proposition}\label{prop:forward_inv}
\rm
Assume Assumption~\ref{ass:backup_set} is satisfied.
Then, $\bar\Psi$ is forward invariant with respect to \cref{eq:augmented,eq:augmented.2,eq:augmented.3} with $\bar{u} = \bar{u}_{\rm fb}$.
Furthermore, for all $\bar x_0 \in \bar\Psi$, the following hold:
\begin{enumerate}[leftmargin=0.7cm]

\item \label{prop:forward_inv.3}
For all $t \in [T-\gamma_0,\infty)$, $x(t) \in \SC_\rmb$.

\item \label{prop:forward_inv.1}
If $\SU$ is compact, then $\bar u_{\rm fb}(\bar x(\cdot))$ is continuous on $[0,\infty) \setminus \{ T-\gamma_0 \}$.

\end{enumerate}
\end{proposition}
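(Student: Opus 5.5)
The plan is to track the closed-loop trajectory explicitly; no barrier-derivative argument is needed. Fix $\bar x_0 = (x_0,\theta_0,\gamma_0) \in \bar\Psi$. Under $\bar u = \bar u_{\rm fb}$, \cref{eq:theta_dyn,eq:gamma_dyn} give $\theta(t) = \theta_0$ and $\gamma(t) = \min\{t+\gamma_0,T\}$. Hence $k(\theta(t)) = k(\theta_0) \ge 0$ and $\gamma(t) \in [0,T]$ for all $t$, which handles the last two conditions in \eqref{eq:def:barPsi}. It remains to show $\bar\psi_\rmm(\bar x(t)) \ge 0$ and $\bar\psi_\rmt(\bar x(t)) \ge 0$. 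I split into two phases: $t \in [0, T-\gamma_0]$ and $t \ge T-\gamma_0$.

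\emph{Phase 1.} For $t \in [0,T-\gamma_0)$ we have $u(t) = u_\rmp(t+\gamma_0;\theta_0)$. Then $x(t)$ and $\sigma \mapsto \varphi(\sigma+\gamma_0; x_0,\theta_0,\gamma_0)$ solve the same initial-value problem, so uniqueness of solutions gives
\begin{equation*}
x(t) = \varphi(t+\gamma_0; x_0,\theta_0,\gamma_0).
\end{equation*}
The key step is a semigroup identity. Write $\gamma = t+\gamma_0$. For $\tau \in [\gamma,T]$, the functions $\varphi(\tau; x(t),\theta_0,\gamma)$ and $\varphi(\tau; x_0,\theta_0,\gamma_0)$ both satisfy $\partial_\tau \varphi = F(\varphi, u_\rmp(\tau;\theta_0))$ and agree at $\tau = \gamma$. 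By uniqueness they coincide on $[\gamma,T]$. Therefore $\bar\psi_\rmm(\bar x(t))$ is a minimum of $h_\rms$ over the subinterval $[\gamma,T] \subseteq [\gamma_0,T]$, so $\bar\psi_\rmm(\bar x(t)) \ge \bar\psi_\rmm(\bar x_0) \ge 0$. Likewise $\bar\psi_\rmt(\bar x(t)) = \bar\psi_\rmt(\bar x_0) \ge 0$. By continuity of $x$, at $t = T-\gamma_0$ we get $x(T-\gamma_0) = \varphi(T;\bar x_0) \in \SC_\rmb$.

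\emph{Phase 2.} For $t \ge T-\gamma_0$ we have $\gamma = T$ and $u = u_\rmb(x)$. Proposition~\ref{prop:backup_controller_exists}, applied from the initial point $x(T-\gamma_0) \in \SC_\rmb$, gives $x(t) \in \SC_\rmb$ for all such $t$, which proves part~\ref{prop:forward_inv.3}. With $\gamma = T$ the prediction window is the single point $\{T\}$ and $\varphi(T;\bar x(t)) = x(t)$. Thus $\bar\psi_\rmt(\bar x(t)) = h_\rmb(x(t)) \ge 0$, and $\bar\psi_\rmm(\bar x(t)) = h_\rms(x(t)) \ge 0$ because $\SC_\rmb \subset \SC_\rms$. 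Together with Phase 1, this establishes forward invariance of $\bar\Psi$.

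\emph{Part~\ref{prop:forward_inv.1}.} On $[0,T-\gamma_0)$, $\bar u_{\rm fb}(\bar x(t)) = [u_\rmp(t+\gamma_0;\theta_0)^\top\; 0\; 1]^\top$, which is continuous because $u_\rmp$ is continuous. On $(T-\gamma_0,\infty)$, $\bar u_{\rm fb}(\bar x(t)) = [u_\rmb(x(t))^\top\; 0\; 0]^\top$. This is continuous because $x$ is continuous and $u_\rmb$ is continuous on $\SC_\rmb$ when $\SU$ is compact, by Proposition~\ref{prop:backup_controller_exists}. The step needing the most care is the semigroup/uniqueness identity in Phase 1, together with the edge case $\gamma_0 = T$, where Phase 1 is empty and one needs $x_0 \in \SC_\rmb$ directly from $\bar\psi_\rmt(\bar x_0) = h_\rmb(x_0) \ge 0$.
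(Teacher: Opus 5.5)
Your proposal is correct and follows essentially the same route as the paper's proof. You freeze $\theta(t)=\theta_0$ and $\gamma(t)=\min\{t+\gamma_0,T\}$, use the flow identity $\varphi(\tau;\bar x(t))=\varphi(\tau;\bar x_0)$ on $[t+\gamma_0,T]$ to get $\bar\psi_\rmm(\bar x(t))\ge\bar\psi_\rmm(\bar x_0)$ and $\bar\psi_\rmt(\bar x(t))=\bar\psi_\rmt(\bar x_0)$ before the switch, and then apply Proposition~\ref{prop:backup_controller_exists} from $x(T-\gamma_0)\in\SC_\rmb$ together with $\SC_\rmb\subset\SC_\rms$ after it. The only cosmetic difference is that you justify the semigroup identity by ODE uniqueness, while the paper manipulates the integral equation \eqref{eq:flow_def_shifted} directly; these amount to the same thing.
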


\begin{proof}
\indent Let $\bar{x}_0 \in \bar\Psi$. 
Since $\bar{u} = \bar{u}_{\rm fb}$, it follows from \cref{eq:theta_dyn,eq:gamma_dyn,eq:def:ubar_fb,eq:def:ubar_p_ubar_b} that $\theta(t) = \theta_0$, $\gamma(t) = \min \{ t+\gamma_0,T \}$, and 
\begin{equation}\label{eq:bar_u_fi_u}
    u(t) = \begin{cases}
u_\rmp(t+\gamma_0;\theta_0), & t \in [0,t_\rms),\\
u_\rmb(x(t)), & t \ge  t_\rms,
    \end{cases}
\end{equation}
where $t_\rms \triangleq T-\gamma_0$.
To show forward invariance of $\bar \Psi$, we consider the time intervals $[0,t_\rms]$ and $(t_\rms,\infty)$. 

First, we show that for all $t \in [0,t_\rms]$, $\bar x(t) \in \bar \Psi$. 
Note that \cref{eq:affine_control,eq:flow_def_shifted,eq:bar_u_fi_u} imply that for all $t \in [0, t_\rms]$, $x(t) = \varphi(t+\gamma_0;\bar{x}_0)$.
Thus, \eqref{eq:flow_def_shifted} implies that for all $t \in [0,t_\rms]$ and all $\tau \in [t+\gamma_0, T]$,
\begin{align*}
\varphi(\tau;\bar{x}(t)) 
&= x(t) + \int_{t+\gamma_0}^{\tau} F(\varphi(\sigma;\bar{x}_0), u_\rmp(\sigma;\theta_0)) \, \rmd\sigma\\
&= \varphi(t+\gamma_0;\bar{x}_0) \\
&\qquad + \int_{t+\gamma_0}^{\tau} F(\varphi(\sigma;\bar{x}_0), u_\rmp(\sigma;\theta_0)) \, \rmd\sigma \\
&=\varphi(\tau;\bar{x}_0).
\end{align*}
Hence, \cref{eq:psi_min} implies that for all $t \in [0,t_\rms]$, 
\begin{align*}
    \bar \psi_\rmm(\bar x(t)) &= \min\limits_{\tau \in [t+\gamma_0, T]} h_{\rms}(\varphi(\tau ; \bar{x}_0))\\
    &\ge \min\limits_{\tau \in [\gamma_0, T]} h_{\rms}(\varphi(\tau ; \bar{x}_0))\\
    &= \bar \psi_\rmm(\bar x_0) \ge 0.
\end{align*}
and \cref{eq:psi_b} implies that for all $t \in [0,t_\rms]$, 
\begin{align*}
\bar \psi_\rmt(\bar x(t)) = h_{\rmb}(\varphi( T ; \bar{x}(t))) = h_{\rmb}(\varphi(T ; \bar{x}_0)) = \bar \psi_\rmt(\bar x_0) \ge 0.
\end{align*}
Since, in addition, $k(\theta(t)) = k(\theta_0) \ge0$ and $\gamma(t) \in [0,T]$, it follows from \eqref{eq:def:barPsi} that for all $t \in [0, t_\rms]$, $\bar x(t) \in \bar \Psi$.

Next, we show that for all $t \in [t_\rms,\infty)$, $\bar x(t) \in \bar \Psi$. 
Since $\bar x(t_\rms) \in \bar \Psi$ and $\gamma(t_\rms) = T$, it follows from \eqref{eq:psi_b} that
$h_\rmb(x(t_\rms)) = \bar \psi_\rmt(\bar x(t_\rms)) \ge 0$, which implies $x(t_\rms) \in \SC_\rmb$.
Since $x(t_\rms) \in \SC_\rmb$ and $u(t) = u_\rmb(x(t))$, Proposition~\ref{prop:backup_controller_exists} implies that for all $t \in [t_\rms,\infty)$, $x(t) \in \SC_\rmb$, which confirms \ref{prop:forward_inv.3}.
Since, in addition, $\SC_\rmb \subset \SC_\rms$, and for all $t \in [t_\rms,\infty)$, $\theta(t)= \theta_0$ and $\gamma(t)= T$, it follows from \cref{eq:psi_min,eq:psi_b} that for all $t \in [t_\rms,\infty)$, $\bar\psi_\rmm(\bar{x}(t)) = h_\rms(x(t)) \ge 0$, and $\bar\psi_\rmt(\bar{x}(t)) = h_\rmb(x(t)) \ge 0$.
Thus, for all $t \in [t_\rms,\infty)$, $\bar x(t) \in \bar\Psi$, which confirms that $\bar \Psi$ is forward invariant. 

To prove \ref{prop:forward_inv.1}, Proposition~\ref{prop:backup_controller_exists} implies that $u_\rmb$ is continuous on $\SC_\rmb$.
Since, in addition, $u_\rmp(\,\cdot\,;\theta_0)$ is continuous on $[0,T]$, it follows from \eqref{eq:bar_u_fi_u} that $u$ is continuous on $[0,\infty) \setminus \{t_\rms\}$, which combined with \cref{eq:def:ubar_fb,eq:def:ubar_p_ubar_b} confirms \ref{prop:forward_inv.1}.
\end{proof}

\subsection{Controls that Make $\bar \Psi$ Forward Invariant}

Although $\bar{u}_{\rm fb}$ makes $\bar\Psi$ forward invariant, this control does not generally optimize the cost. 
Thus, this section provides a set of controls $\bar u$ that make $\bar\Psi$ forward invariant. 
In fact, this section shows that $\bar\Psi$ is made forward invariant by any control that satisfies P-CBF-like constraints, which are similar to \cref{eq:def:Kmt}.

First, we extend Proposition~\ref{prop:directional_derivative_danskin} to address the planning-time shift.
The result demonstrates that $\bar\psi_\rmm$ is locally Lipschitz and directionally differentiable, and provides an expression for the directional derivative of $\bar\psi_\rmm$, which depends on the following set
\begin{equation*}
\overline{\ST}(\bar{x}) \triangleq \operatorname*{argmin}_{\tau \in [\gamma, T]} h_\rms(\varphi(\tau; \bar{x})).
\end{equation*}
The proof is in the appendix.

\begin{proposition}\label{prop:psi_lipschitz}
\rm{
Consider $\bar \psi_\rmm$ given by \eqref{eq:psi_min}, where $h_\rms$ is continuously differentiable on $\BBR^n$.
Then, the following hold:
\begin{enumerate}[leftmargin=0.9cm]

\item\label{prop:psi_lipschitz_c}
$\bar\psi_\rmm$ is locally Lipschitz on $\bar\Psi$.

\item\label{prop:psi_lipschitz_d}
\label{prop:directional_derivative}
$\bar\psi_\rmm$ is directionally differentiable on $\{\bar{x} \in \bar\Psi \colon \gamma \neq T\}$, and for all $\bar{x} \in \{\bar{x} \in \bar\Psi \colon \gamma \neq T\}$, 
\begin{align}\label{eq:dini_psi_formula}
D_\nu\bar\psi_\rmm(\bar{x}) &= \min_{\tau \in \overline{\ST}(\bar{x})} h_\rms'(\varphi(\tau; \bar{x})) \left[ \frac{\partial \varphi}{\partial \bar{x}}(\tau; \bar{x}) \right. \nn \\
&\qquad + \left. \mathbf{1}_{\{\tau = \gamma\}} \frac{\partial \varphi}{\partial \tau}(\tau; \bar{x}) \frac{\partial \gamma}{\partial \bar{x}} \right]\nu,
\end{align}
where $\mathbf{1}_{\{\tau = \gamma\}}$ is the indicator function.
\end{enumerate}
}
\end{proposition}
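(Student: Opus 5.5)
The plan is to reduce $\bar\psi_\rmm$ to a minimum over a fixed interval, so that the Danskin-type argument behind Proposition~\ref{prop:directional_derivative_danskin} can be reused. The map $(\tau,\bar x)\mapsto \varphi(\tau;\bar x)$ is defined on $\{(\tau,\bar x)\colon \gamma\le\tau\le T\}$. Since $f,g$ are $C^1$ and $u_\rmp(\tau;\cdot)$ is $C^1$, standard ODE sensitivity results give the following. The map is continuous, and it is $C^1$ in $(x,\theta)$. It is also differentiable in $\gamma$, with
\[
\frac{\partial \varphi}{\partial \gamma}(\tau;\bar x) = -\Phi(\tau,\gamma)\,F(x,u_\rmp(\gamma;\theta)),
\]
where $\Phi$ is the state-transition matrix of the variational equation. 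I reparametrize the horizon by $\tau = \gamma + s(T-\gamma)$ with $s\in[0,1]$ and set $G(s,\bar x) \triangleq h_\rms(\varphi(\gamma+s(T-\gamma);\bar x))$. Then $\bar\psi_\rmm(\bar x) = \min_{s\in[0,1]} G(s,\bar x)$ is a minimum over a compact set that does not depend on $\bar x$.

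For part~\ref{prop:psi_lipschitz_c}, I will show that $G$ is $C^1$ in $\bar x$, with $\nabla_{\bar x}G$ jointly continuous in $(s,\bar x)$. This holds because $\partial_\tau\varphi = F(\varphi,u_\rmp(\tau;\theta))$ is continuous and $\varphi$ is $C^1$ in $\bar x$. On a compact neighborhood, $\nabla_{\bar x}G$ is then bounded uniformly in $s$. A minimum of functions that are uniformly Lipschitz with a common constant is Lipschitz, so $\bar\psi_\rmm$ is locally Lipschitz. At the boundary $\gamma=T$ the interval degenerates, but the reparametrized formula is still valid for $\gamma\le T$, so this argument also covers that case.

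For part~\ref{prop:psi_lipschitz_d}, I apply Danskin's theorem to $\min_s G(s,\bar x)$. Joint continuity of $G$ and $\nabla_{\bar x}G$ on $[0,1]\times$ a neighborhood gives
\[
D_\nu\bar\psi_\rmm(\bar x)=\min_{s\in S^*}\nabla_{\bar x}G(s,\bar x)\,\nu,
\]
where $S^*$ is the argmin set. Directions $\nu$ with $\gamma+s\nu_\gamma>T$ are excluded by the radial cone, and the restriction $\gamma\neq T$ avoids degeneracy. By the chain rule,
\[
\nabla_{\bar x}G = h_\rms'(\varphi)\Big[\frac{\partial\varphi}{\partial\bar x}(\tau;\bar x)+\frac{\partial\varphi}{\partial\tau}(\tau;\bar x)(1-s)\frac{\partial\gamma}{\partial\bar x}\Big],
\]
with $\tau=\gamma+s(T-\gamma)$, and $S^*$ corresponds bijectively to $\overline{\ST}(\bar x)$.

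The main obstacle is reconciling this with the stated formula, which contains $\mathbf 1_{\{\tau=\gamma\}}$ rather than $(1-s)$. To resolve it, I will use the identity
\[
\frac{\partial\varphi}{\partial\gamma}(\tau;\bar x)+\frac{\partial\varphi}{\partial\tau}(\tau;\bar x)\cdot 0
\]
together with the interpretation of $\partial\varphi/\partial\bar x$ in the statement. This step needs care. I will go back to the original variable $\tau$ and argue directly: for interior minimizers $\tau\in(\gamma,T]$, the point $\tau$ stays in the shifted window, so $\tau$ may be held fixed and only $\partial\varphi/\partial\bar x$ contributes. At $\tau=\gamma$ the window's left endpoint moves with $\gamma$. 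If $\nu_\gamma>0$, the candidate minimizer must follow the endpoint, which adds $\partial_\tau\varphi\,\nu_\gamma$. If $\nu_\gamma<0$, the window enlarges, and first-order optimality of the endpoint together with the $\tau$-derivative yields the same expression within the minimum.

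Rather than reparametrizing, I will therefore prove the Danskin bound directly in $\tau$. The upper bound comes from evaluating at $\tau^*$, or at $\gamma+s\nu_\gamma$ when $\tau^*=\gamma$. The lower bound comes from a compactness and subsequence argument on near-minimizers $\tau_s\to\overline{\ST}(\bar x)$, using a first-order Taylor expansion of $h_\rms\circ\varphi$ in $(\tau,\bar x)$.
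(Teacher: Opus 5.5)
Your reparametrization $\tau=\gamma+s(T-\gamma)$ is a sound route, different from the paper's. However, you stopped one line short of finishing it and then switched to an argument you do not need. Everything through your chain-rule expression for $\nabla_{\bar x}G$ is correct, and Danskin gives $D_\nu\bar\psi_\rmm(\bar x)=\min_{s\in S^*}\nabla_{\bar x}G(s,\bar x)\,\nu$. The mismatch between $(1-s)$ and $\mathbf{1}_{\{\tau=\gamma\}}$ disappears term by term on the argmin set. Take $s^*\in S^*$ and $\tau^*=\gamma+s^*(T-\gamma)$.
If $s^*=0$, then $1-s^*=1=\mathbf{1}_{\{\tau^*=\gamma\}}$.
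If $s^*=1$, then $\tau^*=T\neq\gamma$, so $1-s^*=0=\mathbf{1}_{\{\tau^*=\gamma\}}$.
If $s^*\in(0,1)$, then $\tau^*$ is an interior minimizer of the $C^1$ map $\tau\mapsto h_\rms(\varphi(\tau;\bar x))$. Hence $h_\rms'(\varphi(\tau^*;\bar x))\frac{\partial\varphi}{\partial\tau}(\tau^*;\bar x)=0$, and the extra term vanishes whatever its coefficient.
Thus
\[
(1-s^*)\,h_\rms'(\varphi(\tau^*;\bar x))\frac{\partial\varphi}{\partial\tau}(\tau^*;\bar x)=\mathbf{1}_{\{\tau^*=\gamma\}}\,h_\rms'(\varphi(\tau^*;\bar x))\frac{\partial\varphi}{\partial\tau}(\tau^*;\bar x),
\]
and your Danskin formula is exactly \eqref{eq:dini_psi_formula}. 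No reinterpretation of $\partial\varphi/\partial\bar x$ is needed. The expression $\frac{\partial\varphi}{\partial\gamma}+\frac{\partial\varphi}{\partial\tau}\cdot 0$ that you call an identity is not an equation and plays no role.

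Your fallback ``direct in $\tau$'' argument is therefore superfluous, and as sketched it leaves the one delicate point of the lower bound open. Suppose the perturbed minimizer $\tau_s$ tends to $\gamma$ and $\nu_\gamma\neq 0$. Then $\tau_s$ must be compared with a shifted point of the \emph{unperturbed} window $[\gamma,T]$, such as $\tau_s-s\nu_\gamma$. Comparing with $\tau_s$ itself fails: it may lie outside $[\gamma,T]$ when $\nu_\gamma<0$, and it gives a bound that is too weak when $\nu_\gamma>0$. The reparametrization performs such a shift automatically.

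For comparison, the paper proves (c) with two separate Lipschitz constants, one in $\bar x$ on the common window and one in $\tau$. It then splits cases on whether the minimizer for the point with smaller $\gamma$ lies to the left of the other point's left endpoint. It proves (d) by treating \eqref{eq:psi_min} as a parametric program with constraints $\tau-\gamma\ge 0$ and $T-\tau\ge 0$. It verifies LICQ, invokes \cite[Corollary~4.4]{gauvin2009differential}, and reads off the indicator from the unique multiplier $\lambda_1^*=\mathbf{1}_{\{\tau=\gamma\}}h_\rms'\frac{\partial\varphi}{\partial\tau}$.

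Your change of variables moves the $\gamma$-dependence out of the feasible set and into the objective. As a result, the classical Danskin theorem already used for Proposition~\ref{prop:directional_derivative_danskin} suffices, with no constraint qualification or multipliers, and (c) reduces to a uniform gradient bound. What you give up is that the moving-endpoint effect appears as the coefficient $(1-s)$ rather than as a multiplier. So the stationarity step above, which the paper encodes in complementary slackness, must be done by hand.

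A minor point for (c): derive the Lipschitz bound from the gradient bound on a convex set, such as a ball intersected with $\{\gamma\in[0,T]\}$, since $\bar\Psi$ itself need not be convex.
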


\begin{remark}\rm
The directional derivative \eqref{eq:dini_psi_formula} differs from \eqref{eq:dini_psi_formula_danskin} because the planning-time shift $\gamma$ appears in the left end point of the feasible set $[\gamma, T]$ for the minimization \eqref{eq:psi_min}. 
In contrast, the minimization \eqref{eq:def:psi_m} without the planning-time shift is over the constant feasible set $[0,T]$. 
The inclusion of $\gamma$ results in the extra term $\mathbf{1}_{\{\tau = \gamma\}} \frac{\partial \varphi}{\partial \tau} \frac{\partial \gamma}{\partial \bar{x}}$, which impacts the directional derivative if and only if the minimizer is at the left endpoint $\gamma$ and accounts for the effect of moving the feasible set $[\gamma, T]$ on $\bar\psi_\rmm$.
\end{remark}

The next result extends Proposition~\ref{prop:dcbf_iff} to address the planning-time shift.
This result is an immediate consequence of part~\ref{prop:psi_lipschitz_d} of Proposition~\ref{prop:psi_lipschitz}.

\begin{proposition}\label{prop:dcbf_iff_augmented}\rm
Let $\bar{x} \in \bar\Psi$ with $\gamma \in [0,T)$, $\hat{\bar{u}} \in \SU \times \Omega \times \SZ$, and $c \in \BBR$.
Then,
\begin{equation*}
D_{\bar{f}(\bar{x}) + \bar{g}(\bar{x})\hat{\bar{u}}} \bar\psi_\rmm(\bar{x}) \ge c
\end{equation*}
if and only if for all $\tau \in \overline{\ST}(\bar{x})$,
\begin{align}\label{eq:affine_constraints_new}
&h_\rms'(\varphi(\tau; \bar{x})) \left[ \frac{\partial \varphi}{\partial \bar{x}}(\tau; \bar{x}) + \mathbf{1}_{\{\tau = \gamma\}} \frac{\partial \varphi}{\partial \tau}(\tau; \bar{x}) \frac{\partial \gamma}{\partial \bar{x}} \right] \nn \\
&\qquad \times (\bar{f}(\bar{x}) + \bar{g}(\bar{x})\hat{\bar{u}}) \geq c.
\end{align}
\end{proposition}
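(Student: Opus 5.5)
This follows directly from part~\ref{prop:psi_lipschitz_d} of Proposition~\ref{prop:psi_lipschitz}, in the same way that Proposition~\ref{prop:dcbf_iff} follows from Proposition~\ref{prop:directional_derivative_danskin}.

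Fix $\bar x \in \bar\Psi$ with $\gamma \in [0,T)$, an input $\hat{\bar u} \in \SU \times \Omega \times \SZ$, and a constant $c$. Set $\nu \triangleq \bar f(\bar x) + \bar g(\bar x)\hat{\bar u}$. Since $\gamma \ne T$, the point $\bar x$ lies in $\{\bar x \in \bar\Psi \colon \gamma \neq T\}$. Part~\ref{prop:psi_lipschitz_d} of Proposition~\ref{prop:psi_lipschitz} therefore gives two things. First, $D_\nu \bar\psi_\rmm(\bar x)$ exists. Second, it equals the minimum over $\tau \in \overline{\ST}(\bar x)$ of
\[
a(\tau) \triangleq h_\rms'(\varphi(\tau;\bar x))\left[\frac{\partial \varphi}{\partial \bar x}(\tau;\bar x) + \mathbf{1}_{\{\tau=\gamma\}}\frac{\partial \varphi}{\partial \tau}(\tau;\bar x)\frac{\partial \gamma}{\partial \bar x}\right]\nu .
\]
This $a(\tau)$ is exactly the left-hand side of \eqref{eq:affine_constraints_new}.

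It then remains to check the elementary equivalence $\min_{\tau \in \overline{\ST}} a(\tau) \ge c \iff a(\tau) \ge c$ for all $\tau \in \overline{\ST}(\bar x)$.

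\begin{itemize}
\item[($\Leftarrow$)] If $a(\tau) \ge c$ for every $\tau$ in the set, then the minimum, which is the value of $a$ at some element of the set, is also at least $c$.
\item[($\Rightarrow$)] If the minimum is at least $c$, then every value $a(\tau)$ is at least the minimum, hence at least $c$.
\end{itemize}

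The only technical point is that the minimum is attained, so that it is a genuine value of $a$ and not merely an infimum. First, $\overline{\ST}(\bar x)$ is nonempty and compact. This holds because $\tau \mapsto h_\rms(\varphi(\tau;\bar x))$ is continuous on the compact interval $[\gamma,T]$, so its set of minimizers is nonempty and closed. Second, $a$ is lower semicontinuous on $\overline{\ST}(\bar x)$. On $\tau \ne \gamma$ it is continuous. At $\tau = \gamma$ it takes the value of the continuous part plus the indicator term. Because Proposition~\ref{prop:psi_lipschitz} already states the formula with a minimum, attainment can simply be taken from that proposition, and the argument involves no real obstacle.
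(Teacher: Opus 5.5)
Your proof is correct and follows the paper, which presents this statement as an immediate consequence of the directional-derivative formula in part~\ref{prop:psi_lipschitz_d} of Proposition~\ref{prop:psi_lipschitz}, combined with the fact that the minimum over $\overline{\ST}(\bar x)$ is at least $c$ if and only if every term is. You can drop the attainment paragraph entirely, because the equivalence holds verbatim for an infimum (a greatest lower bound); in any case, your lower-semicontinuity sentence does not actually justify lower semicontinuity at $\tau=\gamma$.
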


\begin{remark}\rm
Proposition~\ref{prop:dcbf_iff_augmented} implies that the control variable $\hat{\bar{u}}$ satisfies the directional derivative constraint
\begin{equation*}
D_{\bar{f}(\bar{x}) + \bar{g}(\bar{x})\hat{\bar{u}}} \bar\psi_\rmm(\bar{x}) + \alpha_\rmm(\bar\psi_\rmm(\bar{x})) \ge 0
\end{equation*}
if and only if $\hat{\bar{u}}$ satisfies the family of affine constraints \eqref{eq:affine_constraints_new} with $c = -\alpha_\rmm(\bar\psi_\rmm(\bar{x}))$.
\end{remark}

The directional derivative of $\bar \psi_\rmm$ along the trajectories of \cref{eq:augmented,eq:augmented.2,eq:augmented.3} depends on the sensitivity 
\begin{equation*}
\frac{\partial \varphi}{\partial \bar{x}} = \left[\frac{\partial \varphi}{\partial x} \;\frac{\partial \varphi}{\partial \theta} \;\frac{\partial \varphi}{\partial \gamma}\right],
\end{equation*}
where differentiating \eqref{eq:flow_def_shifted} with respect to $x$, $\theta$, and $\gamma$ yields
\begin{align}
\frac{\partial \varphi}{\partial x}(\tau; \bar{x}) &= I + \int_{\gamma}^{\tau} \frac{\partial F}{\partial x}(\varphi(\sigma; \bar{x}), u_\rmp(\sigma; \theta))
 \frac{\partial \varphi}{\partial x}(\sigma; \bar{x}) \, \rmd\sigma,\label{eq:dphi_dx}\\
\frac{\partial \varphi}{\partial \theta}(\tau; \bar{x}) &= \int_{\gamma}^{\tau} \Big[\frac{\partial F}{\partial x}(\varphi(\sigma; \bar{x}), u_\rmp(\sigma; \theta)) \frac{\partial \varphi}{\partial \theta}(\sigma; \bar{x}) \nn \\
&\quad  + \frac{\partial F}{\partial u}(\varphi(\sigma; \bar{x}), u_\rmp(\sigma; \theta)) \frac{\partial u_\rmp}{\partial \theta}(\sigma; \theta)\Big] \, \rmd\sigma,\label{eq:dphi_dtheta}\\
\frac{\partial \varphi}{\partial \gamma}(\tau; \bar{x}) &=\int_{\gamma}^{\tau} \frac{\partial F}{\partial x}(\varphi(\sigma; \bar{x}), u_\rmp(\sigma; \theta)) \frac{\partial \varphi}{\partial \gamma}(\sigma; \bar{x}) \, \rmd\sigma\nn\\
&\quad -F(x, u_\rmp(\gamma;\theta)).\label{eq:dphi_dgamma}
\end{align}
Section~\ref{sec:impl_remark} addresses numerically efficient computation of these sensitivities using the adjoint method \cite{chen2018neural}.

Next, we define a control constraint set that is similar to \cref{eq:def:Kmt} but addresses the planning-time shift.
Let $\alpha_\rmm, \alpha_\theta, \alpha_\gamma$ be extended class-$\SK$ functions, and for all $\bar{x} \in \{\bar{x} \in \bar\Psi \colon \gamma \neq T\}$, consider $\bar K \colon \bar \Psi \rightrightarrows \SU \times \Omega \times \SZ$ defined by
\begin{align}\label{eq:K_constraint}
\bar K(\bar{x}) \triangleq \{&\hat{\bar{u}} \in \SU \times \Omega \times \SZ \colon 
k^\prime(\theta)\,\hat{\omega} + \alpha_\theta(k(\theta)) \geq 0, \nn\\
&D_{\bar{f}(\bar{x}) + \bar{g}(\bar{x})\hat{\bar{u}}} \bar\psi_\rmm(\bar{x}) + \alpha_\rmm(\bar\psi_\rmm(\bar{x})) \geq 0, \nn\\
&L_{\bar{f}} \bar\psi_\rmt(\bar{x}) + L_{\bar{g}} \bar\psi_\rmt(\bar{x}) \hat{\bar{u}} + \alpha_\rmb(\bar\psi_\rmt(\bar{x})) \geq 0, \nn\\
&\hat{z} + \alpha_\gamma(\gamma) \geq 0\}.
\end{align}

Corollary~\ref{cor:pcbf_pair_solution} requires that $(\psi_\rmm,\psi_\rmt)$ is a P-CBF pair to guarantee that the control constraint set $K_{\rmm \rmt}(x,\theta)$ is nonempty. 
Moreover, implementation of the constraint set requires knowledge of specific class-$\SK$ functions that make $K_{\rmm \rmt}(x,\theta)$ nonempty. 
The next result shows that $\bar K(\bar{x})$ is nonempty. 
The result does not require that $(\bar\psi_\rmm, \bar\psi_\rmt)$ is a P-CBF pair, and it holds for any choice of extended class-$\SK$ functions $\alpha_\rmm, \alpha_\theta, \alpha_\gamma$. 

\begin{theorem}\label{thm:feasibility}
    \rm{
    For all $\bar{x} \in \{\bar{x} \in \bar\Psi \colon \gamma \neq T\}$, $\bar K(\bar{x})$ is nonempty. 
    }
\end{theorem}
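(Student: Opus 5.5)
The plan is to exhibit an explicit element of $\bar K(\bar x)$: the fallback control $\bar u_\rmp(\bar x) = [u_\rmp(\gamma;\theta)^\top \; 0_{1\times d} \; 1]^\top$ from \eqref{eq:def:ubar_p_ubar_b}. Fix $\bar x \in \bar\Psi$ with $\gamma \in [0,T)$ and set $\hat{\bar u} = \bar u_\rmp(\bar x)$. Admissibility is immediate. Since $\theta \in \Theta$, we have $u_\rmp(\gamma;\theta) \in \SU$. By assumption $0 \in \Omega$ and $1 \in \SZ$. The remaining work is to verify each of the four inequalities in \eqref{eq:K_constraint} along the direction $\nu \triangleq \bar f(\bar x) + \bar g(\bar x)\hat{\bar u} = [F(x,u_\rmp(\gamma;\theta))^\top \; 0 \; 1]^\top$.

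The $\theta$ and $\gamma$ constraints are trivial. With $\hat\omega = 0$, the first reads $\alpha_\theta(k(\theta)) \ge 0$, which holds since $k(\theta)\ge 0$. With $\hat z = 1$, the last reads $1 + \alpha_\gamma(\gamma) \ge 1 > 0$, since $\gamma \ge 0$.

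For the two predicted-flow constraints, I would use the semigroup computation already carried out in the proof of Proposition~\ref{prop:forward_inv}. Let $\bar x(s)$ be the trajectory from $\bar x$ under $\bar u_{\rm fb}$ for small $s \in [0, T-\gamma)$. That proof shows $\varphi(\tau;\bar x(s)) = \varphi(\tau;\bar x)$ for $\tau \in [\gamma+s,T]$. Consequently $\bar\psi_\rmt(\bar x(s)) = \bar\psi_\rmt(\bar x)$ and $\bar\psi_\rmm(\bar x(s)) = \min_{\tau\in[\gamma+s,T]} h_\rms(\varphi(\tau;\bar x)) \ge \bar\psi_\rmm(\bar x)$.
\begin{itemize}
\item Since $\bar\psi_\rmt$ is $C^1$ (the analogue of Proposition~\ref{prop:psi_t_smooth}), its Lie derivative along $\nu$ equals $\frac{\rmd}{\rmd s}\bar\psi_\rmt(\bar x(s))|_{s=0} = 0$. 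The terminal constraint then reduces to $\alpha_\rmb(\bar\psi_\rmt(\bar x)) \ge 0$, which holds because $\bar\psi_\rmt(\bar x)\ge 0$. For the inner product $L_{\bar f}\bar\psi_\rmt + L_{\bar g}\bar\psi_\rmt\hat{\bar u}$ to vanish one can argue directly from \eqref{eq:dphi_dx}--\eqref{eq:dphi_dgamma}: $\frac{\partial\varphi}{\partial x}F(x,u_\rmp(\gamma;\theta)) + \frac{\partial\varphi}{\partial\gamma}$ satisfies the linear variational equation with zero initial condition at $\tau=\gamma$, hence is identically zero. This is cleaner than invoking trajectories.
\item For $\bar\psi_\rmm$, the trajectory $\bar x(s)$ is differentiable with $\dot{\bar x}(0)=\nu$. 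Every direction lies in the radial cone because $\gamma<T$ and the relevant domain is open in the $\gamma$-direction. Lemma~\ref{lem:dini_directional}, together with part~\ref{prop:psi_lipschitz_c}--\ref{prop:psi_lipschitz_d} of Proposition~\ref{prop:psi_lipschitz}, gives $D_\nu\bar\psi_\rmm(\bar x) = \lim_{s\downarrow0}[\bar\psi_\rmm(\bar x(s))-\bar\psi_\rmm(\bar x)]/s \ge 0$. Adding $\alpha_\rmm(\bar\psi_\rmm(\bar x))\ge0$ closes the second constraint.
\end{itemize}

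The main obstacle is this last step: justifying the directional derivative inequality for $\bar\psi_\rmm$ when the minimizer set $\overline{\ST}(\bar x)$ contains the left endpoint $\tau=\gamma$, and doing so rigorously rather than via trajectories that might leave the domain where Lemma~\ref{lem:dini_directional} applies. The first attempt is the direct route through Proposition~\ref{prop:dcbf_iff_augmented}. For each $\tau\in\overline{\ST}(\bar x)$, the bracketed sensitivity applied to $\nu$ equals $\frac{\partial\varphi}{\partial x}F + \frac{\partial\varphi}{\partial\gamma} + \mathbf 1_{\{\tau=\gamma\}}\frac{\partial\varphi}{\partial\tau}$. The first two terms cancel by the variational argument above. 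At $\tau=\gamma$, the third term contributes $h_\rms'(x)F(x,u_\rmp(\gamma;\theta))$. This is nonnegative because $\gamma$ being a minimizer of $\tau\mapsto h_\rms(\varphi(\tau;\bar x))$ over $[\gamma,T]$ forces its right derivative at $\gamma$ to be $\ge 0$. Hence every affine constraint \eqref{eq:affine_constraints_new} holds with $c=-\alpha_\rmm(\bar\psi_\rmm(\bar x))\le 0$, which completes the proof that $\bar u_\rmp(\bar x)\in\bar K(\bar x)$.
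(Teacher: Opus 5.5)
Your proposal is correct and follows essentially the same route as the paper. You exhibit $\bar u_\rmp(\bar x)$, dispatch the $\theta$- and $\gamma$-constraints directly, and show $\frac{\partial\varphi}{\partial x}F+\frac{\partial\varphi}{\partial\gamma}\equiv 0$ via the variational equation with zero initial condition at $\tau=\gamma$ (the paper's $\eta\equiv 0$ step); you then treat a left-endpoint minimizer through the sign of the right derivative of $\tau\mapsto h_\rms(\varphi(\tau;\bar x))$ at $\gamma$ via Proposition~\ref{prop:dcbf_iff_augmented}, which is exactly the paper's two-case argument. Your explicit check that $\bar u_\rmp(\bar x)\in\SU\times\Omega\times\SZ$ fills a detail the paper leaves implicit, and the trajectory-based alternative you sketch is not needed once you use the direct route.
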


\begin{proof}
\indent Let $\bar{x}_\rme \in \{\bar{x} \in \bar\Psi \colon \gamma \neq T\}$, and we write its components as $\bar{x}_\rme = [x_\rme^\top \;\theta_\rme^\top \;\gamma_\rme]^\top$. 
It follows from \eqref{eq:def:ubar_p_ubar_b} that $\bar u_\rmp(\bar x_\rme) = [ u_\rme^\top \, \omega_\rme^\top \, z_\rme ]^\top$, where 
\begin{equation*}
u_\rme \triangleq u_\rmp(\gamma_\rme ; \theta_\rme),\quad \omega_\rme = 0_{d\times 1}, \quad z_\rme = 1.
\end{equation*}
We show that $\bar u_\rmp(\bar x_\rme) \in \bar K(\bar x_\rme)$.
To do so, define
\begin{align}
c_1 &\triangleq k^\prime(\theta_\rme) {\omega_\rme} + \alpha_\theta(k(\theta_\rme)), \label{eq:c1}\\
c_2 &\triangleq {z_\rme} + \alpha_\gamma(\gamma_\rme), \label{eq:c2}\\
c_3 &\triangleq L_{\bar{f}} \bar\psi_\rmt(\bar{x}_\rme) + L_{\bar{g}} \bar\psi_\rmt(\bar{x}_\rme) \bar u_\rmp(\bar x_\rme) + \alpha_\rmb(\bar\psi_\rmt(\bar{x}_\rme)), \label{eq:c3}\\
c_4 &\triangleq D_{\bar{f}(\bar{x}_\rme) + \bar{g}(\bar{x}_\rme)\bar u_\rmp(\bar x_\rme)} \bar\psi_\rmm(\bar{x}_\rme) + \alpha_\rmm(\bar\psi_\rmm(\bar{x}_\rme)). \label{eq:c4}
\end{align}
First, since $\bar{x}_\rme \in \bar \Psi$, $\omega_\rme = 0_{d\times 1}$, and $z_\rme = 1$, it follows that 
\begin{equation*}
c_1 =\alpha_\theta(k(\theta_\rme)) \ge 0, \qquad c_2 =1+\alpha_\gamma(\gamma_\rme) \ge 0.
\end{equation*}

To show that $c_3 \ge 0$, define
\begin{equation*}
\eta(\tau) \triangleq \frac{\partial \varphi}{\partial \bar{x}}(\tau;\bar{x}_\rme) [ \bar{f}(\bar{x}_\rme) + \bar{g}(\bar{x}_\rme) \bar u_\rmp(\bar x_\rme) ]
\end{equation*}
and using \cref{eq:dphi_dx,eq:dphi_dtheta,eq:dphi_dgamma,eq:augmented.2,eq:augmented.3} yields
\begin{equation*}
\eta(\tau) = \frac{\partial \varphi}{\partial x}(\tau;\bar{x}_\rme)
F(x_\rme,u_\rmp(\gamma_\rme;\theta_\rme))
+ \frac{\partial \varphi}{\partial \gamma}(\tau;\bar{x}_\rme).
\end{equation*}
Next, \cref{eq:dphi_dx,eq:dphi_dgamma} imply that $\eta(\gamma_\rme) = 0$ and 
\begin{equation}
\frac{\rmd \eta}{\rmd \tau}(\tau)
= \frac{\partial F}{\partial x}(\varphi(\tau;\bar{x}_\rme),u_\rmp(\tau;\theta_\rme))\,\eta(\tau).
\label{eq:eta_ODE}
\end{equation}
Since $\eta(\gamma_\rme) = 0$, it follows from \eqref{eq:eta_ODE} that for all $\tau \in [\gamma_\rme,T]$, $\eta(\tau) = 0$.
Using $\eta(T) = 0$, and $\bar{x}_\rme \in \bar\Psi$, it follows that
\begin{equation*}
c_3 = h_\rmb'(\varphi(T;\bar{x}_\rme)) \eta(T) + \alpha_\rmb(\bar\psi_\rmt(\bar{x}_\rme)) = \alpha_\rmb(\bar\psi_\rmt(\bar{x}_\rme)) \geq 0.
\end{equation*}

To show that $c_4 \ge 0$, it follows from \cref{eq:augmented.2,eq:augmented.3} that $\frac{\partial \gamma}{\partial \bar{x}} (\bar{f}(\bar{x}_\rme) + \bar{g}(\bar{x}_\rme) \bar u_\rmp(\bar x_\rme)) = z_\rme = 1$.
Since, in addition, $\eta(\tau)=0$ for all $\tau \in [\gamma_\rme, T]$, it follows from Proposition~\ref{prop:psi_lipschitz} that 
\begin{align*}
c_4 &= \min_{\tau \in \overline{\ST}(\bar{x}_\rme)} \mathbf{1}_{\{\tau = \gamma_\rme\}}
h_\rms'(\varphi(\tau;\bar{x}_\rme))
F(\varphi(\tau;\bar{x}_\rme),u_\rmp(\tau;\theta_\rme))\\
&\qquad + \alpha_\rmm(\bar\psi_\rmm(\bar{x}_\rme))
\end{align*}
We consider 2 cases: $\gamma_\rme \in \overline{\ST}(\bar{x}_\rme)$ and $\gamma_\rme \not \in \overline{\ST}(\bar{x}_\rme)$.
First, consider $\gamma_\rme \in \overline{\ST}(\bar{x}_\rme)$, which implies that $\gamma_\rme$ is a minimizer of $h_\rms(\varphi(\tau;\bar{x}_\rme))$ at the left endpoint of $[\gamma_\rme, T]$.
Thus, $h_\rms'(\varphi(\gamma_\rme;\bar{x}_\rme)) F(\varphi(\gamma_\rme;\bar{x}_\rme),u_\rmp(\gamma_\rme;\theta_\rme)) \geq 0$.
Since, in addition, $\bar{x}_\rme \in \bar\Psi$, it follows from \eqref{eq:def:barPsi} that $c_4 \ge \alpha_\rmm(\bar\psi_\rmm(\bar{x}_\rme)) \ge 0$.
Next, consider $\gamma_\rme \not \in \overline{\ST}(\bar{x}_\rme)$, and it follows from \eqref{eq:def:barPsi} that $c_4 = \alpha_\rmm(\bar\psi_\rmm(\bar{x}_\rme)) \ge 0$.

Finally, since $c_1,c_2,c_3,c_4 \ge 0$, it follows from \cref{eq:c1,eq:c2,eq:c3,eq:c4,eq:K_constraint} that $\bar u_\rmp(\bar x_\rme) \in \bar K(\bar x_\rme)$.
\end{proof}

Theorem~\ref{thm:feasibility} shows that for all $\bar{x} \in \{\bar{x} \in \bar\Psi \colon \gamma \neq T\}$, the constraints in \eqref{eq:K_constraint} are feasible. 
The next result can be viewed as an extension of Theorem~\ref{thm:pfcbf_forward_invariance} that uses the planning-time shift to remove the assumption that $(\bar\psi_\rmm, \bar\psi_\rmt)$ is a P-CBF pair.
The result shows that $\bar\Psi$ is made forward invariant by any control selected pointwise from $\bar K(\bar{x})$ for all $\bar{x} \in \{\bar{x} \in \bar\Psi \colon \gamma \neq T\}$ and equal to $\bar{u}_\rmb(\bar{x})$ for all $\bar{x} \in \{\bar{x} \in \bar\Psi \colon \gamma = T\}$.

\begin{theorem}\label{thm:forward_inv_K}
    \rm{
Assume Assumption~\ref{ass:backup_set} is satisfied.
Let $\bar{u}_{\rm fi} \colon \bar\Psi \to \SU \times \Omega \times \SZ$ be such that for all $\bar{x} \in \{\bar{x} \in \bar\Psi \colon \gamma \neq T\}$, $\bar{u}_{\rm fi}(\bar{x}) \in \bar K(\bar{x})$, and for all $\bar{x} \in \{\bar{x} \in \bar\Psi \colon \gamma = T\}$, $\bar{u}_{\rm fi}(\bar{x}) = \bar{u}_\rmb(\bar{x})$.
Then, $\bar\Psi$ is forward invariant with respect to \eqref{eq:augmented} with $\bar{u} = \bar{u}_{\rm fi}$.
    }
\end{theorem}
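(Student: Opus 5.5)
Let $\bar x_0 \in \bar\Psi$ and let $\bar x(\cdot)$ be the resulting solution. The plan splits the trajectory according to whether $\gamma(t)<T$ or $\gamma(t)=T$. We mimic the comparison-lemma argument of Theorem~\ref{prop:vector_dcbf_invariance} on the region $\{\gamma\neq T\}$, and the backup argument of Proposition~\ref{prop:forward_inv} once $\gamma$ reaches $T$.

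\textbf{Step 1 (the region $\gamma<T$).} Consider any interval $[0,t_1)$ on which $\gamma(t)<T$ and $\bar x(t)\in\bar\Psi$. There $\bar u_{\rm fi}(\bar x(t))\in\bar K(\bar x(t))$, and four barrier-type inequalities follow.
\begin{itemize}
\item $k$ is $C^1$, so $\tfrac{\rmd}{\rmd t}k(\theta(t))\ge -\alpha_\theta(k(\theta(t)))$.
\item Proposition~\ref{prop:psi_t_smooth}, adapted to $\bar\psi_\rmt$ (which is $C^1$ in $\bar x$, including $\gamma$, by smooth dependence of the flow on its initial time), gives $\tfrac{\rmd}{\rmd t}\bar\psi_\rmt(\bar x(t))\ge-\alpha_\rmb(\bar\psi_\rmt(\bar x(t)))$.
\item Proposition~\ref{prop:psi_lipschitz} gives local Lipschitz continuity and directional differentiability of $\bar\psi_\rmm$ on $\{\gamma\ne T\}$. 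Lemma~\ref{lem:dini_directional} then yields $\tfrac{\rmd^+}{\rmd t}\bar\psi_\rmm(\bar x(t))\ge -\alpha_\rmm(\bar\psi_\rmm(\bar x(t)))$.
\item $\dot\gamma\ge-\alpha_\gamma(\gamma)$.
\end{itemize}
The comparison lemma, applied exactly as in Theorem~\ref{prop:vector_dcbf_invariance}, then gives $k(\theta(t))\ge0$, $\bar\psi_\rmt\ge0$, $\bar\psi_\rmm\ge0$ and $\gamma(t)\ge0$ on the interval. Since $\gamma(t)<T$ by assumption, $\bar x(t)\in\bar\Psi$ there. Because $\bar\Psi$ is defined by continuous functions, it is closed, so by continuity $\bar x(t_1)\in\bar\Psi$ at the endpoint as well.

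\textbf{Step 2 (reaching $\gamma=T$).} If $\gamma$ never reaches $T$, Step 1 with $t_1=\infty$ finishes the proof. Otherwise let $t_1$ be the first time with $\gamma(t_1)=T$; Step 1 gives $\bar x(t_1)\in\bar\Psi$. At that point $\bar u_{\rm fi}=\bar u_\rmb$, so $z=0$, $\omega=0$ and $u=u_\rmb(x)$. Moreover $\gamma=T$ gives $\varphi(T;\bar x)=x$, so $\bar\psi_\rmt(\bar x(t_1))=h_\rmb(x(t_1))\ge0$, i.e. $x(t_1)\in\SC_\rmb$. From there we follow the second half of the proof of Proposition~\ref{prop:forward_inv}:
\begin{itemize}
\item $\gamma$ and $\theta$ stay constant, so $\gamma\equiv T$ and $k(\theta)$ is unchanged.
\item Proposition~\ref{prop:backup_controller_exists} keeps $x(t)\in\SC_\rmb\subset\SC_\rms$.
\item Hence $\bar\psi_\rmm=h_\rms(x)\ge0$ and $\bar\psi_\rmt=h_\rmb(x)\ge0$ for all $t\ge t_1$.
\end{itemize}
This step needs the trajectory to actually remain on $\{\gamma=T\}$ after $t_1$. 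That holds because the dynamics restricted to that face have $z=0$, and solutions are assumed to exist and be unique.

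\textbf{Main obstacle.} The key difficulty is the upper bound $\gamma\le T$. The constraint $\hat z+\alpha_\gamma(\gamma)\ge0$ bounds $\gamma$ from below only, and nothing in $\bar K$ stops $z>0$ from pushing $\gamma$ past $T$. The bound therefore has to come from the switching structure: $\gamma$ is continuous, so any exit through $\gamma=T$ must pass through a first time $t_1$ with $\gamma(t_1)=T$, and from then on $\bar u_\rmb$ freezes $\gamma$. We will make this rigorous with a first-exit-time argument. Let $t^*$ be the supremum of times up to which $\bar x$ stays in $\bar\Psi$. If $\gamma(t^*)<T$, Step 1 applies on a neighborhood of $t^*$ (by continuity of $\gamma$), contradicting maximality. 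If $\gamma(t^*)=T$, Step 2 applies. A secondary technical point is that Proposition~\ref{prop:psi_lipschitz} gives directional differentiability only when $\gamma\ne T$. This is why the comparison argument for $\bar\psi_\rmm$ is confined to Step 1, and why at the switching time we use only continuity and closedness.
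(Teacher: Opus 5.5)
Your proposal is correct and follows the paper's proof essentially step for step. It uses the same split into the regime $\gamma<T$, handled by the comparison lemma applied to $k$, $\bar\psi_\rmt$, $\bar\psi_\rmm$ and $\gamma$, and the regime after the first time $t_1$ with $\gamma(t_1)=T$. In that second regime, $h_\rmb(x(t_1))=\bar\psi_\rmt(\bar x(t_1))\ge 0$, and Proposition~\ref{prop:backup_controller_exists} keeps $x(t)\in\SC_\rmb\subset\SC_\rms$. Your closedness argument for $\bar x(t_1)\in\bar\Psi$, the uniqueness argument for remaining on $\{\gamma=T\}$, and the observation that $\theta$ (hence $k(\theta)$) is frozen there make explicit what the paper leaves implicit.

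One caveat: the first-exit argument does not remove the underlying circularity. Step~1 presupposes $\bar x(t)\in\bar\Psi$, which is the only set on which $\bar u_{\rm fi}$ and $\bar K$ are defined, so invoking it on a neighborhood beyond $t^*$ assumes what is to be shown. However, the paper's proof, like that of Theorem~\ref{prop:vector_dcbf_invariance}, rests on the same implicit standing assumption that the closed-loop solution exists on $[0,\infty)$, so this is not a gap relative to the paper.
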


\begin{proof}
\indent Let $\bar{x}_0 \in \bar\Psi$, and consider two cases: (i) $\gamma(t) < T$ for all $t \geq 0$; and (ii) there exists $t_1 \geq 0$ such that $\gamma(t_1) = T$.

For case (i), since $\gamma(t) < T$ for all $t \geq 0$, it follows that $\bar{u}_{\rm fi}(\bar{x}(t)) \in \bar K(\bar{x}(t))$ for all $t \geq 0$.
Since $\bar{u}_{\rm fi}(\bar{x}) \in \bar K(\bar{x})$ satisfies \eqref{eq:K_constraint}, $\bar\psi_\rmm$ is locally Lipschitz and directionally differentiable by Proposition~\ref{prop:psi_lipschitz}, and $\bar\psi_\rmt$ and $k$ are continuously differentiable, it follows from Lemma~\ref{lem:dini_directional} that
\begin{align}
\frac{\rmd^+}{\rmd t}\bar\psi_\rmm(\bar{x}(t))
&\geq -\alpha_\rmm(\bar\psi_\rmm(\bar{x}(t))), \label{eq:proof_psi_c_dini} \\
\frac{\rmd}{\rmd t}\bar\psi_\rmt(\bar{x}(t))
&\geq -\alpha_\rmb(\bar\psi_\rmt(\bar{x}(t))), \label{eq:proof_psi_b_dot} \\
\frac{\rmd}{\rmd t} k(\theta(t))
&\geq -\alpha_\theta(k(\theta(t))), \label{eq:proof_k_dot} \\
z(t) &\geq -\alpha_\gamma(\gamma(t)). \label{eq:proof_gamma_dot}
\end{align}
Let $\eta_\rmm, \eta_\rmt, \eta_\theta, \eta_\gamma \colon [0,\infty) \to \BBR$ satisfy $\dot\eta_\rmm = -\alpha_\rmm(\eta_\rmm)$, $\dot\eta_\rmt = -\alpha_\rmb(\eta_\rmt)$, $\dot\eta_\theta = -\alpha_\theta(\eta_\theta)$, $\dot\eta_\gamma = -\alpha_\gamma(\eta_\gamma)$, where $\eta_\rmm(0) = \bar\psi_\rmm(\bar{x}_0)$, $\eta_\rmt(0) = \bar\psi_\rmt(\bar{x}_0)$, $\eta_\theta(0) = k(\theta_0)$, $\eta_\gamma(0) = \gamma_0$.
Since $\bar{x}_0 \in \bar\Psi$ implies $\eta_\rmm(0) \geq 0$, $\eta_\rmt(0) \geq 0$, $\eta_\theta(0) \geq 0$, and $\eta_\gamma(0) \geq 0$, and each $\alpha_\rmm$, $\alpha_\rmb$, $\alpha_\theta$, $\alpha_\gamma$ is an extended class-$\SK$ function, it follows that for all $t \geq 0$, $\eta_\rmm(t) \geq 0$, $\eta_\rmt(t) \geq 0$, $\eta_\theta(t) \geq 0$, and $\eta_\gamma(t) \geq 0$.
It follows from \cref{eq:proof_psi_c_dini,eq:proof_psi_b_dot,eq:proof_k_dot,eq:proof_gamma_dot} and the comparison lemma \cite[Lemma~3.4]{khalil2002nonlinear} that for all $t \geq 0$, $\bar\psi_\rmm(\bar{x}(t)) \geq \eta_\rmm(t) \geq 0$, $\bar\psi_\rmt(\bar{x}(t)) \geq \eta_\rmt(t) \geq 0$, $k(\theta(t)) \geq \eta_\theta(t) \geq 0$, and $\gamma(t) \geq \eta_\gamma(t) \geq 0$.
Since $\gamma(t) < T$ for all $t \geq 0$, it follows that $\gamma(t) \in [0,T]$ for all $t \geq 0$.
Since $\bar K(\bar{x}) \subseteq \SU \times \Omega \times \SZ$ implies $u(t) \in \SU$ for all $t \geq 0$, it follows that $\bar{x}(t) \in \bar\Psi$ for all $t \geq 0$.

For case (ii), since $\gamma(t) < T$ for all $t \in [0, t_1)$, it follows from case (i) that for all $t \in [0, t_1)$, $\bar{x}(t) \in \bar\Psi$.
Since $\bar{x}(t_1) \in \bar\Psi$ and $\gamma(t_1) = T$, it follows from \eqref{eq:psi_b} that $h_\rmb(x(t_1)) = \bar\psi_\rmt(\bar{x}(t_1)) \ge 0$, which implies $x(t_1) \in \SC_\rmb$.
Since $x(t_1) \in \SC_\rmb$ and $\bar{u}_{\rm fi}(\bar{x}) = \bar{u}_\rmb(\bar{x})$ for $\gamma = T$, it follows from \eqref{eq:def:ubar_p_ubar_b} that $u(t) = u_\rmb(x(t))$ for all $t \ge t_1$.
Since $x(t_1) \in \SC_\rmb$, Proposition~\ref{prop:backup_controller_exists} implies that for all $t \in [t_1,\infty)$, $x(t) \in \SC_\rmb$.
Since, in addition, $\SC_\rmb \subset \SC_\rms$, and for all $t \in [t_1,\infty)$, $\gamma(t) = T$, it follows from \cref{eq:psi_min,eq:psi_b} that for all $t \in [t_1,\infty)$, $\bar\psi_\rmm(\bar{x}(t)) = h_\rms(x(t)) \ge 0$ and $\bar\psi_\rmt(\bar{x}(t)) = h_\rmb(x(t)) \ge 0$.
Thus, for all $t \geq 0$, $\bar{x}(t) \in \bar\Psi$.
\end{proof}

\subsection{Optimization-Based Control}

This section presents the safe optimal flow control that optimizes the integral cost while guaranteeing that $\bar{x}(t) \in \bar\Psi$ for all $t \geq 0$.
Consider the cost $\bar J \colon \BBR^{\bar n} \to \BBR$ given by
\begin{equation}\label{eq:J_augmented}
\bar J(\bar{x}) \triangleq W \left ( \varphi(T; \bar{x}) \right )
 + \int_{\gamma}^{T} R\left(\varphi(\tau; \bar{x}),u_\rmp(\tau;\theta) \right) \, {\rm d} \tau,
\end{equation}
which is analogous to \eqref{eq:def:J} except $\phi$ is replaced by $\varphi$. 
For $\gamma = 0$, $\bar J(\bar{x})$ reduces to $J(x,\theta)$.
To make the time derivative of $\bar J$ small, we consider a quadratic cost that is analogous to \eqref{eq:SJ}. 
Specifically, consider the quadratic cost
\begin{align}\label{eq:qp_cost}
\bar\SJ(\hat{\bar{u}}; \bar{x}) &\triangleq \frac{\partial \bar J}{\partial x} g(x) \hat{u} + \frac{\partial \bar J}{\partial \theta} \hat{\omega} + \frac{\partial \bar J}{\partial \gamma} \hat{z} \nn \\
&\quad + (\hat{u} - u_\rmp(\gamma; \theta))^\top Q_u (\hat{u} - u_\rmp(\gamma; \theta)) \nn \\
&\quad + \hat{\omega}^\top Q_\omega \hat{\omega} + Q_z \hat{z}^2 + \lambda \hat{z},
\end{align}
where $Q_z > 0$ and $\lambda \geq 0$.
Similar to \eqref{eq:SJ}, the first 3 terms make $\frac{\rmd \bar J}{\rmd t}$ small, and the next 3 terms provide regularization that make \eqref{eq:qp_cost} strictly convex.
The final term $\lambda \hat{z}$ penalizes increasing $\gamma$. 
In other words, this incentivizes a large planning horizon $T-\gamma$. 

\begin{remark}\rm
Since $\bar J$ involves an integral over $[\gamma, T]$, it follows that increasing $\gamma$ shrinks the prediction window, which can decrease $\bar J$.
Consequently, the term $\frac{\partial \bar J}{\partial \gamma} \hat{z}$ in \eqref{eq:qp_cost} can incentivize increasing $\gamma$ to reduce $\bar J$. 
This effect is not generally desirable; rather it is desirable for $\gamma$ to increase if and only if needed for feasibility. 
This work includes the $\lambda \hat{z}$ with relatively large $\lambda$ to mitigate this effect. 
The effect can also be mitigated by omitting $\frac{\partial \bar J}{\partial \gamma} \hat{z}$ from \eqref{eq:qp_cost} and/or multiplying the integral in \eqref{eq:J_augmented} by $\frac{T}{T-\gamma}$ to provide normalization. 
All 3 methods are effective in simulation. 
\end{remark}

The next result extends Proposition~\ref{prop:K_convex} and shows that $\bar\SJ$ has a unique global minimizer over $\bar K(\bar{x})$. 
The proof is similar to that of Proposition~\ref{prop:K_convex}.

\begin{proposition}\label{prop:K_bar_mt_convex}
\rm
Assume $\SU$, $\Omega$, and $\SZ$ are convex.
Then, for all $\bar{x} \in \{\bar{x} \in \bar\Psi \colon \gamma \neq T\}$, the following hold:
\begin{enumerate}[leftmargin=0.7cm]

\item\label{prop:K_bar_mt_convex.b} $\operatorname*{argmin}_{\hat{\bar{u}} \in \bar K(\bar{x})} \bar\SJ(\hat{\bar{u}}; \bar{x})$ exists and is unique.

\item\label{prop:K_bar_mt_convex.a} $\bar K(\bar{x})$ is convex.

\end{enumerate}
\end{proposition}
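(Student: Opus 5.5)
We prove part (b) first and then deduce (a), following the proof of Proposition~\ref{prop:K_convex}. Fix $\bar{x} \in \{\bar{x} \in \bar\Psi \colon \gamma \neq T\}$, so $\gamma \in [0,T)$. We examine the four constraints in \eqref{eq:K_constraint} one at a time.

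The first constraint, $k'(\theta)\hat\omega + \alpha_\theta(k(\theta)) \ge 0$, is affine in $\hat\omega$. The third, $L_{\bar f}\bar\psi_\rmt(\bar x) + L_{\bar g}\bar\psi_\rmt(\bar x)\hat{\bar u} + \alpha_\rmb(\bar\psi_\rmt(\bar x)) \ge 0$, is affine in $\hat{\bar u}$ because $\bar\psi_\rmt$ is continuously differentiable; this follows as in Proposition~\ref{prop:psi_t_smooth}, since $\varphi(T;\cdot)$ is $C^1$. The fourth, $\hat z + \alpha_\gamma(\gamma) \ge 0$, is affine in $\hat z$. Each of these three therefore defines a closed half-space (or all of the space, if its coefficient vanishes).

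For the second constraint we invoke Proposition~\ref{prop:dcbf_iff_augmented} with $c = -\alpha_\rmm(\bar\psi_\rmm(\bar x))$. It shows that this constraint holds if and only if \eqref{eq:affine_constraints_new} holds for every $\tau \in \overline{\ST}(\bar x)$. For each fixed $\tau$, the left-hand side of \eqref{eq:affine_constraints_new} is a fixed row vector times $\bar f(\bar x) + \bar g(\bar x)\hat{\bar u}$, which is affine in $\hat{\bar u}$. The second constraint is therefore an intersection, possibly infinite, of closed half-spaces, and such an intersection is convex and closed.

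Hence $\bar K(\bar x)$ is the intersection of the convex set $\SU \times \Omega \times \SZ$ with closed convex sets, so it is convex. This proves (b).

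For (a), Theorem~\ref{thm:feasibility} gives that $\bar K(\bar x)$ is nonempty. The cost $\bar\SJ(\cdot;\bar x)$ in \eqref{eq:qp_cost} consists of linear terms plus the quadratic form $(\hat u - u_\rmp)^\top Q_u (\hat u - u_\rmp) + \hat\omega^\top Q_\omega \hat\omega + Q_z \hat z^2$. Since $Q_u, Q_\omega, Q_z$ are positive definite, the Hessian is block diagonal and positive definite. So $\bar\SJ$ is strictly convex and coercive, and any minimizer over the convex set $\bar K(\bar x)$ is unique.

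The main obstacle is existence, which requires $\bar K(\bar x)$ to be closed. The half-space constraints above are closed, but $\SU$ and $\Omega$ are only assumed convex, and $\SZ$ is a closed interval. I would handle this in one of two ways: invoke the standing closedness of $\SU$ and $\Omega$ (they are polytopes in the QP setting), or, following the proof of Proposition~\ref{prop:K_convex}, note that a strictly convex coercive function attains its infimum on any nonempty closed convex set. I would make the closedness of $\SU$ and $\Omega$ explicit. Under it, $\bar K(\bar x)$ is nonempty, closed, and convex, so coercivity gives a compact nonempty sublevel set, continuity of $\bar\SJ$ gives existence of a minimizer, and strict convexity gives its uniqueness.
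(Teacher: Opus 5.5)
Your proof is correct and follows the paper's route: the paper only says the argument mirrors Proposition~\ref{prop:K_convex}, meaning Proposition~\ref{prop:dcbf_iff_augmented} writes $\bar K(\bar x)$ as $\SU\times\Omega\times\SZ$ intersected with affine half-spaces, Theorem~\ref{thm:feasibility} gives nonemptiness, and strict convexity of $\bar\SJ$ gives a unique minimizer. Your observation that existence also needs $\bar K(\bar x)$ to be closed is a genuine refinement. Closedness holds if $\SU$ and $\Omega$ are closed, as they are in the polytope setting, and with coercivity it gives a minimizer; the paper infers existence from nonemptiness, convexity and strict convexity alone, which does not suffice when $\SU$ or $\Omega$ is not closed.
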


For all $\bar{x} \in \{\bar{x} \in \bar\Psi \colon \gamma \neq T\}$, define
\begin{equation}\label{eq:v_star_def}
\bar{u}_{\rmp*}(\bar x) \triangleq \operatorname*{argmin}_{\hat{\bar{u}} \in \bar K(\bar{x})} \bar\SJ(\hat{\bar{u}}; \bar{x}),
\end{equation}
where it follows from Proposition~\ref{prop:K_bar_mt_convex} that $\bar{u}_{\rmp*}(\bar{x})$ exists and is unique.

\begin{remark}\rm
Similar to Remark~\ref{remark:PCBF-QP}, if $\SU$, $\Omega$, and $\SZ$ are convex polytopes, then Proposition~\ref{prop:dcbf_iff_augmented} implies that all constraints that define $\bar K(\bar{x})$ are affine.
In this case, the optimal control \eqref{eq:v_star_def} can be obtained efficiently from a QP that depends on the predicted flow $\varphi$.
Section~\ref{sec:impl_remark} presents implementation of this QP.
\end{remark}

Finally, consider the safe optimal flow control $\bar{u}_* \colon \bar\Psi \to \SU \times \Omega \times \SZ$ defined by 
\begin{equation}\label{eq:control_law}
\bar{u}_*(\bar{x}) \triangleq \begin{cases}
\bar{u}_{\rmp*}(\bar{x}), & \gamma \in[0, T), \\
\bar{u}_\rmb(\bar x), & \gamma = T.
\end{cases}
\end{equation}
The term $\lambda \hat z$ in \eqref{eq:qp_cost} incentivizes $\gamma$ to be close to zero; however, it does not prevent $\gamma=T$.
The constraint set $\bar K(\bar x)$ is not well defined for $\gamma=T$ because the directional derivative of $\bar \psi_\rmm$ does not necessarily exist.
Hence, if $\gamma = T$, then control $\bar{u}_*$ switches to $\bar{u}_\rmb$.

The next result shows that the safe optimal flow control $\bar{u}_*$ makes $\bar \Psi$ forward invariant. 
This result is an immediate consequence of Theorem~\ref{thm:forward_inv_K} because $\bar{u}_*(\bar{x}) \in \bar K(\bar{x})$ for all $\bar{x} \in \{\bar{x} \in \bar\Psi \colon \gamma \neq T\}$ and $\bar{u}_*(\bar{x}) = \bar{u}_\rmb(\bar{x})$ for all $\bar{x} \in \{\bar{x} \in \bar\Psi \colon \gamma = T\}$.

\begin{corollary}\label{cor:main_safety}
    \rm{
Assume $\SU$, $\Omega$, and $\SZ$ are convex, and assume Assumption~\ref{ass:backup_set} is satisfied.
Then, $\bar\Psi$ is forward invariant with respect to \eqref{eq:augmented} with $\bar{u} = \bar{u}_*$.
    }
\end{corollary}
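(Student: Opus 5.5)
The plan is to obtain Corollary~\ref{cor:main_safety} directly from Theorem~\ref{thm:forward_inv_K} by verifying that $\bar u_*$ is an admissible choice of $\bar u_{\rm fi}$ there. Concretely, I will check two things about the map $\bar u_* \colon \bar\Psi \to \SU\times\Omega\times\SZ$ defined in \eqref{eq:control_law}.

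\emph{First condition: the case $\gamma \neq T$.} Fix $\bar x \in \{\bar x \in \bar\Psi \colon \gamma \neq T\}$. Since $\gamma \in [0,T]$ on $\bar\Psi$, this means $\gamma \in [0,T)$, so $\bar u_*(\bar x) = \bar u_{\rmp*}(\bar x)$.
\begin{itemize}
\item Theorem~\ref{thm:feasibility} gives that $\bar K(\bar x)$ is nonempty.
\item Convexity of $\SU$, $\Omega$, $\SZ$ together with Proposition~\ref{prop:K_bar_mt_convex} gives that the minimizer in \eqref{eq:v_star_def} exists and is unique.
\end{itemize}
Hence $\bar u_{\rmp*}(\bar x)$ is well defined and, being a minimizer over $\bar K(\bar x)$, it lies in $\bar K(\bar x)$. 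Because $\bar K(\bar x) \subseteq \SU\times\Omega\times\SZ$, it is also admissible.

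\emph{Second condition: the case $\gamma = T$.} Here $\bar u_*(\bar x) = \bar u_\rmb(\bar x)$ by definition. This is well defined because $\bar x \in \bar\Psi$ and $\gamma = T$ give $h_\rmb(x) = \bar\psi_\rmt(\bar x) \ge 0$, so $x \in \SC_\rmb$. Then $u_\rmb(x)$ exists by Assumption~\ref{ass:backup_set} and \eqref{eq:backup_controller}. Its remaining components are $0 \in \Omega$ and $0$; the latter should be in $\SZ$, a closed interval containing $1$, and is assumed implicitly by the definition of $\bar u_\rmb$.

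\emph{Conclusion.} With both hypotheses of Theorem~\ref{thm:forward_inv_K} verified, and Assumption~\ref{ass:backup_set} in force, that theorem yields forward invariance of $\bar\Psi$ under $\bar u = \bar u_*$.

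The only potential obstacle is well-posedness rather than invariance. Two points need care:
\begin{itemize}
\item $\bar u_{\rmp*}$ must be defined on exactly the set where Theorem~\ref{thm:forward_inv_K} requires membership in $\bar K$. This is handled by Theorem~\ref{thm:feasibility} and Proposition~\ref{prop:K_bar_mt_convex}.
\item The closed loop must possess a solution. The paper's standing assumption that all solutions exist and are unique covers this, so no further argument is needed.
\end{itemize}
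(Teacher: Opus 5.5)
Your proposal is correct and follows the paper's argument: the paper also derives the corollary as an immediate consequence of Theorem~\ref{thm:forward_inv_K}, since $\bar u_*(\bar x)\in\bar K(\bar x)$ whenever $\gamma\neq T$ and $\bar u_*(\bar x)=\bar u_\rmb(\bar x)$ whenever $\gamma=T$. The paper leaves two details implicit that you make explicit. First, you check that $\bar u_{\rmp*}$ is well defined using Theorem~\ref{thm:feasibility} and Proposition~\ref{prop:K_bar_mt_convex}. Second, you note that $0\in\SZ$ is tacitly required for $\bar u_\rmb$ to be admissible.
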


Corollary~\ref{cor:main_safety} implies that for all $(t,\tau) \in [0,\infty) \times [\gamma(t), T]$, $\varphi(\tau;\bar{x}(t)) \in \SC_\rms$; and for all $t \geq 0$, $\varphi(T;\bar{x}(t)) \in \SC_\rmb$ and $\theta(t) \in \Theta$.
Thus, $\bar{u}_*$ guarantees that the predicted flow $\varphi(\cdot;\bar{x}(t))$ and actual state $x(t)$ are in the safe set $\SC_\rms$ for all time $t \ge 0$, while minimizing $\bar\SJ$, which aims to decrease the receding-horizon cost $\bar J$ along the trajectories of \eqref{eq:augmented}.

\section{QP Implementation of Safe Optimal Flow Control} \label{sec:impl_remark}

This section presents a QP implementation of the safe optimal flow control \eqref{eq:control_law}.
For this section, we assume $\SU$, $\Omega$, and $\SZ$ are convex polytopes.

\subsection{Numerical Implementation}\label{sec:impl_numerical}

Proposition~\ref{prop:dcbf_iff_augmented} implies that the constraint on $\bar\psi_\rmm$ in \eqref{eq:K_constraint} is equivalent to the family of affine constraints \eqref{eq:affine_constraints_new}.
Thus, \eqref{eq:v_star_def} can be expressed as 
\begin{subequations}\label{eq:semi_infinite_qp}
\begin{align}
&\quad \bar{u}_{\rmp*}(\bar{x}) = \operatorname*{argmin}_{\hat{\bar{u}} \in \SU \times \Omega \times \SZ} \bar\SJ(\hat{\bar{u}}; \bar{x}) \label{eq:semi_infinite_qp_obj}\\
&\text{subject to} \nn \\
&\quad k^\prime(\theta)\,\hat{\omega} + \alpha_\theta(k(\theta)) \geq 0, \label{eq:semi_infinite_qp_param}\\
&\quad h_\rms'(\varphi(\tau; \bar{x})) \Big[ \frac{\partial \varphi}{\partial \bar{x}}(\tau; \bar{x}) + \mathbf{1}_{\{\tau = \gamma\}} \frac{\partial \varphi}{\partial \tau}(\tau; \bar{x}) \frac{\partial \gamma}{\partial \bar{x}} \Big] \nn\\
&\qquad \times (\bar{f}(\bar{x}) + \bar{g}(\bar{x})\hat{\bar{u}}) + \alpha_\rmm(\bar\psi_\rmm(\bar{x})) \geq 0, \quad \forall \tau \in \overline{\ST}(\bar{x}), \label{eq:semi_infinite_qp_safe}\\
&\quad L_{\bar{f}} \bar\psi_\rmt(\bar{x}) + L_{\bar{g}} \bar\psi_\rmt(\bar{x}) \hat{\bar{u}} + \alpha_\rmb(\bar\psi_\rmt(\bar{x})) \geq 0, \label{eq:semi_infinite_qp_term}\\
&\quad \hat{z} + \alpha_\gamma(\gamma) \geq 0. \label{eq:semi_infinite_qp_gamma}
\end{align}
\end{subequations}
All constraints in \eqref{eq:semi_infinite_qp} are affine.
However, \eqref{eq:semi_infinite_qp_safe} may constitute infinitely many affine constraints because $\overline{\ST}(\bar{x})$ may contain infinitely many points.
Thus, \eqref{eq:semi_infinite_qp} is a semi-infinite QP, which can be solved with a variety of approaches; see \cite{lopez2007semi}.
This article uses a discretization method over the planning time, where $\overline{\ST}(\bar{x})$ is approximated with a finite set. 

Let $N$ be a positive integer, and  define $T_\rmd \triangleq \frac{T - \gamma}{N}$.
Then, consider the set
\begin{equation}\label{eq:T_e}
\overline{\ST}_\rme(\bar{x}) \triangleq \operatorname*{argmin}_{i \in \{0, 1, \ldots, N\}} h_\rms(\varphi(\gamma + i T_\rmd; \bar{x})),
\end{equation}
which contains the discrete prediction times in $\{\gamma + i T_\rmd\}_{i=0}^N$ at which $h_\rms(\varphi(\,\cdot\,;\bar{x}))$ attains its minimum.
The approximation $\overline{\ST}_\rme(\bar{x})$ converges to $\overline{\ST}(\bar{x})$ as $N \to \infty$.
Furthermore, replacing $\overline{\ST}(\bar{x})$ with $\overline{\ST}_\rme(\bar{x})$ in \eqref{eq:semi_infinite_qp_safe} yields a QP with a finite number of affine constraints.

\begin{remark}\rm
The approximation \eqref{eq:T_e} can be improved by using the sign changes of $\frac{\rmd}{\rmd \tau} h_\rms(\varphi(\tau; \bar{x})) = h_\rms^\prime(\varphi(\tau; \bar{x})) F(\varphi(\tau; \bar{x}), u_\rmp(\tau; \theta))$ evaluated at $\tau \in \{\gamma + i T_\rmd\}_{i=0}^N$ to bracket local critical points.
Then, the discretization time step can be refined (e.g., midpoint refinement or bisection) in the appropriate intervals.
\end{remark}
 
\begin{remark}\rm
In practice, $h_\rms(\varphi(\,\cdot\,;\bar{x}))$ often has a unique global minimizer over $[\gamma, T]$. 
In this case, \eqref{eq:semi_infinite_qp_safe} reduces to a single affine constraint.
\end{remark}

The QP \eqref{eq:semi_infinite_qp} requires the gradients 
\begin{equation}\label{eq:gradients}
\frac{\partial \bar J(\bar x)}{\partial \bar x}, \qquad \frac{\partial \bar\psi_\rmt(\bar x)}{\partial \bar x}, \quad \frac{\partial h_\rms(\varphi(\tau;\bar{x}))}{\partial \bar x},    
\end{equation}
for $\tau \in \overline{\ST}_\rme(\bar{x})$. 
These can be computed using a forward sensitivity approach, where $\frac{\partial \varphi}{\partial \bar{x}}(\tau; \bar{x})$ is obtained by solving \eqref{eq:dphi_dx}--\eqref{eq:dphi_dgamma} forward in prediction time alongside the predicted flow \eqref{eq:flow_def_shifted}. 
However, this approach requires integrating a system of ordinary differential equations with dimension $n(n+d+1)$, which is quadratic in $n$ and scales with the number of plan parameters $d$.

Alternatively, the adjoint approach \cite{chen2018neural} is a more computationally efficient method to compute \eqref{eq:gradients}.
The advantage is that each adjoint ordinary differential equation has dimension $n$.
Specifically, since $\bar J$, $\bar \psi_\rmt$, and $h_\rms(\varphi(\tau;\bar x))$ are scalar functions of $\bar x$, each gradient in \eqref{eq:gradients} can be computed with an adjoint $n$-dimensional backward integration. The gradients of $\bar J$ and $\bar\psi_\rmt$ require backward integration from $T$ to $\gamma$, while the gradient of $h_\rms(\varphi(\tau;\bar{x}))$ at each $\tau \in \overline{\ST}_\rme(\bar{x})$ requires backward integration from $\tau$ to $\gamma$.
Thus, all gradients are computed by integrating $2 + n_\rme$ different $n$-dimensional ordinary differential equations, where $n_\rme$ is the number of elements in $\overline{\ST}_\rme(\bar{x})$.
Hence, the complexity is linear in $n$ and does not increase with $d$, which implies that the adjoint method has significant computational benefit for large $n$ and/or large $d$. 
We also note that the $2 + n_\rme$ differential equations are decoupled from one another and can be solved in parallel.

Algorithm~\ref{alg:flowbarrier} summarizes \textit{FlowBarrier}, which is the QP implementation of the safe optimal flow control  \eqref{eq:control_law} where $\delta t >0$ is the time increment for a zero-order hold on the control $\bar u_*$. 
We write the components of $\bar u_*$ as $\bar u_* = \matls u_* \\ \omega_* \\ z_* \matrs$. 
At each time step, Algorithm~\ref{alg:flowbarrier} has 4 steps: (i) forward integration of \eqref{eq:flow_def_shifted} to obtain the predicted flow $\varphi$, and evaluate $\bar\psi_\rmm$, $\bar\psi_\rmt$, and $k$; (ii) backward integration of the $2+n_\rme$ parallel adjoint ordinary differential equations to obtain \eqref{eq:gradients}; (iii) solve QP \eqref{eq:semi_infinite_qp} with $\overline{\ST}_\rme$ replacing $\overline{\ST}$ to obtain $\bar{u}_{\rmp*}$; and (iv) execute control $u_*$ and update $\theta$ and $\gamma$ by integrating optimal derivatives $\omega_*$ and $z_*$ over time increment $\delta t$.
If $\gamma = T$, then steps (ii) and (iii) are replaced by solving the QP \eqref{eq:backup_controller} to obtain the backup $\bar u_\rmb$.

\begin{algorithm}[ht!]
\caption{FlowBarrier}
\DontPrintSemicolon
\small\linespread{0.9}\selectfont
\label{alg:flowbarrier}
\SetKwComment{Comment}{$\triangleright$\ }{}
\SetKwInput{KwParam}{Parameters}
\KwParam{\small$h_\rmb, h_\rms, k, \beta_i, p, N, Q_u, Q_\omega, Q_z, \lambda, \alpha_\rmm, \alpha_\rmb, \alpha_\theta, \alpha_\gamma, \delta t$} 
\KwInit{$\theta \gets \theta_0$, $\gamma \gets \gamma_0$}
\BlankLine
\For{$j = 0, 1, 2, \ldots$}{
    \tcp{Measure state}
    $x \gets x(j\delta t)$\;
    $\bar{x} \gets [x^\top \;\theta^\top \;\gamma]^\top$\;
    \tcp{Forward pass}
    Solve \eqref{eq:flow_def_shifted} for $\{\varphi(\gamma + iT_\rmd;\bar x)\}_{i=0}^N$\;
    $\bar\psi_\rmm \gets \eqref{eq:psi_min}$, $\bar\psi_\rmt \gets \eqref{eq:psi_b}$, $\overline{\ST}_\rme \gets \eqref{eq:T_e}$\;
    \eIf{$\gamma = T$}{
        $u_* \gets$ solution to \eqref{eq:backup_controller}\;
        $\omega_* \gets 0_{d \times 1}$\;
        $z_* \gets 0$\;
    }{
        \tcp{Backward pass}
        Compute $\frac{\partial \bar J}{\partial \bar x}$, $\frac{\partial \bar\psi_\rmt}{\partial \bar x}$, and $\frac{\partial}{\partial \bar x} h_\rms(\varphi(\tau;\bar{x}))$ for all $\tau \in \overline{\ST}_\rme$\;
        \tcp{Solve QP}
        $[u_*^\top \;\; \omega_*^\top \;\; z_*]^\top \gets$ solution to \eqref{eq:semi_infinite_qp} with $\overline{\ST}_\rme$ replacing $\overline{\ST}$\;
    }
    \tcp{Execute control}
    $u \gets u_*$\;
    \tcp{Update control plan}
    $\theta \gets \theta + \omega_* \delta t$\;
    $\gamma \gets \gamma + z_* \delta t$\;
}
\end{algorithm}

\begin{remark}\label{rem:recovery}\rm
Although external disturbances, model uncertainties, or sampled-data effects can cause $\bar{x}$ to leave $\bar\Psi$, it follows from Theorem~\ref{thm:feasibility} that $\{\bar{x} \in \bar\Psi \colon \gamma \neq T\} \subseteq \{\bar{x} \colon \bar K(\bar{x}) \neq \emptyset\}$.
Thus, \eqref{eq:semi_infinite_qp} may be feasible for $\bar{x} \notin \bar\Psi$.
If $\bar{x} \notin \bar\Psi$ and $\bar K(\bar{x})$ is nonempty, then $\bar{u}_{\rmp*}(\bar{x})$ drives $\bar{x}$ back to $\bar\Psi$.
If $\bar K(\bar{x})$ is empty, then slack variables can be used to ensure the QP is feasible while trying to drive $\bar{x}$ back to $\bar\Psi$.
\end{remark}

\begin{remark}\label{rem:gamma_T}\rm
If there is a time $t_1 \ge 0$ such that $\gamma(t_1) = T$, then the control \eqref{eq:control_law} results in $u_* = u_\rmb$ and $\gamma = T$ for all $t \ge t_1$.
Thus, \eqref{eq:control_law} does not have a mechanism to decrease $\gamma$ and recover the prediction window. 
Simulations suggest that it may be unlikely that $\gamma$ increases to $T$; however, if this occurs, then there is a practical approach to recover the prediction window. 
To explain, consider the QP
\begin{subequations}\label{eq:rescue_qp}
\begin{align}
&\quad \min_{\substack{\hat{\bar{u}} \in \SU \times \Omega \times \SZ \\ \hat\delta_\rmm,\, \hat\delta_\rmt \in \BBR}} \bar\SJ(\hat{\bar{u}}; \bar{x}) + Q_\delta (\hat\delta_\rmm^2 + \hat\delta_\rmt^2) + \lambda_\delta (\hat\delta_\rmm + \hat\delta_\rmt) \label{eq:rescue} \\
&\text{subject to} \nn \\
&\quad L_f h_\rmb(x) + L_g h_\rmb(x) \hat{u} + \alpha_\rmb(h_\rmb(x)) \geq 0, \label{eq:rescue_c1} \\
&\quad k^\prime(\theta)\,\hat{\omega} + \alpha_\theta(k(\theta)) \geq 0, \label{eq:rescue_c2} \\
&\quad D_{\bar{f}(\bar{x}) + \bar{g}(\bar{x})\hat{\bar{u}}} \bar\psi_\rmm(\bar{x}) + \hat\delta_\rmm \geq 0, \label{eq:rescue_c3} \\
&\quad L_{\bar{f}} \bar\psi_\rmt(\bar{x}) + L_{\bar{g}} \bar\psi_\rmt(\bar{x}) \hat{\bar{u}} + \hat\delta_\rmt \geq 0, \label{eq:rescue_c4} \\
&\quad \hat{z} + \alpha_\gamma(\gamma) \geq 0, \label{eq:rescue_c5}
\end{align}
\end{subequations}
where $Q_\delta > 0$ and $\lambda_\delta \ge 0$.
For all $\bar{x} \in \bar\SC_\rmb \triangleq \{\bar{x} \colon x \in \SC_\rmb,\, \theta \in \Theta,\, \gamma \in [0,T)\}$, the constraints \cref{eq:rescue_c1,eq:rescue_c2,eq:rescue_c3,eq:rescue_c4,eq:rescue_c5} are feasible. 
Specifically, $\bar{x} \in \bar\SC_\rmb$, $\hat\omega = 0$ and $\hat{z} = 0$ satisfy \cref{eq:rescue_c2,eq:rescue_c5}; Assumption~\ref{ass:backup_set} implies that there exists $\hat{u} \in \SU$ satisfying \eqref{eq:rescue_c1}; and the slack variables $\hat\delta_\rmm$ and $\hat\delta_\rmt$ can be selected to satisfy \cref{eq:rescue_c3,eq:rescue_c4}. 
Since, in addition, \eqref{eq:rescue} is strictly convex, it follows that for all $\bar x \in \bar \SC_\rmb$, the QP \eqref{eq:rescue_qp} has a unique solution, which we denoted by $\left  (\bar{u}_\rmr(\bar{x}),\delta_\rmm(\bar{x}),\delta_\rmt(\bar{x}) \right )$.

If there is a time $t_1 \ge 0$ such that $\gamma(t_1) = T$, then the prediction window may be recovered with the following procedure.
First, set $\gamma(t_1) = 0$, select $\theta(t_1) \in \Theta$, and let $\bar{u}(\bar{x}) = \bar{u}_\rmr(\bar{x})$, which attempts to drive $\bar{x}$ to $\bar\Psi$ while making $\SC_\rmb$ forward invariant. 
Next, if there exists a time $t_2 > t_1$ such that $\bar{x}(t_2) \in \bar\Psi$, then switch the control back to \eqref{eq:control_law}. 
\end{remark}

\subsection{Soft-Minimum Construction for $h_\rms$, $h_\rmb$, and $k$} \label{sec:impl_softmin}

In this article, the safe set $\SC_\rms$, backup set $\SC_\rmb$, and admissible parameter set $\Theta$ are each defined as the zero-superlevel set of one function (i.e., $h_\rms$, $h_\rmb$, and $k$, respectively). 
In practice, it can be useful to define each set as the intersection of zero-superlevel sets of multiple functions.
The approaches and analysis in this article extend directly to the case where $\SC_\rms$, $\SC_\rmb$, and $\Theta$ are the intersection of zero-superlevel sets of multiple functions. 
In this case, $\bar K$ includes a constraint for each function, which increases complexity of the QP \eqref{eq:v_star_def}.
An alternative approach is to use the log-sum-exponential soft minimum 
\cite{backupautomatic,rabiee2024closed,compositionACC} to compose multiple barrier functions into a single one. 
Specifically, let $\rho > 0$, and consider $\mbox{softmin}_\rho \colon \BBR^{n_{\rm sm}} \to \BBR$ defined by
\begin{equation}\label{eq:softmin_def}
\mbox{softmin}_\rho(z_1, \ldots, z_{n_{\rm sm}}) \triangleq -\frac{1}{\rho} \ln  \sum_{i=1}^{n_{\rm sm}} e^{-\rho z_i}.
\end{equation}
The soft minimum \eqref{eq:softmin_def} provides a continuously differentiable lower bound on the minimum (e.g., \cite{backupautomatic,rabiee2024closed}), that is, $\mbox{softmin}_\rho(z_1, \ldots, z_{n_{\rm sm}}) \leq \min \{z_1, \ldots, z_{n_{\rm sm}} \}$.

To illustrate a soft minimum of $h_\rms$, consider $n_\rms$ continuously differentiable barrier functions $b_1, \ldots, b_{n_\rms} \colon \BBR^n \to \BBR$.
The set where all constraints are satisfied is 
\begin{equation*}
\SSS_\rms \triangleq \{ x \in \BBR^n \colon b_1(x) \geq 0, \ldots, b_{n_\rms}(x) \geq 0 \},
\end{equation*}
which is the intersection of the zero-superlevel sets of $b_1, \ldots, b_{n_\rms}$.
Then, consider the safe set $\SC_\rms$, where 
\begin{equation}\label{eq:h_s_construction}
h_\rms(x) = \mbox{softmin}_{\rho_\rms}(b_1(x), \ldots, b_{n_\rms}(x)),
\end{equation}
and it follows from \cite{safari2024TSCT} that $\SC_\rms \subseteq \SSS_\rms$, and $\SC_\rms \to \SSS_\rms$ as $\rho_\rms \to \infty$.
Furthermore, the worst case conservativeness of the soft-minimum approximation of the minimum is $(\ln n_\rms)/\rho_\rms$ \cite{safari2024TSCT}.
Thus, $\rho_\rms$ can be selected to limit conservativeness based on $n_\rms$.
If $\rho_\rms$ is small, then the soft minimum is a conservative approximation of the minimum.
However, if $\rho_\rms$ is large, then the magnitude of $h_\rms^\prime$ is large at points where the minimum is not differentiable.
Thus, selecting $\rho_\rms$  is a trade-off between the conservativeness of $\SC_\rms$ and the magnitude of $h^\prime_\rms$. 
A similar approach can be used to construct $h_\rmb$.

The soft minimum can also be used to construct $k$. 
Since $\SU$ is a convex polytope, it can be expressed as 
\begin{equation*}
\SU = \{u \in \BBR^m \colon a_1^\top u + d_1 \geq 0, \ldots,  a_r^\top u + d_r \geq 0\},
\end{equation*}
where $a_1, \ldots, a_r \in \BBR^m$ and $d_1, \ldots, d_r \in \BBR$.
Let the control plan $u_\rmp$ be given by \eqref{eq:u_p_construction}, which implies that for each $\tau \in [0,T]$, $u_\rmp(\tau;\theta)$ is a convex combination of $\theta_1,\ldots,\theta_p$. 
Hence, if $(\theta_1,\ldots,\theta_p) \in \SU^p$, then for all $\tau \in [0,T]$, $u_\rmp(\tau;\theta) \in \SU$. 
Thus, the admissible parameter set $\Theta$ can be constructed with
\begin{align}\label{eq:k_construction}
k(\theta) = \mbox{softmin}_{\rho_k}(& a_1^\top \theta_1 + d_1, \ldots, a_r^\top \theta_1 + d_r, \ldots, \nn\\
&a_1^\top \theta_p + d_1, \ldots, a_r^\top \theta_p + d_r),
\end{align}
where $\rho_k >0$. 
Similar to above, it follows from \cite{safari2024TSCT} that $\Theta \subseteq \SU^p$, and $\Theta \to \SU^p$ as $\rho_k \to \infty$.

\section{Application to a Ground Robot}\label{sec:application}

\begin{figure*}[t!]
\centering
\includegraphics[width=\textwidth]{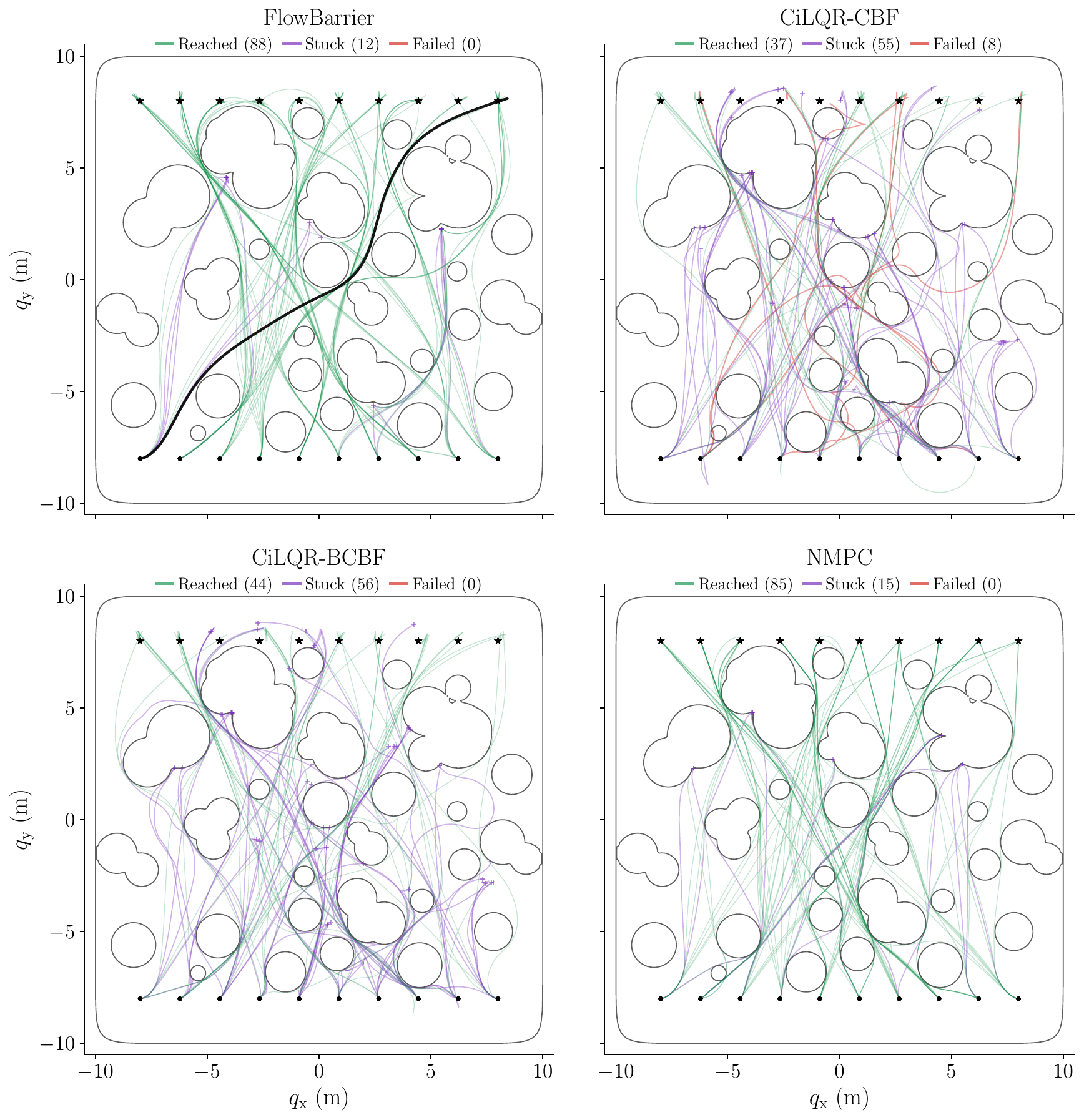}
\caption{Closed-loop trajectories for FlowBarrier, NMPC, CiLQR-CBF, and CiLQR-BCBF across 100 navigation trials in a dense obstacle environment. Trajectories are color-coded by task outcome: reached (green), stuck (purple), and failed (red). The highlighted black trajectory in FlowBarrier corresponds to the navigation task from initial state $x_0 = [\,-8\; -8\; 0 \; 0 \,]^\top$ to goal state $x_\rmd = [\,8\; 8 \; 0 \; 0\,]^\top$.}
\label{fig:monte_carlo}
\end{figure*}

Consider the nonholonomic ground robot modeled by \eqref{eq:affine_control}, where
\begin{equation*}
    f(x) = \begin{bmatrix}
     v\cos\vartheta \\
     v\sin\vartheta \\
    0 \\
    0
    \end{bmatrix},
    \,
    g(x) = \begin{bmatrix}
    0 & 0\\
    0 & 0\\
    1 & 0 \\
    0 & 1
    \end{bmatrix},
    \,
    x = \begin{bmatrix}
    q_\rmx\\
    q_\rmy\\
    v\\
    \vartheta
    \end{bmatrix}, 
    \,
    u = \begin{bmatrix}
    u_1\\
    u_2
    \end{bmatrix}, 
\end{equation*}
and $q \triangleq [ \, q_\rmx \quad q_\rmy \, ]^\top$ is the robot's position in an orthogonal coordinate system, $v$ is the speed, and $\vartheta$ is the direction of the velocity vector (i.e., the angle from $[ \, 1 \quad 0 \, ]^\top$ to $[ \, \dot q_\rmx \quad \dot q_\rmy \, ]^\top$).
Let $\bar u_1 = 2$, $\bar u_2 = 1$, and
\begin{equation*}
\SU = \{ [ u_1 \, u_2 ]^\top \in\BBR^2\colon u_1 \in [-\bar u_1, \bar u_1], u_2 \in [-\bar u_2, \bar u_2] \}.
\end{equation*}
Since $\SU$ is a convex polytope, the admissible parameter set $\Theta$ is constructed using \eqref{eq:k_construction} with $\rho_k = 50$.
Let $\Omega = \BBR^d$, which is convex and satisfies $0 \in \Omega$.
Let $\SZ = (-\infty, 1]$, which is a convex polytope in $\BBR$ and satisfies $1 \in \SZ$.
The upper bound $z \leq 1$ ensures that $\dot\gamma = z \leq 1$, so the planning-time shift $\gamma$ does not advance faster than real time.

Consider the map shown in \Cref{fig:monte_carlo}, which has $46$ circles and a wall. 
For $i \in \{1, \ldots, 46 \}$, the area outside the $i$th obstacle is modeled as the zero-superlevel set of
\begin{equation*}
b_i(x) = \left\| q - c_i \right\| - r_i,
\end{equation*}
where $c_i \in \BBR^2$ and $r_i > 0$ are the center and radius of the $i$th circle.
Similarly, the area inside the wall is modeled as the zero-superlevel set of
\begin{equation*}
    b_{47}(x) = a - \left(|q_\rmx|^{20} + |q_\rmy|^{20}\right)^{1/20},
\end{equation*}
where $a > 0$ specifies the half-width of the square wall.
The bounds on speed $v$ are modeled as the zero-superlevel sets of
\begin{equation*}
    b_{48}(x) = 2 - v, \quad b_{49}(x) = v + 2.
\end{equation*}
The safe set $\SC_\rms$ is given by \eqref{eq:safe_set} and \eqref{eq:h_s_construction} with $n_\rms = 49$ and $\rho_\rms = 20$.
The safe set $\SC_\rms$ projected into the $q_\rmx$-$q_\rmy$ plane is shown in \Cref{fig:monte_carlo}. Note that $\SC_\rms$ is also bounded in speed $v$, specifically, for all $x \in \SC_\rms$, $v \in [-2,2]$.

The backup safe set $\SC_\rmb$ is given by \eqref{eq:backup_set}, where
\begin{equation}\label{eq:backup_set_robot}
    h_\rmb(x) = h_\rms(x) - \frac{v^2}{2\bar u_1}.
\end{equation}
To verify Assumption~\ref{ass:backup_set}, consider the control $\tilde{u}_\rmb(x) \triangleq [\,-\bar u_1 \operatorname{sgn}(v) \;\; 0\,]^\top$, which applies maximum deceleration.
It can be shown by direct computation that $L_f h_\rmb(x) + L_g h_\rmb(x) \tilde{u}_\rmb(x) \geq 0$ for all $x \in \SC_\rmb$.
Thus, $\tilde{u}_\rmb(x) \in K_\rmb(x)$ for any extended class-$\SK$ function $\alpha_\rmb$, which implies that Assumption~\ref{ass:backup_set} is satisfied.

The control objective is for the robot to move from its initial state to a desired state $x_\rmd \triangleq \begin{bmatrix} q_{\rmd,\rmx}\; q_{\rmd,\rmy} \; 0 \; 0   \end{bmatrix}^\top\in \BBR^4$ without violating safety (i.e., hitting an obstacle or violating speed bounds). To accomplish this objective, consider the cost function given by \eqref{eq:J_augmented}, where
\begin{equation}\label{eq:cost_functions}
R(x) = W(x) = \| x - x_\rmd \|^2.
\end{equation}
Minimizing the cost function drives the state toward the desired state $x_\rmd$. The quadratic program \eqref{eq:v_star_def} includes the cost gradient as a linear term in the objective, which encourages $\bar{u}_{\rmp*}$ to reduce the cost along the predicted trajectory. Thus, minimizing the quadratic program objective drives the state toward the minimizer of $\bar J$ while satisfying the safety constraints. This approach eliminates the need for an explicit reference control, unlike standard CBF methods where a desired control input must be specified.

We implement Algorithm~\ref{alg:flowbarrier} with $\alpha_\rmm(\bar\psi_\rmm) = 10\bar\psi_\rmm$, $\alpha_\rmb(\bar\psi_\rmt) = \bar\psi_\rmt$, $\alpha_\theta(k) = 10k$, $\alpha_\gamma(\gamma)= 0.1\gamma$, $T = 4\, \rms$, $T_\rmd = 0.05\, \rms$, $p=80$, $Q_u = 10^6I_m$, $Q_\omega = 30I_{d}$, $Q_z = 10^{-6}$, $\lambda = 1000$, $N = 80$, $\delta t = 0.005\, \rms$, and $u_\rmp$ given by \eqref{eq:u_p_construction} with $\beta_i$ from \Cref{ex:foh}.

\begin{figure}[t!]
\centering
\includegraphics[width=0.48\textwidth, clip=true,trim= 0.15in 0.17in 0.15in 0.20in]{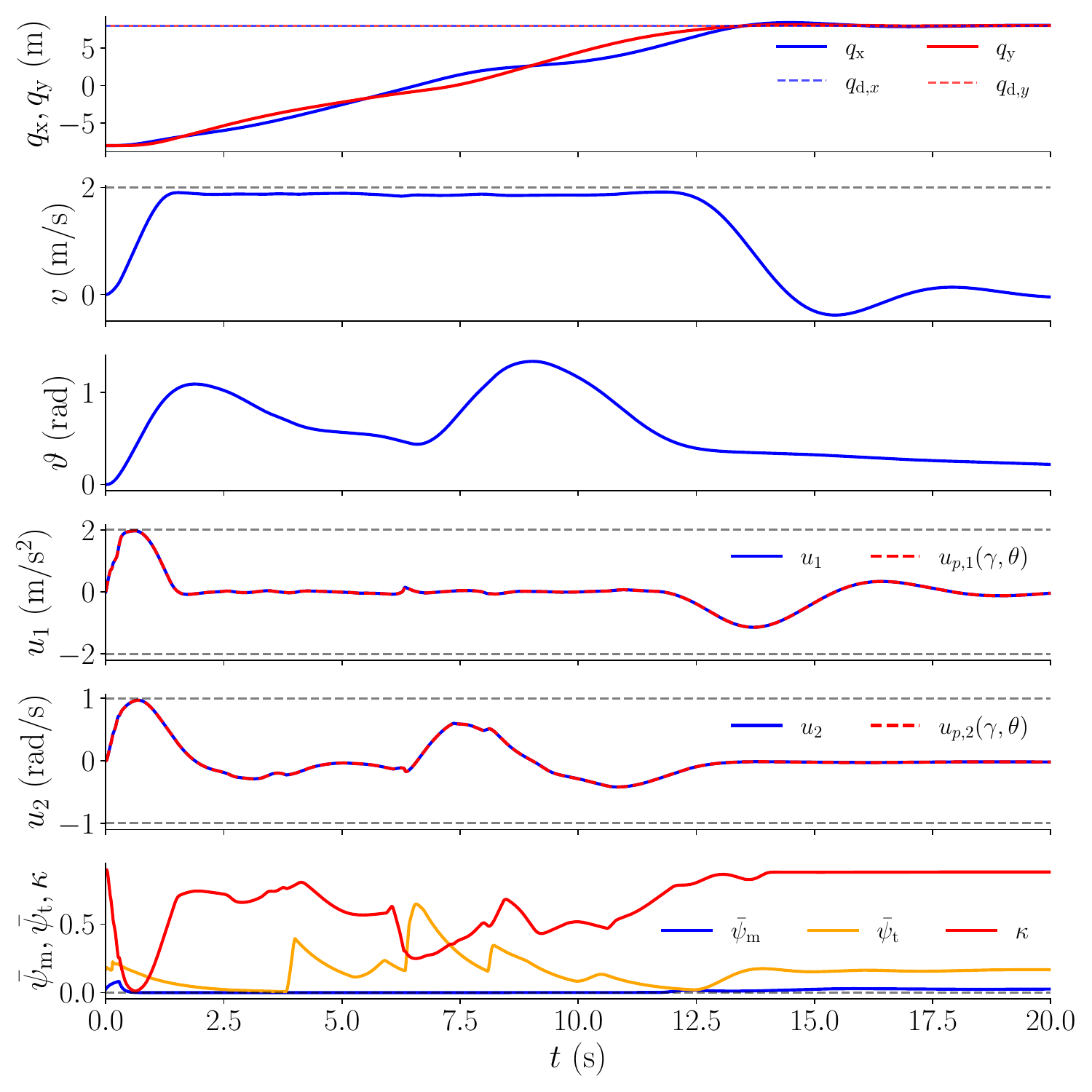}
\caption{Time histories of robot states $(q_\rmx, q_\rmy, v, \vartheta)$, control inputs $(u_1, u_2)$, and barrier functions $\bar\psi_\rmm$, $\bar\psi_\rmt$, and $k$ for $x_0 = [\,-8\; -8\; 0 \; 0 \,]^\top$ and $x_\rmd = [\,8\; 8 \; 0 \; 0\,]^\top$.}
\label{fig:robot_state}
\end{figure}

\begin{figure}[t!]
\centering
\includegraphics[width=0.48\textwidth, clip=true,trim= 0.15in 0.07in 0.1in 0.05in]{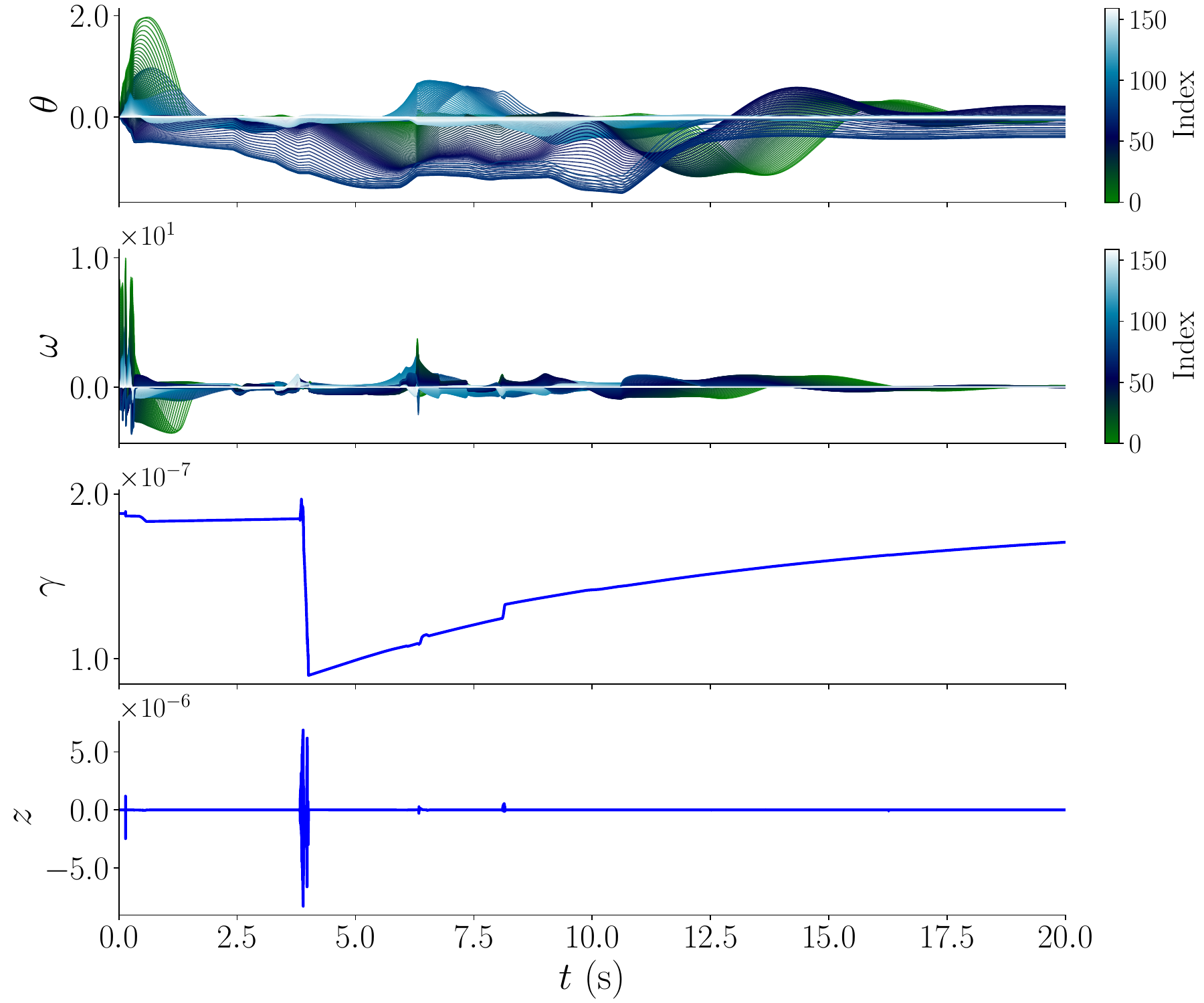}
\caption{Time histories of control parameters $\theta$, parameter rate $\omega$, planning-time shift $\gamma$, and planning-time shift rate $z$ for $x_0 = [\,-8\; -8\; 0 \; 0 \,]^\top$ and $x_\rmd = [\,8\; 8 \; 0 \; 0\,]^\top$.}
\label{fig:param}
\end{figure}

\Cref{fig:monte_carlo} highlights in black a closed-loop trajectory of FlowBarrier on a navigation task from $\bar{x}_0 = [\,-8\; -8\; 0\; 0\; 0_d\; 0\,]^\top$ to goal state $x_\rmd = [\,8\; 8 \; 0 \; 0\,]^\top$. \Cref{fig:robot_state} shows time histories of the robot states $(q_\rmx, q_\rmy, v, \vartheta)$, control inputs $(u_1, u_2)$, and barrier functions $\bar\psi_\rmm$, $\bar\psi_\rmt$, and $k$. The barrier functions remain nonnegative throughout the trajectory, confirming that $\bar x \in \bar\Psi$ for all $t \geq 0$. \Cref{fig:param} shows the evolution of control parameters $\theta$, parameter rate $\omega$, planning-time shift $\gamma$, and planning-time shift rate $z$ during safe navigation through the obstacle field.

For comparison, we present simulation results with alternative control approaches. Specifically, we compare the proposed FlowBarrier method with nonlinear model predictive control (NMPC) \cite{mayne2000}, and a constrained iterative linear quadratic regulator (CiLQR) \cite{jallet2022constrained} paired with two different safety filters: a control barrier function filter (CiLQR-CBF) \cite{xiao2021high} and a backup control barrier function filter (CiLQR-BCBF) \cite{backupautomatic}.

The NMPC approach solves a finite-horizon optimal control problem that minimizes the cost functional \eqref{eq:def:J}, where $R$ and $W$ are given by \eqref{eq:cost_functions}, subject to the state constraint $h_\rms(x) \geq 0$ along the prediction horizon, the terminal constraint $h_\rmb(x(T)) \geq 0$, and the input constraints $u \in \SU$. The prediction horizon $T$ and discretization time step $T_\rmd$ are set equal to those used in the FlowBarrier method. Since the terminal constraint enforces that the predicted terminal state lies within the forward invariant backup safe set $\SC_\rmb$, recursive feasibility of the NMPC is guaranteed.

The CiLQR-CBF approach consists of two stages. First, a constrained iterative linear quadratic regulator is employed to compute a nominal control trajectory that minimizes the cost functional \eqref{eq:def:J} with $R$ and $W$ given by \eqref{eq:cost_functions}, where the state constraint $h_\rms(x) \geq 0$ and the input constraints $u \in \SU$ are enforced via an augmented Lagrangian method. The prediction horizon $T$ and discretization time step $T_\rmd$ are set equal to those used in the FlowBarrier method. The resulting nominal control serves as the desired control input for a CBF safety filter, which solves a quadratic program that minimally modifies the desired control to enforce safety and input constraints (see \cite{xiao2021high} for details).

The CiLQR-BCBF approach employs the same constrained iterative linear quadratic regulator to generate the desired control. However, instead of the standard CBF, a backup control barrier function (BCBF) method is used as the safety filter.

The BCBF uses a backup controller that drives the robot to rest and solves a quadratic program that minimally modifies the desired control while ensuring the predicted trajectory under the backup controller remains in the safe set along the prediction horizon and satisfies a terminal safety condition, subject to input constraints $u \in \SU$ (see \cite{backupautomatic} for details).

All simulations are performed in Python on a laptop computer with an Intel Core i9-14900HX CPU and 32 GB of RAM. All methods are implemented using the \texttt{CBFJAX} framework, which is built on JAX and provides automatic differentiation and just-in-time compilation. Within the \texttt{CBFJAX} framework, the NMPC problem is solved using \texttt{do-mpc} \cite{do-mpc} with \texttt{IPOPT} \cite{ipopt}, and the CiLQR problem is solved using \texttt{trajax} \cite{trajax}. All numerical ODE integration and adjoint computations are performed using \texttt{Diffrax} \cite{diffrax}, and quadratic programming problems are solved using \texttt{JaxOpt} \cite{jaxopt} with \texttt{OSQP} \cite{osqp}.

To compare methods across diverse conditions, we conduct $100$ navigation tasks constructed from a grid of initial and goal configurations. Specifically, $10$ initial states are uniformly sampled along the line from $x_0 = [\,-8\; -8\; 0\; 0\,]^\top$ to $[\,8\; -8\; 0\; \pi\,]^\top$, and $10$ goal states are uniformly sampled along the line from $x_\rmd = [\,-8\; 8\; 0\; 0\,]^\top$ to $[\,8\; 8\; 0\; 0\,]^\top$, yielding $100$ distinct navigation tasks by pairing each initial state with each goal state. Each simulation is run for $20\,\rms$. For fair comparison, all methods employ the same running cost $R$ and terminal cost $W$ given by \eqref{eq:cost_functions}, the same prediction horizon $T = 4\, \rms$ and discretization time step $T_\rmd = 0.05\, \rms$, with the control trajectory initialized to zero. Parameters shared across all methods are set to identical values, while method-specific parameters are individually selected to achieve best performance for each approach.

Trajectories are categorized as \emph{reached} if they successfully arrive at the goal with $\|x - x_\rmd\| \leq 0.5$ within $20\,\rms$, \emph{stuck} if they fail to make progress, or \emph{failed} if they violate safety constraints. \Cref{fig:monte_carlo} shows the resulting trajectories for all methods. FlowBarrier achieves $88$ reached, $12$ stuck, and $0$ failed. NMPC achieves competitive performance with $85$ reached, $15$ stuck, and $0$ failed. CiLQR-CBF and CiLQR-BCBF demonstrate lower success rates, achieving $37$ reached, $55$ stuck, and $8$ failed, and $44$ reached, $55$ stuck, and $0$ failed, respectively.

\Cref{fig:mc_box} presents detailed statistical comparisons of time to goal $\text{TTG} \triangleq \min\{\hat t : \text{for all } t \geq \hat t, \|x(t) - x_\rmd\| \leq 0.5\}$, cumulative cost $J_{\text{cum}} \triangleq \int_0^{20} R(x(t)) \, dt$, computation time, minimum barrier over time $\min_{t \in [0, 20]} h_\rms(x(t))$, minimum barrier over prediction horizon $\min_{t \in [0, 20], \tau \in [0, T]} h_\rms(\varphi(\tau; \bar x(t)))$, and prediction violations. FlowBarrier achieves competitive time to goal and cumulative cost while maintaining the lowest computation time among all methods. The minimum barrier over time remains nonnegative for FlowBarrier, NMPC, and CiLQR-BCBF across all trials, while CiLQR-CBF exhibits $8$ safety violations due to infeasibility of the CBF quadratic program. Most critically, FlowBarrier is the only method with zero prediction violations across all $100$ trials, demonstrating formal safety guarantees not only on the executed trajectory but also on the planned trajectory.

\begin{figure}[t!]
\centering
\includegraphics[width=0.48\textwidth, clip=true, trim=0.1in 0.1in 0.25in 0.1in]{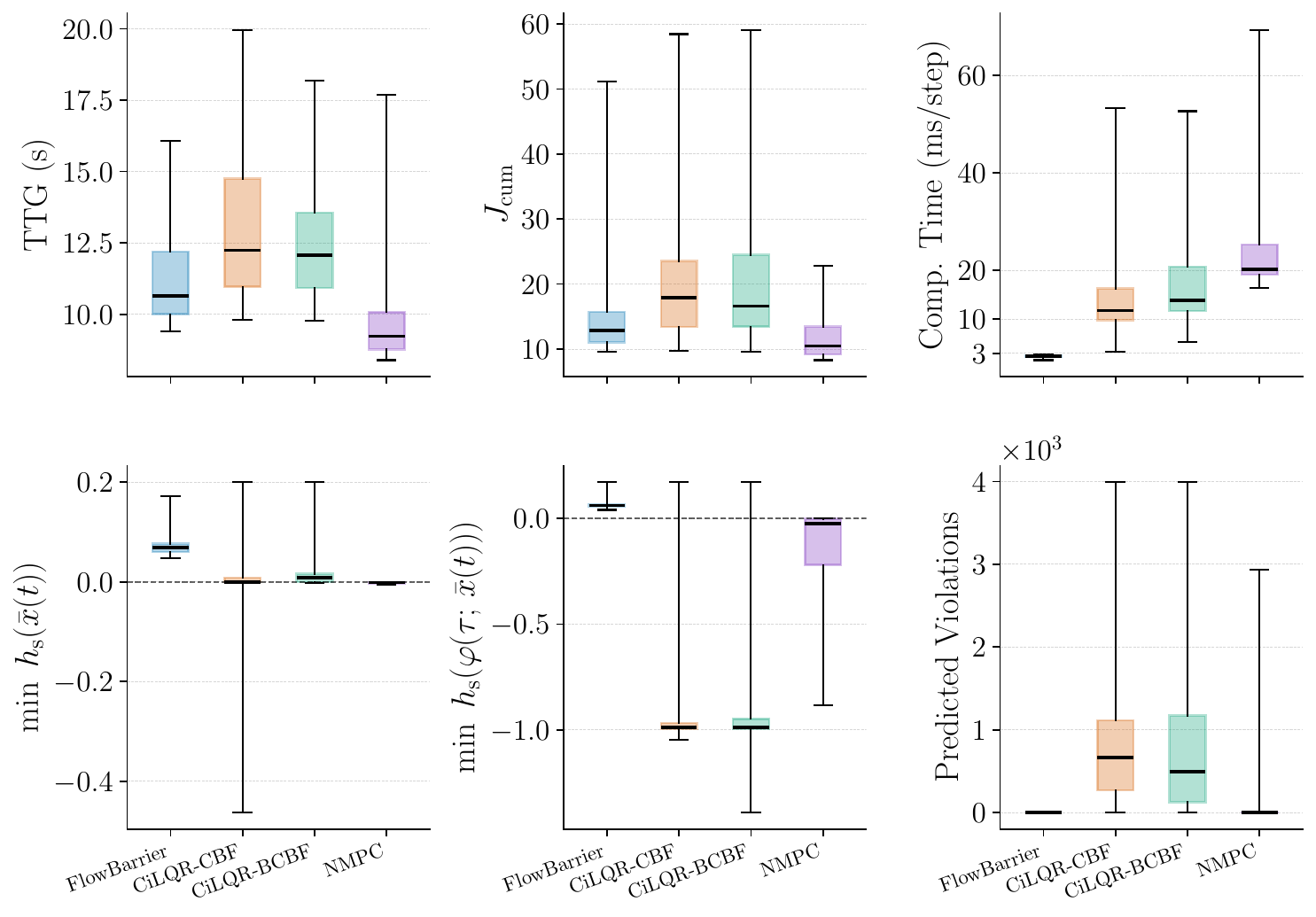}
\caption{Statistical comparison of performance metrics across 100 navigation trials for FlowBarrier, NMPC, CiLQR-CBF, and CiLQR-BCBF. Metrics include time to goal, cumulative cost, computation time, minimum barrier over time $\min_{t \in [0, 20]} h_\rms(x(t))$, minimum barrier over prediction horizon $\min_{t \in [0, 20], \tau \in [0, T]} h_\rms(\varphi(\tau; \bar x(t)))$, and prediction violations.}
\label{fig:mc_box}
\end{figure}

\appendix

\section{Proofs}\label{app:proofs}

\begin{proof}[Proof of Proposition~\ref{prop:directional_derivative_danskin}]
\indent Since $f$, $g$, and $h_\rms$ are continuously differentiable on $\BBR^n$, $u_\rmp$ is continuous on $[0,T] \times \BBR^d$, and $u_\rmp(\tau;\cdot)$ is continuously differentiable on $\BBR^d$, it follows that $h_\rms(\phi(\tau;x,\theta))$ is continuously differentiable on $[0,T] \times \BBR^n \times \BBR^d$.

To prove \ref{prop:directional_derivative_danskin.b}, let $(x_0,\theta_0) \in \BBR^n \times \BBR^d$.
Since $h_\rms(\phi(\tau;x,\theta))$ is continuously differentiable, it follows that for all $\tau \in [0,T]$, $h_\rms(\phi(\tau;x,\theta))$ is locally Lipschitz on $\BBR^n \times \BBR^d$.
Since, in addition, $[0,T]$ is compact, it follows that there exists $r,M >0$ such that for all $\tau \in [0,T]$
and all $(x_1,\theta_1),(x_2,\theta_2) \in \SB_r \triangleq \{(x,\theta) \in \BBR^n \times \BBR^d \colon \|(x,\theta) - (x_0,\theta_0)\| \leq r\}$,
\begin{equation}\label{eq:lip_hs}
|h_\rms(\phi(\tau;x_1,\theta_1)) - h_\rms(\phi(\tau;x_2,\theta_2))| \leq M \|(x_1,\theta_1) - (x_2,\theta_2)\|.
\end{equation}
Next, it follows from \cref{eq:def:psi_m,eq:lip_hs} that for all $\tau \in [0,T]$,
\begin{align}
\psi_\rmm(x_1,\theta_1) &\leq h_\rms(\phi(\tau;x_1,\theta_1)) \nn \\
&\le h_\rms(\phi(\tau;x_2,\theta_2))\nn\\ &\qquad +|h_\rms(\phi(\tau;x_1,\theta_1))-h_\rms(\phi(\tau;x_2,\theta_2))|\nn\\
&\leq h_\rms(\phi(\tau;x_2,\theta_2)) + M \|(x_1,\theta_1) - (x_2,\theta_2)\|, \nn
\end{align}
and taking the minimum over $\tau \in [0,T]$ yields $\psi_\rmm(x_1,\theta_1) \leq \psi_\rmm(x_2,\theta_2) + M \|(x_1,\theta_1) - (x_2,\theta_2)\|$.
Similarly, $\psi_\rmm(x_2,\theta_2) \leq \psi_\rmm(x_1,\theta_1) + M \|(x_1,\theta_1) - (x_2,\theta_2)\|$.
Together, these inequalities imply $|\psi_\rmm(x_1,\theta_1) - \psi_\rmm(x_2,\theta_2)| \leq M \|(x_1,\theta_1) - (x_2,\theta_2)\|$.
Thus, $\psi_\rmm$ is locally Lipschitz on $\BBR^n \times \BBR^d$, which confirms \ref{prop:directional_derivative_danskin.b}.

To prove \ref{prop:directional_derivative_danskin.c}, since $h_\rms(\phi(\tau;x,\theta))$ is continuously differentiable and $[0,T]$ is compact, it follows from Danskin's theorem \cite[Theorem~1]{danskin2012theory} that
$\psi_\rmm$ is directionally differentiable with directional derivative \eqref{eq:dini_psi_formula_danskin}.
\end{proof}

\begin{proof}[Proof of Proposition~\ref{prop:backup_controller_exists}]
\indent Since for all $x\in\SC_\rmb$, $K_\rmb(x)$ is nonempty, it follows that $h_\rmb$ is a CBF for \eqref{eq:affine_control} on $\SC_\rmb$. 
Thus, Theorem~\ref{prop:vector_dcbf_invariance} implies that $\SC_\rmb$ is forward invariant with respect to \eqref{eq:affine_control} with $u=u_\rmb$.

To prove $u_\rmb$ is continuous, let $x_0 \in \SC_\rmb$, and define $c_\rmb(x,\hat{u}) \triangleq L_f h_\rmb(x) + L_g h_\rmb(x) \hat{u} + \alpha_\rmb(h_\rmb(x))$.
Assumption~\ref{ass:backup_set} implies that there
exists $u_0 \in \SU$ such that $c_\rmb(x_0,u_0) > 0$.
Since $\SU$ is convex, $c_\rmb$ is continuous on $\SC_\rmb \times \SU$, and $c_\rmb(x,\hat{u})$ is affine in $\hat{u}$, it follows from \cite[Theorem~5.1]{borrelli2017predictive} that $K_\rmb$ is continuous
at $x_0$.
Since $u_\rmb(x) \in K_\rmb(x) \subseteq \SU$ and $\SU$ is compact, it follows that the set
of minimizers is uniformly compact near $x_0$.
Since $\|\hat{u}\|^2$ is strictly convex, it follows that $u_\rmb(x_0)$ is the unique
minimizer.
Since, in addition, $K_\rmb$ is continuous at $x_0$ and $\|\hat{u}\|^2$ is continuous on
$\SC_\rmb \times K_\rmb(x_0)$, it follows from
\cite[Theorem~5.3]{borrelli2017predictive} that $u_\rmb$ is continuous at $x_0$, which implies that $u_\rmb$ is continuous on $\SC_\rmb$.
\end{proof}

\begin{proof}[Proof of Proposition~\ref{prop:psi_lipschitz}]
\indent Since $f$, $g$, and $h_\rms$ are continuously differentiable on $\BBR^n$, and $u_\rmp$ is continuous on $[0,T] \times \BBR^d$ and, for all $\tau \in [0,T]$, $u_\rmp(\tau;\cdot)$ is continuously differentiable on $\BBR^d$, it follows that $h_\rms(\varphi(\tau;\bar{x}))$ is continuously differentiable on $\{(\tau, \bar{x}) \in \BBR \times \bar\Psi \colon \tau \in [\gamma, T]\}$.

To prove \ref{prop:psi_lipschitz_c}, let $\bar{x}_\rme \in \bar\Psi$ and $r > 0$, and define $\bar\SB_r \triangleq \{\bar{x} \in \bar\Psi \colon \|\bar{x} - \bar{x}_\rme\| \leq r\}$.
Since $h_\rms(\varphi(\tau;\bar{x}))$ is continuously differentiable on $\{(\tau, \bar{x}) \in \BBR \times \bar\SB_r \colon \tau \in [\gamma, T]\}$, and $\bar\SB_r$ is compact, it follows that there exists $M_{\bar{x}} > 0$ such that for all $\bar{x}_1, \bar{x}_2 \in \bar\SB_r$ and all $\tau \in [\max \{\gamma_1, \gamma_2\} , T]$,
\begin{equation}\label{eq:lip_xbar}
|h_\rms(\varphi(\tau;\bar{x}_1)) - h_\rms(\varphi(\tau;\bar{x}_2))|
\leq M_{\bar{x}} \|\bar{x}_1 - \bar{x}_2\|.
\end{equation}
Similarly, there exists $M_\tau > 0$ such that for all $\bar{x} \in \bar\SB_r$ and all $\tau_1, \tau_2 \in [\gamma, T]$,
\begin{equation}\label{eq:lip_tau}
|h_\rms(\varphi(\tau_1;\bar{x})) - h_\rms(\varphi(\tau_2;\bar{x}))|
\leq M_\tau |\tau_1 - \tau_2|.
\end{equation}

Let $\bar{x}_1, \bar{x}_2 \in \bar\SB_r$, where without loss of generality, $\gamma_1 \leq \gamma_2$.
Since $[\gamma_2, T] \subseteq [\gamma_1, T]$, it follows from \eqref{eq:psi_min} and \eqref{eq:lip_xbar} that for all $\tau \in [\gamma_2, T]$,
\begin{align}
\bar\psi_\rmm(\bar{x}_1) &\leq h_\rms(\varphi(\tau;\bar{x}_1)) \nn \\
&\leq h_\rms(\varphi(\tau;\bar{x}_2)) + M_{\bar{x}} \|\bar{x}_1 - \bar{x}_2\|, \nn
\end{align}
and taking the minimum over $\tau \in [\gamma_2, T]$ yields
\begin{equation}\label{eq:dir1_result}
\bar\psi_\rmm(\bar{x}_1) \leq \bar\psi_\rmm(\bar{x}_2) + M_{\bar{x}} \|\bar{x}_1 - \bar{x}_2\|.
\end{equation}

Next, let $\tau_1 \in \overline{\ST}(\bar{x}_1)$, which implies $\bar\psi_\rmm(\bar{x}_1) = h_\rms(\varphi(\tau_1;\bar{x}_1))$.  
We consider 2 cases: $\tau_1 \in [\gamma_2,T]$ and $\tau_1 \in [\gamma_1,\gamma_2)$.
First, consider $\tau_1 \in [\gamma_2,T]$, and it follows from 
\cref{eq:psi_min,eq:lip_xbar} that
\begin{align}
\bar\psi_\rmm(\bar{x}_2) &\leq h_\rms(\varphi(\tau_1;\bar{x}_2)) \nn \\
&\leq h_\rms(\varphi(\tau_1;\bar{x}_1)) + M_{\bar{x}} \|\bar{x}_1 - \bar{x}_2\| \nn \\
&= \bar\psi_\rmm(\bar{x}_1) + M_{\bar{x}} \|\bar{x}_1 - \bar{x}_2\|. \label{eq:dir2_result}
\end{align}
Next, consider $\tau_1 \in [\gamma_1,\gamma_2)$, and it follows from 
\cref{eq:psi_min,eq:lip_xbar,eq:lip_tau} that
\begin{align}
\bar\psi_\rmm(\bar{x}_2) &\leq h_\rms(\varphi(\gamma_2;\bar{x}_2)) \nn \\
&\leq h_\rms(\varphi(\gamma_2;\bar{x}_1)) + M_{\bar{x}} \|\bar{x}_1 - \bar{x}_2\| \nn \\
&\leq h_\rms(\varphi(\tau_1;\bar{x}_1)) + M_\tau | \gamma_2 - \tau_1 | + M_{\bar{x}} \|\bar{x}_1 - \bar{x}_2\| \nn \\
&\leq h_\rms(\varphi(\tau_1;\bar{x}_1)) + M_\tau | \gamma_2 - \gamma_1 | + M_{\bar{x}} \|\bar{x}_1 - \bar{x}_2\| \nn \\
&= \bar\psi_\rmm(\bar{x}_1) + M_\tau|\gamma_2 - \gamma_1| + M_{\bar{x}} \|\bar{x}_1 - \bar{x}_2\|\nn\\
&\le \bar\psi_\rmm(\bar{x}_1) + (M_\tau+ M_{\bar{x}}) \|\bar{x}_1 - \bar{x}_2\|. \label{eq:dir3_result}
\end{align}
Together, \cref{eq:dir1_result,eq:dir2_result,eq:dir3_result} imply $|\bar\psi_\rmm(\bar{x}_1) - \bar\psi_\rmm(\bar{x}_2)| \leq (M_\tau + M_{\bar{x}}) \|\bar{x}_1 - \bar{x}_2\|$.
Thus, $\bar\psi_\rmm$ is locally Lipschitz on $\bar\Psi$, which confirms \ref{prop:psi_lipschitz_c}.

To prove \ref{prop:psi_lipschitz_d}, note that for all $\bar{x} \in \{\bar{x} \in \bar\Psi \colon \gamma \neq T\}$, \eqref{eq:psi_min} is a minimization of $h_\rms(\varphi(\tau;\bar{x}))$ over $\tau \in [\gamma,T]$.
The Lagrangian associated with this minimization problem is
\begin{equation}\label{eq:lagrangian}
\mathcal{L}(\tau,\lambda_1,\lambda_2;\bar{x})
\triangleq h_\rms(\varphi(\tau;\bar{x})) - \lambda_1 c_1(\tau) - \lambda_2 c_2(\tau),
\end{equation}
where $c_1(\tau) \triangleq \tau - \gamma$, $c_2(\tau) \triangleq T - \tau$, and $\lambda_1, \lambda_2 \geq 0$.
Differentiating \eqref{eq:lagrangian} with respect to $\bar{x}$ yields
\begin{equation}\label{eq:grad_lagrangian}
\frac{\partial \mathcal{L}}{\partial \bar{x}}(\tau,\lambda_1,\lambda_2;\bar{x})
= h_\rms'(\varphi(\tau;\bar{x})) \frac{\partial \varphi}{\partial \bar{x}}(\tau;\bar{x})
+ \lambda_1 \frac{\partial \gamma}{\partial \bar{x}}.
\end{equation}
Since $\gamma < T$, the feasible set $[\gamma,T]$ is nonempty and compact, and $c_1$ and $c_2$ are not both equal to zero at any point in $[\gamma,T]$.
Since, in addition, $c_1^\prime \ne 0$ and $c_2^\prime \ne 0$ on $[\gamma,T]$, it follows that the linear independence constraint qualification is satisfied for all $\tau \in [\gamma,T]$.

Since, in addition, $h_\rms(\varphi(\tau;\bar{x}))$ is continuously differentiable on $\{(\tau, \bar{x}) \in \BBR \times \bar\Psi \colon \tau \in [\gamma, T]\}$, it follows from \cite[Corollary~4.4]{gauvin2009differential} that $\bar\psi_\rmm$ is directionally differentiable on $\{\bar{x} \in \bar\Psi \colon \gamma \neq T\}$, and for all $\bar{x} \in \{\bar{x} \in \bar\Psi \colon \gamma \neq T\}$,
\begin{equation}\label{eq:dini_from_corollary}
D_\nu\bar\psi_\rmm(\bar{x}) = \min_{\tau \in \overline{\ST}(\bar{x})}
\frac{\partial \mathcal{L}}{\partial \bar{x}}(\tau,\lambda_1^*(\tau),\lambda_2^*(\tau);\bar{x}) \, \nu,
\end{equation}
where, for all $\tau \in \overline{\ST}(\bar{x})$, $\lambda_1^*(\tau), \lambda_2^*(\tau) \geq 0$ are the unique multipliers satisfying $\lambda_1^*(\tau) \, c_1(\tau) = 0$, $\lambda_2^*(\tau) \, c_2(\tau) = 0$, and $\frac{\partial \mathcal{L}}{\partial \tau}(\tau,\lambda_1^*(\tau),\lambda_2^*(\tau);\bar{x}) = 0$.
Since, for all $\tau \in \overline{\ST}(\bar{x})$, $\frac{\partial \mathcal{L}}{\partial \tau}(\tau,\lambda_1^*(\tau),\lambda_2^*(\tau);\bar{x}) = 0$, it follows that
\begin{equation}\label{eq:lambda1_from_stationarity}
\lambda_1^*(\tau) = \lambda_2^*(\tau)
+ h_\rms'(\varphi(\tau;\bar{x})) \frac{\partial \varphi}{\partial \tau}(\tau;\bar{x}).
\end{equation}
We consider 2 cases: $\tau = \gamma$ and $\tau \in (\gamma, T]$.
First, consider $\tau = \gamma$. 
Since $c_2(\tau) > 0$, it follows that $\lambda_2^*(\tau) = 0$, and \eqref{eq:lambda1_from_stationarity} yields $\lambda_1^*(\tau) = h_\rms'(\varphi(\tau;\bar{x})) \frac{\partial \varphi}{\partial \tau}(\tau;\bar{x})$.
Next, consider $\tau \in (\gamma, T]$.
Since $c_1(\tau) > 0$, it follows that $\lambda_1^*(\tau) = 0$.
Hence, for all $\tau \in \overline{\ST}(\bar{x})$,
\begin{equation}\label{eq:lambda1_indicator}
\lambda_1^*(\tau) = \mathbf{1}_{\{\tau = \gamma\}}
h_\rms'(\varphi(\tau;\bar{x})) \frac{\partial \varphi}{\partial \tau}(\tau;\bar{x}).
\end{equation}
Substituting \cref{eq:lambda1_indicator,eq:grad_lagrangian} into \eqref{eq:dini_from_corollary} yields \eqref{eq:dini_psi_formula}.
\end{proof}

\bibliographystyle{ieeetr}
\bibliography{SafeFlow} 
\end{document}